\algnewcommand{\Inputs}[1]{%
  \State \textbf{Inputs:}
  \Statex \hspace*{\algorithmicindent}\parbox[t]{.8\linewidth}{\raggedright #1}
}
\algnewcommand{\Initialize}[1]{%
  \State \textbf{Initialize:}
  \Statex \hspace*{\algorithmicindent}\parbox[t]{.8\linewidth}{\raggedright #1}
}
\algnewcommand{\Outputs}[1]{%
  \State \textbf{Outputs:}
  \Statex \hspace*{\algorithmicindent}\parbox[t]{.8\linewidth}{\raggedright #1}
}
\newtheorem{theorem}{Theorem}
\crefname{theorem}{theorem}{Theorems}
\Crefname{Theorem}{Theorem}{Theorems}
\newtheorem*{lemma_nonumber*}{Lemma}
\newaliascnt{lemma}{theorem}
\newtheorem{lemma}[lemma]{Lemma}
\crefname{lemma}{lemma}{lemmas}
\Crefname{Lemma}{Lemma}{Lemmas}
\newaliascnt{corollary}{theorem}
\crefname{corollary}{corollary}{corollaries}
\Crefname{Corollary}{Corollary}{Corollaries}
\newaliascnt{proposition}{theorem}
\newtheorem{proposition}[proposition]{Proposition}
\crefname{proposition}{proposition}{propositions}
\Crefname{Proposition}{Proposition}{Propositions}
\newaliascnt{definition}{theorem}
\crefname{definition}{definition}{definitions}
\Crefname{Definition}{Definition}{Definitions}
\newaliascnt{remark}{theorem}
\crefname{remark}{remark}{remarks}
\Crefname{Remark}{Remark}{Remarks}
\crefname{example}{example}{examples}
\Crefname{Example}{Example}{Examples}
\crefname{figure}{figure}{figures}
\Crefname{Figure}{Figure}{Figures}
\newtheorem{assumption}{\textbf{A}\hspace{-3pt}}
\newtheorem{assumptionH}{\textbf{H}\hspace{-3pt}}
\Crefname{assumptionH}{\textbf{H}\hspace{-3pt}}{\textbf{H}\hspace{-3pt}}
\crefname{assumptionH}{\textbf{H}}{\textbf{H}}
\newtheorem{assumptionL}{\textbf{L}\hspace{-3pt}}
\Crefname{assumptionL}{\textbf{L}\hspace{-3pt}}{\textbf{L}\hspace{-3pt}}
\crefname{assumptionL}{\textbf{L}}{\textbf{L}}
\Crefname{assumptionQ}{\textbf{Q}\hspace{-3pt}}{\textbf{Q}\hspace{-3pt}}
\crefname{assumptionQ}{\textbf{Q}}{\textbf{Q}}
\newcommand{\tetaa}[1]{\teta_{#1}^{(a)}}
\newcommand{\tetab}[1]{\teta_{#1}^{(b)}}
\newcommand{\tetac}[1]{\teta_{#1}^{(c)}}
\newcommand{\tetad}[1]{\teta_{#1}^{(d)}}
\newcommand{\mttdeux}{\mathtt{m}_1}
\newcommand{\tmttdeux}{\tilde{\mtt}_1}
\newcommand{\mtttrois}{\mathtt{m}_2}
\newcommand{\Rdeux}{R_1}
\newcommand{\tE}{\tilde{E}}
\newcommand{\expo}{a}
\newcommand{\bartheta}{\hat{\theta}}
\newcommand{\Rtheta}{M_{\Theta}}
\newcommand{\tup}[1]{\textup{#1}}
\newcommand{\Kker}{\mathrm{K}}
\newcommand{\Rker}{\mathrm{R}}
\newcommand{\Pker}{\mathrm{P}}
\newcommand{\LUx}{\mathtt{L}}
\newcommand{\X}{\tilde{X}}
\newcommand{\F}{\tilde{\mathcal{F}}}
\newcommand{\poiss}[1]{\hat{H}_{\gamma_{#1},\theta_{#1}}}
\newcommand{\tpoiss}[1]{\hat{H}_{\gamma_{#1},\ttheta_{#1}}}
\newcommand{\tkernel}[1]{\Kker_{\gamma_{#1}, \ttheta_{#1}}}
\newcommand{\poissp}[2]{\hat{H}_{#1,#2}}
\newcommand{\Rgt}{\Rker_{\gamma, \theta}}
\def\y{{\boldsymbol y}}
\def\L{L_f}
\newcommandx{\norm}[2][1=]{\ifthenelse{\equal{#1}{}}{\left\Vert #2 \right\Vert}{\left\Vert #2 \right\Vert^{#1}}}
\newcommandx{\normLigne}[2][1=]{\ifthenelse{\equal{#1}{}}{\Vert #2 \Vert}{\Vert #2\Vert^{#1}}}
\def\bfn{\mathbf{n}}
\def\bfDd{\mathbf{D}_{\mathrm{d}}}
\def\bfDc{\mathbf{D}_{\mathrm{c}}}
\def\Psibf{\boldsymbol{\Psi}}
\def\Xibf{\boldsymbol{\Xi}}
\def\Lambdabf{\boldsymbol{\Lambda}}
\def\mtt{\mathtt{m}}
\def\msa{\mathsf{A}}
\def\msu{\mathsf{U}}
\def\msy{\mathsf{Y}}
\def\mcbb{\mathcal{B}}  
\newcommand{\mcb}[1]{\mathcal{B}(#1)}
\def\mcf{\mathcal{F}}
\def\rset{\mathbb{R}}
\def\nset{\mathbb{N}}
\def\nsets{\mathbb{N}^*}
\def\Rset{\mathbb{R}}
\def\rmd{\mathrm{d}}
\def\rme{\mathrm{e}}
\def\mrc{\mathrm{C}}
\def\rmc{\mathrm{C}}
\newcommand{\R}{\mathbb R}
\newcommand{\A}{\mathcal A}
\newcommandx{\functionspace}[2][1=+]{\mathbb{F}_{#1}(#2)}
\newcommand{\argmax}{\operatorname*{arg\,max}}
\newcommand{\argmin}{\operatorname*{arg\,min}}
\newcommandx{\VarDeux}[3][3=]{\operatorname{Var}^{#3}_{#1}\left\{#2 \right\}}
\newcommand{\1}{\mathbbm{1}}
\newcommand{\Unbf}{\boldsymbol{1}}
\newcommand{\LeftEqNo}{\let\veqno\@@leqno}
\newcommand{\ceil}[1]{\left\lceil #1 \right\rceil}
\newcommand{\N}{\ensuremath{\mathbb{N}}}
\newcommand{\PE}{\mathbb{E}}
\newcommand{\PP}{\mathbb{P}}
\newcommand{\abs}[1]{\left\vert #1 \right\vert}
\newcommand{\tvnorm}[1]{\| #1 \|_{\mathrm{TV}}}
\newcommandx{\Vnorm}[2][1=V]{\| #2 \|_{#1}}
\newcommandx{\VnormEq}[2][1=V]{\left\| #2 \right\|_{#1}}
\newcommand{\parenthese}[1]{\left(#1 \right)}
\newcommand{\parentheseLigne}[1]{(#1 )}
\newcommand{\parentheseDeux}[1]{\left[ #1 \right]}
\newcommand{\defEns}[1]{\left\lbrace #1 \right\rbrace }
\newcommand{\ps}[2]{\left\langle#1,#2 \right\rangle}
\newcommand{\proba}[1]{\mathbb{P}\left( #1 \right)}
\newcommand{\probaLigne}[1]{\mathbb{P}( #1 )}
\newcommandx\probaMarkovTilde[2][2=]
\newcommand{\expe}[1]{\PE \left[ #1 \right]}
\newcommand{\expeLigne}[1]{\PE [ #1 ]}
\newcommand{\bigO}{\ensuremath{\mathcal O}}
\newcommand{\plusinfty}{+\infty}
\newcommand\numberthis{\addtocounter{equation}{1}\tag{\theequation}}
\def\ie{\textit{i.e.}}
\def\eqsp{\;}
\newcommand{\coint}[1]{\left[#1\right)}
\newcommand{\ocint}[1]{\left(#1\right]}
\newcommand{\ooint}[1]{\left(#1\right)}
\newcommand{\ccint}[1]{\left[#1\right]}
\renewcommand{\iint}[2]{\left\lbrace #1,\ldots,#2\right\rbrace}
\newcommandx{\weight}[2][2=n]{\omega_{#1,#2}^N}
\newcommand{\boule}[2]{\operatorname{B}(#1,#2)}
\newcommand{\ball}[2]{\operatorname{B}(#1,#2)}
\newcommand{\boulefermee}[2]{\overline{\mathrm{B}}(#1,#2)}
\newcommand{\cball}[2]{\boulefermee{#1}{#2}}
\def\as{almost surely}
\newcommandx\sequence[3][2=,3=]
\newcommandx\sequenceD[3][2=,3=]
\newcommandx{\sequencen}[2][2=n\in\N]{\ensuremath{\{ #1_n, \eqsp #2 \}}}
\newcommandx\sequenceDouble[4][3=,4=]
\newcommandx{\sequencenDouble}[3][3=n\in\N]{\ensuremath{\{ (#1_{n},#2_{n}), \eqsp #3 \}}}
\newcommand{\wrt}{with respect to}
\def\iid{i.i.d.}
\def\eg{e.g.}
\newcommand{\opnorm}[1]{{\left\vert\kern-0.25ex\left\vert\kern-0.25ex\left\vert #1 
    \right\vert\kern-0.25ex\right\vert\kern-0.25ex\right\vert}}
\def\generator{\mathcal{A}}
\def\Id{\operatorname{Id}}
\def\Idd{\operatorname{I}}
\newcommandx{\CPE}[3][1=]{{\mathbb E}_{#1}\left[\left. #2 \middle \vert #3 \right. \right]} 
\newcommandx{\CPELigne}[3][1=]{{\mathbb E}_{#1}[ #2 \vert #3  ]} 
\newcommandx{\CPVar}[3][1=]{\mathrm{Var}^{#3}_{#1}\left\{ #2 \right\}}
\newcommand{\CPP}[3][]
{\ifthenelse{\equal{#1}{}}{{\mathbb P}\left(\left. #2 \, \right| #3 \right)}{{\mathbb P}_{#1}\left(\left. #2 \, \right | #3 \right)}}
\newcommandx{\osc}[2][1=]{\mathrm{osc}_{#1}(#2)}
\def\Id{\operatorname{Id}}
\def\domain{\mathrm{D}}
\def\transpose{\operatorname{T}}
\def\y{y}
\def\bgamma{\bar{\gamma}}
\def\tr{\tilde{r}}
\def\tx{\tilde{x}}
\def\tX{\tilde{X}}
\def\tE{\tilde{E}}
\def\teta{\tilde{\eta}}
\newcommand{\ttheta}{\tilde{\theta}}
\def\sign{\operatorname{sign}}
\newcommand{\ensemble}[2]{\left\{#1\,:\eqsp #2\right\}}
\def\mrc{\mathrm{C}}
\newcommand\coupling[2]{\Gamma(\mu,\nu)}
\newcommand{\complementary}{\mathrm{c}}
\def\interior{\mathrm{int}}
\def\vareps{\varepsilon}
\def\varespilon{\varepsilon}
\newcommandx{\KL}[2]{\text{KL}\left( #1 | #2 \right)}
\def\vol{\operatorname{Vol}}
\newcommand*\Let[1]{\State #1 }
\newcommand\fraca[2]{(#1)/(#2)}
\def\ccur{\mathtt{c}}
\def\Ve{V_{\rme}}
\def\Ber{\mathrm{Ber}}
\def\loiGauss{\mathrm{N}}
\def\train{\mathrm{train}}
\def\test{\mathrm{test}}
\def\measMat{\mathbf{M}}
\def\Psimatrix{\boldsymbol{\Psi}}
\def\MAP{\mathrm{MAP}}
\def\true{\mathrm{true}}
\title{Efficient stochastic optimisation by unadjusted Langevin Monte Carlo. Application to maximum marginal likelihood and empirical Bayesian estimation.}
\author[1]{Valentin De Bortoli \footnote{Email: valentin.debortoli@cmla.ens-cachan.fr}}
\author[1]{Alain Durmus \footnote{Email: alain.durmus@cmla.ens-cachan.fr} }
\author[2]{Marcelo Pereyra \footnote{Email: m.pereyra@hw.ac.uk} }
\author[2]{Ana Fernandez Vidal \footnote{Email: af69@hw.ac.uk}}
\affil[1]{CMLA - \'Ecole normale supérieure Paris-Saclay, CNRS, Université Paris-Saclay, 94235 Cachan, France.}
\affil[2]{School of Mathematical and Computer Sciences, Heriot Watt University \& Maxwell Institute for Mathematical Sciences.}
\begin{document}

\maketitle

\begin{abstract}
	Stochastic approximation methods play a central role in maximum likelihood estimation problems involving intractable likelihood functions, such as marginal likelihoods arising in problems with missing or incomplete data, and in parametric empirical Bayesian estimation. Combined with Markov chain Monte Carlo algorithms, these stochastic optimisation methods have been successfully applied to a wide range of problems in science and industry. However, this strategy scales poorly to large problems because of methodological and theoretical difficulties related to using high-dimensional Markov chain Monte Carlo algorithms within a stochastic approximation scheme. This paper proposes to address these difficulties by using unadjusted Langevin algorithms to construct the stochastic approximation. This leads to a highly efficient stochastic optimisation methodology with favourable convergence properties that can be quantified explicitly and easily checked.
The proposed methodology is demonstrated with three experiments, including a challenging application to high-dimensional statistical audio analysis and a sparse Bayesian logistic regression with random effects problem.   
\end{abstract}


\section{Introduction}
Maximum likelihood estimation (MLE) is central to modern statistical science. It is a cornerstone of frequentist inference \citep{Casella2002}, and also plays a fundamental role in parametric empirical Bayesian inference \citep{carlin2000empirical, casella1985introduction}. For simple statistical models, MLE can be performed analytically and exactly. However, for most models, it requires using numerical computation methods, particularly optimisation schemes that iteratively seek to maximise the likelihood function and deliver an approximate solution. Following decades of active research in computational statistics and optimisation, there are now several computationally efficient methods to perform MLE in a wide range of classes of models \citep{gentle2012handbook, boyd2004convex}. 

In this paper we consider MLE in models involving incomplete or ``missing'' data, such as hidden, latent or unobserved variables, and focus on Expectation Maximisation (EM) optimisation methods \citep{Dempster1977}, which are the predominant strategy in this setting . While the original EM optimisation methodology involved deterministic steps, modern EM methods are mainly stochastic \citep{Robert}. In particular, they typically rely on a Robbins-Monro stochastic approximation (SA) scheme that uses a Monte Carlo stochastic simulation algorithm to approximate the gradients that drive the optimisation procedure \citep{robbins1951stochastic, delyon1999, kushner2003stochastic, fort2011convergence}. In many cases, SA methods use Markov chain Monte Carlo (MCMC) algorithms, leading to a powerful general methodology which is simple to implement, has a detailed convergence theory \citep{atchade2017perturbed}, and can address a wide range of moderately low-dimensional models. Alternatively, some stochastic EM schemes use a Gibbs sampling algorithm \citep{Casella2001}, however this requires running several fully converged MCMC chains and can be significantly more computationally expensive as a result.



The expectations and demands on SA methods constantly rise as we seek to address larger problems and provide stronger theoretical guarantees on the solutions delivered. Unfortunately, existing SA methodology and theory do not scale well to large problems. The reasons are twofold. First, the family of MCMC kernels driving the SA scheme needs to satisfy uniform geometric ergodicity conditions that are usually difficult to verify for high-dimensional MCMC kernels. Second, the existing theory requires using asymptotically exact MCMC methods. In practice, these are usually high-dimensional Metropolis-Hastings methods such as the Metropolis-adjusted Langevin algorithm \citep{roberts:tweedie:1996} or Hamiltonian Monte Carlo \citep{Girolami2011,durmus2017convergence}, which are difficult to calibrate within the SA scheme to achieve a prescribed acceptance rate. For these reasons, practitioners rarely use SA schemes in high-dimensional settings.

In this paper, we propose to address these limitations by using inexact MCMC methods to drive the SA scheme, particularly unadjusted Langenvin algorithms, which have easily verifiable geometric ergodicity conditions, and are easy to calibrate \citep{durmus2017unadjusted, dalalyan2017theoretical}. This will allow us to design a high-dimensional stochastic optimisation scheme with favourable convergence properties that can be quantified explicitly and easily checked.

Our contributions are structured as follows: Section 2 formalises the class of MLE problems considered and presents the proposed stochastic optimisation method, which is based on a SA approach driven by an unadjusted Langevin algorithm. Section 3 presents three numerical experiments that demonstrate the proposed methodology in a variety of scenarios. Detailed theoretical convergence results for the method are reported in Section 4, which also describes a generalisation of the proposed methodology and theory to other inexact Markov kernels. The online supplementary material includes additional theoretical results and some details on computational aspects.


\section{The stochastic optimisation via unadjusted Langevin method}
\label{sec:sto_optim_Langevin}
The proposed Stochastic Optimisation via Unadjusted Langevin (SOUL) method is useful for solving maximum likelihood estimation problems involving intractable likelihood functions. The method is a SA iterative scheme that is driven by an unadjusted Langevin MCMC algorithm. Langevin algorithms are very efficient in high dimensions and lead to an SA scheme that inherits their favourable convergence properties.

\subsection{Maximum marginal likelihood estimation}
Let $\Theta$ be a convex closed set in $\rset^{d_{\Theta}}$. The proposed optimisation method is well-suited for solving maximum likelihood estimation problems of the form
\begin{equation}\label{eq:def_theta_star}
\begin{split}
  \theta^{\star} \in \argmax_{\theta \in \Theta}\,\, \log p(y|\theta) - g(\theta)\, ,
  \end{split}
\end{equation}
where the parameter of interest $\theta$ is related to the observed data $y \in \msy$ by a likelihood function $p(y,x|\theta)$ involving an unknown quantity $x \in \mathbb{R}^d$, which is removed from the model by marginalisation. More precisely, we consider problems where the resulting marginal likelihood
$$
p(y|\theta) = \int_{\rset^d}  p(y,x|\theta)\rmd x \, ,
$$
is computationally intractable, and focus on models where the dimension of $x$ is large, making the computation of \eqref{eq:def_theta_star} even more difficult. For completeness, we allow the use of a penalty function $g : \Theta \rightarrow \mathbb{R}$, or set $g=0$ to recover the standard maximum likelihood estimator. 

As mentioned previously, the maximum marginal likelihood estimation problem \eqref{eq:def_theta_star} arises in problems involving latent or hidden variables \citep{Dempster1977}. It is also central to parametric empirical Bayes approaches that base their inferences on the pseudo-posterior distribution $p(x|y,\theta^{\star}) = {p(y,x|\theta^{\star})}/{p(y|\theta^{\star})}$ \citep{carlin2000empirical}. Moreover, the same optimisation problem also arises in hierarchical Bayesian maximum-a-posteriori estimation of $\theta$ given $y$, with marginal posterior $p(\theta|y)\propto p(y|\theta)p(\theta)$ where $p(\theta)$ denotes the prior for $\theta$; in that case $g(\theta) = -\log p(\theta)$ \citep{Casella2002}.

Finally, in this paper we assume that $\log p(y,x|\theta)$ is continuously differentiable \wrt~$x$ and $\theta$, and that $g$ is also continuously differentiable \wrt~$\theta$
. A generalisation of the proposed methodology to non-smooth models is presented in a forthcoming paper \citep{vidal:debortoli:pereyra:durmus:2019} that focuses on non-smooth statistical imaging models.

\subsection{Stochastic approximation methods}
\label{sec:eb-inference-based}

The scheme we propose to solve the optimisation problem \eqref{eq:def_theta_star} is derived in the SA framework \citep{delyon1999}, which we recall below.

Starting from any $\theta_0 \in \Theta$, SA schemes seek to solve \eqref{eq:def_theta_star} iteratively by computing a sequence $(\theta_n)_{n \in \nset}$ associated with the recursion
\begin{equation}
  \label{eq:theta_it}
\theta_{n+1} = \Pi_{\Theta}[\theta_n + \delta_{n+1}(\Delta_{\theta_n} - \nabla g(\theta_n))] \eqsp ,
\end{equation}
where $\Delta_{\theta_n}$ is some estimator of the intractable gradient $\theta \mapsto \nabla_{\theta}\log p(y|\theta)$ at $\theta_n$, $\Pi_{\Theta}$ denotes the projection onto $\Theta$, and $(\delta_n)_{n \in \nsets} \in (\rset_+^*)^{\nsets}$ is a sequence of stepsizes. From an optimisation viewpoint, iteration \eqref{eq:theta_it} is a stochastic generalisation of the projected gradient ascent iteration \citep{boyd2004convex} for models with intractable gradients. For $n \in \nset$, Monte Carlo estimators $\Delta_{\theta_n}$ for $\nabla_{\theta}\log p(y|\theta)$ at $\theta_n$ are derived from the identity
\begin{align}
  \nabla_{\theta} \log p(y|\theta) &= \int_{\rset^d} \frac{\nabla_{\theta} p(x,y|\theta)}{p(x,y|\theta)} p(x|y,\theta) \rmd x\\
  \label{eq:def_nabla_theta}
&= \int_{\rset^d} \nabla_{\theta} \log p(x,y|\theta)p(x|y,\theta) \rmd x  \eqsp,
\end{align}
which suggests to consider 
\begin{equation}
  \label{eq:delta_n}
  \Delta_{\theta_n} = \frac{1}{m_n} \sum_{k=1}^{m_n} \nabla_{\theta} \log p(X_k^n,y |\theta_n)\eqsp ,
\end{equation}
where $(m_n)_{n \in \N}\in \left(\nsets\right)^{\N}$ is a sequence of batch sizes and $(X_k^n)_{k \in \lbrace 1, \dots, m_n \rbrace}$ is either an exact Monte Carlo sample from $p(x|y,\theta_n) = p(x,y|\theta_n)/p(y|\theta_n)$, or a sample generated by using a Markov Chain targeting this distribution.

Given a sequence $(\theta_n)_{n = 1}^N$ generated by using \eqref{eq:theta_it}, an approximate solution of \eqref{eq:def_theta_star} can then be obtained by calculating, for example, the average of the iterates, i.e.,
\begin{equation}
  \hat{\theta}_N = \left. \defEns{\sum_{n=1}^N \delta_n \theta_n}\middle/ \defEns{\sum_{n=1}^N \delta_n} \right. \eqsp . \label{eq:theta_avg}
\end{equation}
This estimate converges \as \ to a solution of \eqref{eq:def_theta_star} as $N \rightarrow \infty$ provided that some conditions on $p(y|\theta)$, $g$, $p(x|y,\theta)$, $(\delta_n)_{n \in \N}$, and $\Delta_{\theta_n}$ are fulfilled. Indeed, following three decades of active research efforts in computational statistics and applied probability,  we now have a good understanding of how to construct efficient SA schemes, and the conditions under which these schemes converge (see for example  \cite{benveniste:metivier:priouret:1990,fort:moulines:2003,duchi:hazan:singer:2011,andrieu:moulines:2006,nemirovski:juditsky:lan:shapiro:2008,atchade2017perturbed}).

SA schemes are successfully applied to maximum marginal likelihood estimation problems where the latent variable $x$ has a low or moderately low dimension. However, they are seldomly used them when $x$ is high-dimensional because this usually requires using high-dimensional MCMC samplers that, unless carefully calibrated, exhibit poor convergence properties. Unfortunately, calibrating the samplers within a SA scheme is challenging because the target density $p(x|y,\theta_n)$ changes at each iteration. As a result, it is, for example, difficult to use Metropolis-Hastings algorithms that need to achieve a prescribed acceptance probability range. Additionally, the conditions for convergence of MCMC SA schemes are often difficult to verify for high-dimensional samplers. For these reasons, practitioners rarely use SA schemes in high-dimensional settings.

As mentioned previously, we propose to address these difficulties by using modern inexact Langevin MCMC samplers to drive \eqref{eq:delta_n}. These samplers have received a lot of attention in the late because they can exhibit excellent large-scale convergence properties and significantly outperform their Metropolised counterparts (see \cite{durmus2016efficient} for an extensive comparison in the context of Bayesian imaging models). Stimulated by developments in high-dimensional statistics and machine learning, we now have detailed theory for these algorithms, including explicit and easily verifiable geometric ergodicity conditions \citep{durmus2017unadjusted, dalalyan2017theoretical,eberle2018quantitative,debortoli2018back}. This will allow us to design a stochastic optimisation scheme with favourable convergence properties that can be quantified explicitly and easily checked.

\subsection{Langevin Markov chain Monte Carlo methods}
Langevin MCMC schemes to sample from $p(x|y,\theta)$ are based on stochastic continuous dynamics $(\boldsymbol{X}_t^{\theta})_{t \geq 0}$ for which the target distribution $p(x|y,\theta)$ is invariant. Two fundamental examples are the Langevin dynamics solution of the following Stochastic Differential Equation (SDE)
\begin{equation}
  \label{eq:langevin}
  \rmd \boldsymbol{X}_t^{\theta}  = -\nabla_x \log p(\boldsymbol{X}_t^{\theta}|y,\theta)\rmd t + \sqrt{2} \rmd \boldsymbol{B}_t \eqsp ,
\end{equation}
or the kinetic Langevin dynamics solution of
\begin{equation}
  \label{eq:langevin_kin}
    \rmd \boldsymbol{X}_{t}^{\theta} = \boldsymbol{V}_t^{\theta} \eqsp , \qquad
    \rmd \boldsymbol{V}_t^{\theta} = -\nabla_x \log p(\boldsymbol{X}_t^{\theta}|y,\theta) \rmd t - \boldsymbol{V}_t^{\theta}\rmd t + \sqrt{2} \rmd \boldsymbol{B}_t \eqsp ,
\end{equation}
where $(\boldsymbol{B}_t)_{t \geq 0}$ is a standard $d$-dimensional Brownian motion. Under mild assumptions on $p(x|y,\theta)$, these two SDEs admit strong solutions for which $p(x|y,\theta)$ and $p(x,v|y,\theta) = p(x|y,\theta)\exp(-\norm[2]{v}/2)/(2\uppi)^{d/2}$ are the invariant probability measures. In addition, there are detailed explicit convergence results for $(\boldsymbol{X}_t^{\theta})_{t \geq 0}$ to $p(x|y,\theta)$, and for $(\boldsymbol{X}_t^{\theta},\boldsymbol{V}_t^{\theta})_{t \geq 0}$ to $p(x,v|y,\theta)$, under different metrics \citep{eberle2016reflection,eberle:guillin:zimmer:2017b}.

However, sampling path solutions for these continuous-time dynamics is  not feasible in general. Therefore discretizations have to be used instead. In this paper, we mainly focus on the Euler-Maruyama discrete-time approximation of \eqref{eq:langevin}, known as the Unadjusted Langevin Algorithm (ULA) \citep{roberts:tweedie:1996}, given by
\begin{equation}
  \label{eq:euler_maruyama_langevin}
  X_{k+1} = X_k - \gamma \nabla_{x}\log p(X_k|y,\theta) + \sqrt{2\gamma} Z_{k+1} \eqsp ,
\end{equation}
where $\gamma> 0$ is the discretization time step and $(Z_k)_{k \in \N^*}$ is a \iid~sequence of
$d$-dimensional zero-mean Gaussian random variables with covariance matrix identity. We will use this Markov kernel to drive our SA schemes. 

Observe that \eqref{eq:euler_maruyama_langevin} does not exactly target $p(x|y,\theta)$ because of the bias introduced by the discrete-time approximation. Computational statistical methods have traditionally addressed this issue by complementing \eqref{eq:euler_maruyama_langevin} with a Metropolis-Hastings correction step to asymptotically remove the bias \citep{roberts:tweedie:1996}. This correction usually deteriorates the convergence properties of the chain and may lead to poor non-asymptotic estimation results, particularly in very high-dimensional settings (see for example \cite{durmus2016efficient}). However, until recently it was considered that using \eqref{eq:euler_maruyama_langevin} without a correction step was too risky. Fortunately, recent works have established detailed theoretical guarantees for \eqref{eq:euler_maruyama_langevin} that do not require using any correction \citep{dalalyan2017theoretical,durmus2017unadjusted}. A main contribution of this work is to extend these guarantees to SA schemes that are driven by these highly efficient but inexact samplers.

\subsection{The SOUL algorithm}
\label{sec:soul-algorithm}
We are now ready to present the proposed Stochastic Optimization via Unadjusted Langevin (SOUL) methodology. Let $(\delta_n)_{n \in \nsets} \in (\rset_+^*)^{\nsets}$ and $(m_n)_{n \in \N} \in \left(\nsets\right)^{\N}$ be the sequences of stepsizes and batch sizes defining the SA scheme \eqref{eq:theta_it}-\eqref{eq:delta_n}. For any $\theta \in \Theta$ and $\gamma >0$, denote  by $\Rker_{\gamma, \theta}$ the Langevin Markov kernel  \eqref{eq:euler_maruyama_langevin} to approximately sample from $p(x|y,\theta)$, and by $(\gamma_n)_{n \in \N} \in (\rset_+^*)^{\nset}$ be the sequence of discrete time steps used.

Formally, starting from some $X_0^0 \in \rset^d$ and $\theta_0 \in \Theta$, for $n \in \nset$ and $k \in \lbrace 0, \dots, m_n -1 \rbrace$, we recursively define $(\lbrace X_k^n: k \in \lbrace 0, \dots, m_n \rbrace \rbrace, \theta_n)_{n \in \N}$ on a probability space $(\Omega,\mcf,\mathbb{P})$, where $(X_{k}^n)_{ k \in \{0,\ldots,m_n\}}$ is a Markov chain with Markov kernel $\Rker_{\gamma_n, \theta_n}$, $X_0^n = X_{m_{n-1}}^{n-1}$ given $\mathcal{F}_{n-1}$, and 
	$$
	 \theta_{n+1} = \Pi_{\Theta}\parentheseDeux{\theta_n - \frac{\delta_{n+1}}{m_{n}} \sum_{k=1}^{m_{n}} \Delta_{\theta_n}(X_k^n)} \eqsp,
	 $$
where we recall that $\Pi_{\Theta}$ is the projection onto $\Theta$, and   for all $n \in \nset$
\begin{equation}
 \label{eq:def_F_n}
   \mathcal{F}_n = \sigma \left( \theta_0, \{(X_k^{\ell})_{k \in \{0,\ldots,m_\ell\}} \, : \, \ell \in \iint{0}{n}\} \right) \eqsp ,  \qquad \mcf_{-1} = \sigma(\theta_0)
 \end{equation}
 Note that such a construction is always possible by Kolmogorov extension theorem \cite[Theorem 5.16]{kallenberg2006foundations}, hence for any $n \in \nset$, $\theta_{n+1}$ is $\mcf_n$-measurable. Then, as mentioned previously, we compute a sequence of approximate solutions of \eqref{eq:def_theta_star} by calculating, for example,
\begin{equation}
 \hat{\theta}_N = \left. \defEns{\sum_{n=1}^N \delta_n \theta_n}\middle/ \defEns{\sum_{n=1}^N \delta_n} \right. \eqsp . \label{eq:theta_avg}
\end{equation}

The pseudocode associated with the proposed SOUL method is presented in \Cref{alg:souk} below. Observe that, for additional efficiency, instead of generating independent Markov chains at each SA iteration, we warm-start the chains by setting $X_0^n = X_{m_{n-1}}^{n-1}$, for any $n \in \{1,\ldots, N\}$.

\begin{algorithm}
  \caption{The Stochastic Optimization via Unadjusted Langevin (SOUL) method
    \label{alg:souk}}
  \begin{algorithmic}[1]
    \Inputs{$\theta_0 \in \Theta$, $X_0^0 \in \rset^d$, $(\gamma_n)_{n \in \nset}$, $(\delta_n)_{n \in \nset}$, $(m_n)_{n \in \nset}$, $N$}
     \For{$n \in \{1,\ldots, N-1\}$ }
      \If{$n \geq 1$}
     \Let{$X_0^n = X_{m_{n-1}}^{n-1}$}
     \EndIf
     \For{$k \in \{0,\ldots,m_{n}-1\}$}
     \Let{$Z_{k+1}^n \sim \loiGauss(0,\Idd_d)$}
     \Let{$X_{k+1}^n = X_{k}^n + \gamma_n \nabla_{x} \log p(X^n_{k}|y,\theta_n) + \sqrt{2\gamma_n}Z_{k+1}^n$}
     \EndFor
     \Let{$\Delta_{\theta_n} = \frac{1}{m_n} \sum_{k=1}^{m_n} \nabla_{\theta} \log p(X_k^n,y |\theta_n)$}
     \Let{$\theta_{n+1} = \Pi_{\Theta}[\theta_n + \delta_{n+1}(\Delta_{\theta_n} - \nabla g(\theta_n))]$}
     \EndFor
     \Outputs{$\hat{\theta}_N = \left. \defEns{\sum_{n=1}^N \delta_n \theta_n}\middle/ \defEns{\sum_{n=1}^N \delta_n} \right.$}
  \end{algorithmic}
\end{algorithm}

To conclude, \Cref{sec:numerical-results} below demonstrates the proposed methodology with three numerical experiments related to high-dimensional logistic regression and statistical audio analysis with sparsity promoting priors. A detailed theoretical analysis of the proposed SOUL method is reported in \Cref{sec:convergence-results}. More precisely, we establish that if the cost function $f(\theta) = g(\theta) - \log p(y|\theta)$ defining \eqref{eq:def_theta_star} is convex, and if $(\gamma_n)_{n \in \N}$ and $(\delta_n)_{n \in \N}$ go to $0$ sufficiently fast, then $\expeLigne{f(\hat{\theta}_N)}$ converges to $\min_{\Theta} f$ and quantify the rate of convergence. Moreover, in the case where $(\gamma_n)_{n \in \N}$ is held fixed, \ie \ for all $n \in \N$, $\gamma_n = \gamma$, we show convergence to a neighbourhood of the solution, in the sense that there exist explicit $C, \alpha >0$ such that $\limsup_{N \to +\infty} \expeLigne{f(\hat{\theta}_N)} - \min_{\Theta} f \leq C\gamma^{\alpha}$. Finally, we also study the important case where $f$ is not convex. In that case, we use the results of \cite{kushner2003stochastic} to establish that $(\theta_n)_{n \in \N}$ converges \as \ to a stationary point of the projected ordinary differential equation associated with $\nabla f$ and $\Theta$. We postpone this result to \Cref{sec:non-convex-objective} in the supplementary document because it is highly technical.

\section{Numerical results}
\label{sec:numerical-results}
We now demonstrate the proposed methodology with three experiments that we have chosen to illustrate a variety of scenarios.  \Cref{ssec:num-res-logit} presents an application to empirical Bayesian logistic regression, where \eqref{eq:def_theta_star} can be analytically shown to be a convex optimisation problem with an unique solution $\theta^{\star}$, and where we benchmark our MLE estimate against the solution obtained by calculating the marginal likelihood $p(y|\theta)$ over a $\theta$-grid by using an harmonic mean estimator. Furthermore,  \Cref{ssec:num-res-cs-audio} presents a challenging application related to statistical audio compressive sensing analysis, where we use SOUL to estimate a regularisation parameter that controls the degree of sparsity enforced, and where a main difficulty is the high-dimensionality of the latent space ($d = 2,900$). Finally,  \Cref{ssec:num-res-logit-rand-effects} presents an application to a high-dimensional empirical Bayesian logistic regression with random effects for which the optimisation problem \eqref{eq:def_theta_star} is not convex. All experiments were carried out on an Intel i9-8950HK@2.90GHz workstation running Matlab R2018a.

\subsection{Bayesian Logistic Regression}
\label{ssec:num-res-logit}
In this first experiment we illustrate the proposed methodology with an empirical Bayesian logistic regression problem \citep{wakefield2013bayesian,polson2013bayesian}. We observe a set of covariates $\{v_i\}^{d_y}_{i=1} \in \rset^d$, and binary responses  $\{y_i\}^{d_y}_{i=1} \in \{0,1\}$, which we assume to be conditionally independent realisations of a logistic regression model: for any $i \in \{1,\ldots, d_y\}$, $y_i$ given $\beta$ and $v_i$ has distribution  $\Ber( s(v_i^{\transpose}\beta))$,
where $\beta \in \rset^d$ is the regression coefficient, $\Ber(\alpha)$ denotes the Bernoulli distribution with parameter $\alpha \in [0, 1]$ and $s(u) = \rme^u /(1 + \rme^u )$ is the cumulative distribution function of the standard logistic distribution. The prior for $\beta$ is set to be $\loiGauss(\theta \Unbf_d,\sigma^2\Idd_d)$, the $d$-dimensional Gaussian distribution with mean $\theta \Unbf_d$ and  covariance matrix $\sigma^2 \Idd_d$, where $\theta$ is the parameter we seek to estimate, $\Unbf_d = (1,\ldots,1) \in \rset^{d}$, $\sigma^2 = 5$ and $\Idd_d$ is the $d$-dimensional identity matrix. 
Following an empirical Bayesian approach, the parameter $\theta$ is  computed by maximum marginal likelihood estimation using  \Cref{alg:souk} with the marginal likelihood $ p(y|\theta)$ given by
\begin{equation}
 p(y|\theta)=  (2\uppi\sigma^2)^{-d/2}\int_{\rset^d} \defEns{\prod_{i=1}^{{d_y}} s(v_i^{\transpose} \beta)^{y_{i}}(1-s(v_i^{\transpose} \beta))^{1-y_{i}}}\rme^{-\frac{\norm[2]{\beta - \theta \Unbf_d}}{2 \sigma^2}}\rmd \beta \eqsp.
\label{EQ: bayes-logit-marg-likelihood}
\end{equation}
\Cref{lemma:prekoppa} in \Cref{sec:posterior} of the supplementary document shows  that \eqref{EQ: bayes-logit-marg-likelihood} is log-concave with respect to $\theta$. We use the proposed SOUL methodology to estimate $\theta^{\star}$ for the Wisconsin Diagnostic Breast Cancer dataset\footnote{\small Available online: \url{https://archive.ics.uci.edu/ml/datasets/Breast+Cancer +Wisconsin+(Diagnostic)}}, for which $d_y = 683$ and $d=10$, and where we suitably normalise the covariates. In order to assess the quality of our estimation results, we also calculate $p(y|\theta)$ over a grid of values for $\theta$ by using a truncated harmonic mean estimator.

To implement \Cref{alg:souk} we derive the log-likelihood function 
\begin{equation}
\log p(y|\beta,\theta)=\sum_{i=1}^{{d_y}}\left\{ y_{i}v_i^{\transpose} \beta-\log(1+\rme^{(v_i^{\transpose} \beta)})\right\} \eqsp,
\label{EQ: bayes-logit-likelihood}
\end{equation}
and obtain the following expressions for the gradients used in the MCMC steps \eqref{eq:euler_maruyama_langevin} and SA steps \eqref{eq:theta_it} respectively
\begin{align}
  \label{EQ: bayes-logit-grad-beta-ULA}
\nabla_{\beta}\log p(\beta|y,\theta)&= 
                                      \sum_{i=1}^{{d_y}}\left\{  y_i v_i - s(v_i^{\transpose} \beta) v_i \right\} - \frac{(\beta-\theta \mathbf{1}_d)}{\sigma^2} \eqsp,\\
  \label{EQ: bayes-logit-gradTheta}
   \nabla_{\theta} p(\beta,y |\theta) &= \ps{\Unbf_d}{\beta-\theta \Unbf_d}/\sigma^2 \eqsp.                                       
\end{align}
For the MCMC steps, we use a fixed stepsize $\gamma_n = 8.34 \times 10^{-5}$,  and batch size $m_n=1$, for any $n \in \nset$. On the other hand, we consider for the SA steps, the sequence of stepsizes $\delta_n=60/ n^{0.8}$, $\Theta=\ccint{-100,100}$ and  $\theta_0=0$. Finally,  we first run $100$ burn-in iterations with fixed $\theta_n = \theta_0$ to warm-up the Markov chain, followed by $50$ iterations of  \Cref{alg:souk} to warm-up the iterates. This procedure is then followed by $N=10^6$ iterations of  \Cref{alg:souk} to compute $\hat{\theta}_N$.

\begin{figure}[h!]
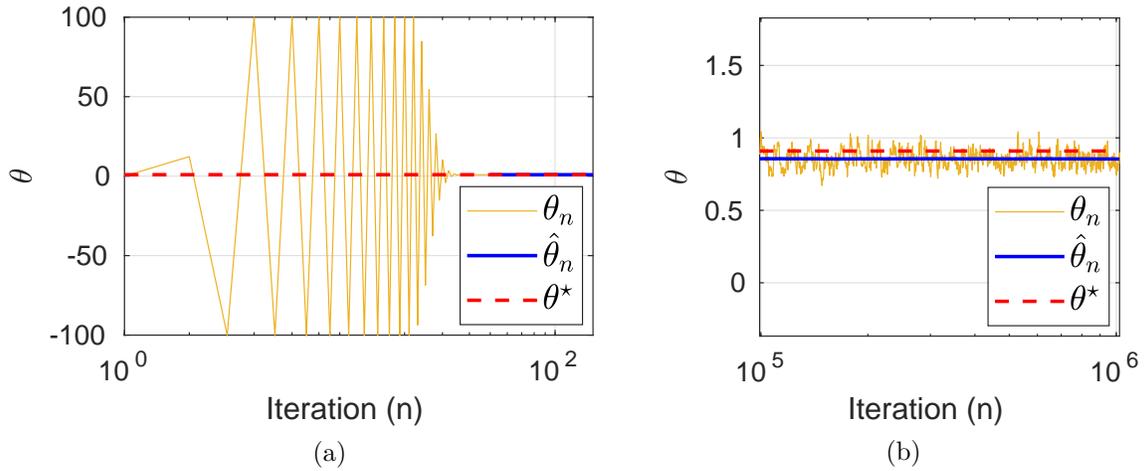

		\begin{minipage}[l1]{0.57 \linewidth}
			\centering
			\includegraphics[width=   
			\textwidth]{./Img/num-res-logit-cancer/theta.pdf}
			\centerline{(a)}
		\end{minipage}		
	\begin{minipage}[l1]{.42\linewidth}
		\centering
		\includegraphics[clip, trim=0.06cm 0cm 0.4cm 0cm, width=   
		\textwidth]{./Img/num-res-logit-cancer/theta_zoom.pdf}
		\centerline{(b)}
	\end{minipage}	
	\caption{\small Bayesian logistic regression - Evolution of the iterates $ \hat{\theta}_n $ and $\theta_{n}$ for the proposed method during (a) burn-in phase and (b) convergence phase. An estimate of $\theta^{\star}$, the true maximiser of $p(y|\theta)$, is plotted as a reference.} \label{fig:num-res-bayes-logit-theta} 
\end{figure}
 \Cref{fig:num-res-bayes-logit-theta}(a) shows the evolution of the iterates $\theta_n$ during the first $100$ iterations. Observe that the sequence initially oscillates, and then stabilises close to $\theta^{\star}$ after approximately $50$ iterations. \Cref{fig:num-res-bayes-logit-theta}(b) presents the iterates $\theta_n$ for $n = 10^5,\ldots,10^6$. For completeness,  \Cref{fig:num-res-bayes-logit-hist} shows the histograms corresponding to the marginal posteriors $p(\beta_j|y,v,\hat{\theta}_N)$, for $j = 1,\ldots, 10$, obtained as a by-product of \Cref{alg:souk}.
\begin{figure}[h!]
	\centering
	\includegraphics[clip, trim=3.4cm .5cm 2.6cm 0cm, width=   
	\textwidth]{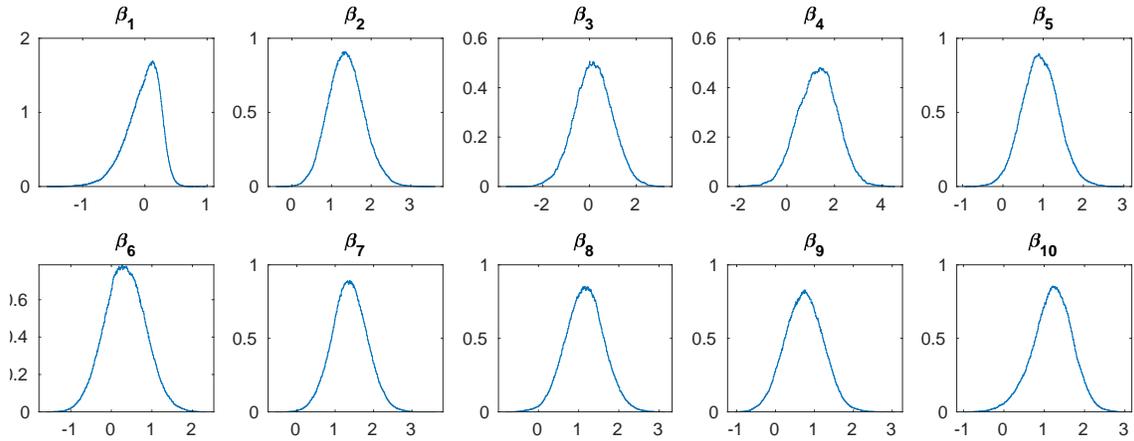}
	\caption{\small Bayesian logistic regression - Normalised histograms of each component of $ \beta $ obtained with $2 \times 10^6$ Monte Carlo samples.} \label{fig:num-res-bayes-logit-hist}
\end{figure}
In order to verify that the obtained estimate $\hat{\theta}_N$ is close to the true MLE $\theta^{\star}$ we use a truncated harmonic mean estimator (THME) \citep{robert2009computational} to calculate the marginal likelihood $p(y|\theta)$ for a range of values of $\theta$. Although obtaining the THME is usually computationally expensive, it is viable in this particular experiment as $\beta$ is low-dimensional. More precisely, given $n$ samples $(\beta_i)_{i \in \{1,\ldots,n\}}$ from $p(\beta|y,\theta)$, we obtain an approximation of $p(y|\theta)$ by computing
\begin{equation}
\hat{p}(y|\theta)=\left. n\vol(\msa)\middle/ \left( \sum_{k=1}^{n}\frac{\1_{\msa}(\beta_k)}{p(\beta_k,y|\theta)}\right) \right. \eqsp,
\label{EQ: bayes-logit-HME}
\end{equation}
where $\msa$ is a $d$-dimensional ball centered at the posterior mean $\bar{\beta}=n^{-1}\sum_{k=1}^{n}\beta_k$, and with radius set such that ${n^{-1} \sum_{i=1}^{n} \1_{\msa}(\beta_i) \approx 0.4}$. Using $n=6 \times 10^5$ samples, we obtain the approximation shown in  \Cref{fig:num-res-bayes-logit-hme-likeli}(a), where in addition to the estimated points we also display a quadratic fit (corresponding to a Gaussian fit in linear scale), which we use to obtain an estimate of $\theta^{\star}$ (the obtained log-likelihood values are small because the dataset is large ($d_y = 683$)).

To empirically study the estimation error involved, we replicate the experiment $10^3$ times.  \Cref{fig:num-res-bayes-logit-hme-likeli} shows the obtained histogram of $\{ \hat{\theta}_{N,i} \}_{i=1}^{1000} $, where we observe that all these estimators are very close to the true maximiser $\theta^{\star}$. Besides, note that the distribution of the estimation error is close to a Gaussian distribution, as expected for a maximum likelihood estimator. Also, there is a small estimation bias of the order of $3\%$, which can be attributed to the discretization error of SDE \eqref{eq:langevin}, and potentially to a small error in the estimation of $\theta^{\star}$.
\begin{figure}[h!]
		\centering
		\centerline{\includegraphics[width=\textwidth]{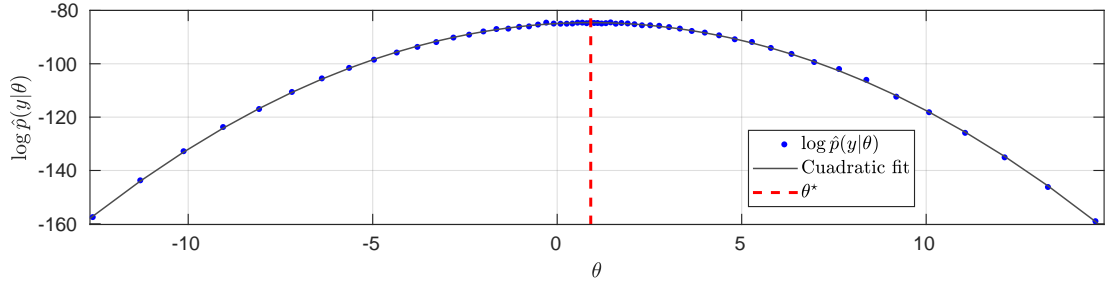}}	
		\centerline{(a)}
		\centerline{	\includegraphics[width=0.6\textwidth]{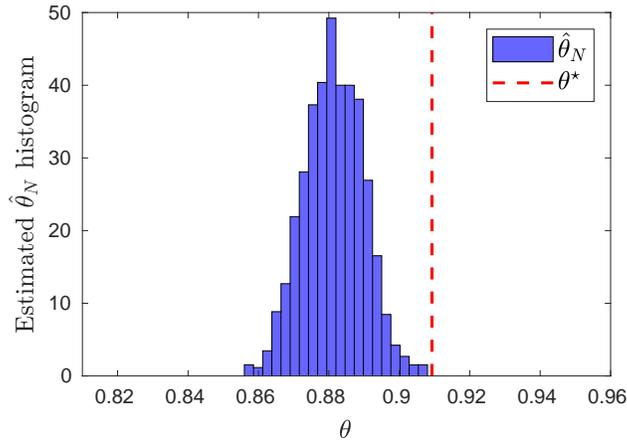}}
		\centerline{(b)}
	\caption{\small Bayesian logistic regression - (a) Estimated points of the marginal log-likelihood $\log \hat{p}(y|\theta)$ with quadratic fit (corresponding to a Gaussian fit in linear scale). (b) Normalised histogram of $\hat{\theta}_N$  for 1000 repetitions of the experiment. An estimate of $\theta^{\star}$, the maximiser of $\hat{p}(y|\theta)$, is plotted as a reference.} \label{fig:num-res-bayes-logit-hme-likeli}
\end{figure}
We conclude this experiment by using SOUL to perform a predictive empirical Bayesian analysis on the binary responses. We split the original dataset into an $80\%$ training set $(y^{\train},v^{\train})$ of size $d_{\train}=546$, and a $20\%$ test set $(y^{\test}, v^{\test})$ of size $d_{\test}=137$, and use SOUL to draw samples from the predictive distribution $p(y^{\test} | y^{\train},v^{\train},v^{\test}, \hat{\theta}_N)$. More precisely, we use SOUL to simultaneously calculate $\hat{\theta}_N$ and simulate from $p(\beta| y^{\train},v^{\train}, \hat{\theta}_N)$, followed by simulation from $p(y^{\test}|\beta, y^{\train}, v^{\train}, v^{\test})$.
We then estimate the maximum-a-posteriori predictive response $\hat{y}^{\test}$, and measure prediction accuracy against the test dataset by computing the error 
 \begin{equation}
 \epsilon= \|y^{\test}-\hat{y}^{\test}\|_1 /d_{\test} = \sum_{i=1}^{d_{\test}}   \abs{y^{\test}_i-\hat{y}^{\test}_i}/d_{\test}\eqsp,
 \label{EQ: bayes-logit-error}
 \end{equation}
 and obtain $\epsilon = 2.2\%$. For comparison, Figure \ref{fig:num-res-bayes-logit-error} below reports the error $\epsilon$ as a function of $\theta$ (the discontinuities arise because of the highly non-linear nature of the model). Observe that the estimated $\hat{\theta}_N$ produces a model that has a very good performance in this regard. 
\begin{figure}[h!]
	\centering
	\includegraphics[width=0.5\textwidth]{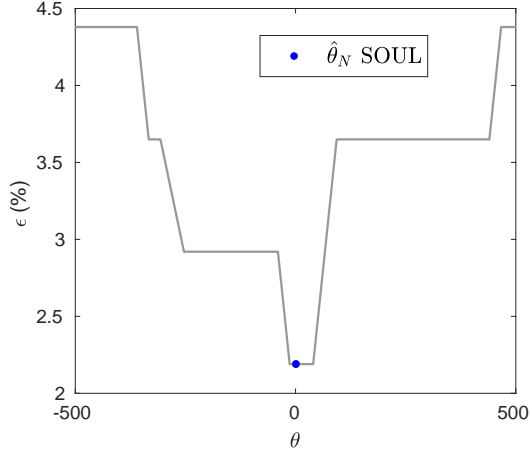}
	\caption{\small Bayesian logistic regression - Percentage of mislabelled binary observations in terms of $ \theta$. In blue we show the value of $\hat{\theta}_N$ obtained with Algo. \ref{alg:souk}.} \label{fig:num-res-bayes-logit-error}
\end{figure}

\subsection{Statistical audio compression}
\label{ssec:num-res-cs-audio}
Compressive sensing techniques exploit sparsity properties in the data to estimate signals from fewer samples than required by the Nyquist–Shannon sampling theorem \citep{candes2006compressive, candes2008introduction}. Many real-world data admit a sparse representation on some basis or dictionary.
Formally, consider an $\ell$-dimensional time-discrete signal $  z \in \mathbb{R}^{\ell} $  that is sparse in some dictionary $ \Psimatrix \in \mathbb{R}^{\ell \times d} $, i.e, there exists a latent vector $ x \in \mathbb{R}^d $ such that $  z = \Psimatrix x$ and $ \|x\|_0  = \sum_{i=1}^d \1_{\rset^*}(x_i) \ll \ell $. This prior assumption can be modelled by using a smoothed-Laplace distribution \citep{lingala2012blind}
\begin{equation}
p(x|\theta) \propto \exp\parenthese{-\theta \sum_{i=1}^{d} h_\lambda(x_i)} \eqsp ,
\label{EQ: num-res-audio-prior}
\end{equation}
where $h_\lambda$ is the Huber function given for any $u \in \rset$ by 
\begin{equation}
h_\lambda(u)=\begin{cases}
u^{2}/2 & \text{ if $|u|\leq\lambda$ } \eqsp,\\
\lambda(|u|-{\lambda/2}) & \text{ otherwise} \eqsp.
\end{cases} 
\label{EQ: huber-function}
\end{equation}
Acquiring $  z $ directly would call for measuring $ \ell $ univariate components. Instead, a carefully designed measurement matrix $\measMat \in \mathbb{R}^{p \times \ell}$, with $ p \ll \ell $, is used to directly observe a “compressed” signal $\measMat  z $, which only
requires taking $ p $ measurements. In addition, measurements are typically noisy which results in an observation $y \in \rset^p$  modeled as $ \y = \measMat  z + w$ where we assume that the noise $w$ has distribution $ \loiGauss(0,\sigma^2\Idd_p)$,  and therefore the likelihood function is given by
\begin{equation}
p(\y| x) \propto \exp\parenthese{-\norm{\y- \measMat\Psimatrix x}^2_2/(2\sigma^{2})} \eqsp ,
\label{EQ: num-res-audio-likelihood}
\end{equation}
leading to the posterior distribution
\begin{equation}
p( x|\y) \propto \textrm{exp}\parenthese{-\norm{\y- \measMat\Psimatrix x}^2_2/(2\sigma^{2}) -\theta \sum_{i=1}^{d} h_\lambda(x_i)} \eqsp .
\label{EQ: num-res-audio-posterior}
\end{equation}
To recover $ z$ from $\y$, we then compute the maximum-a-posteriori estimate
\begin{equation}
\hat{ x}_{\MAP} \in \underset{ x \in \mathbb{R}^d}{\mathrm{argmin}}~\defEns{ \norm{\y- \measMat\Psimatrix x}^2_2/2\sigma^{2}  + \theta ~ \sum_{i=1}^{d} h_\lambda(x_i)} \eqsp ,
\label{EQ: mapEstim}
\end{equation}
and set $\hat{ z}_{\MAP}=\Psimatrix \hat{ x}_{\MAP}$.

Following decades of active research, there are now many convex optimisation algorithms that can be used to efficiently solve \eqref{EQ: mapEstim}, even when $d$ is very large \citep{chambolle2016introduction,monga2017handbookconvexopt}. However, the selection of the value of $\theta$ in \eqref{EQ: mapEstim} remains a difficult open problem. This parameter controls the degree of sparsity of $x$ and has a strong impact on estimation performance.  

A common heuristic within the compressive sensing community is to set $\theta_{\mathrm{cs}}=0.1 \times\norm{(\measMat\Psimatrix)^\intercal y}_{\infty}/\sigma^2$, where for any $z \in \rset^\ell$, $\norm{z}_{\infty} = \max_{i\in\{1,\ldots,\ell\}} \abs{z_i}$, as suggested in \cite{kim2007method} and \cite{figueiredo2007gradient}; however, better results can arguably be obtained by adopting a statistical approach to estimate $\theta$.

The Bayesian framework offers several strategies for estimating $\theta$ from the observation $\y$. In this experiment we adopt an  empirical Bayesian approach and use SOUL to compute the MLE $\theta^{\star}$, which is challenging given the high-dimensionality of the latent space. 

To illustrate this approach, we consider the audio  experiment proposed in \cite{balzano2010compressed} for the \textit{``Mary had a little lamb''} song. The MIDI-generated audio file  $ z$ has $ \ell = 319,725 $ samples, but we only have access to a noisy observation vector $y$ with $ p = 456 $ random time points of the audio signal, corrupted by additive white Gaussian noise with $\sigma =0.015$.  
The latent signal $x$ has dimension $ d = 2,900 $
and is related to $ z$ by a dictionary matrix $\Psimatrix$ whose row vectors correspond to different piano notes lasting a quarter-second long \footnote{Each quarter-second sound can have one of 100 possible frequencies and be in 29 different positions in time. }. The parameter $\lambda$ for the prior \eqref{EQ: num-res-audio-prior} is set to $\lambda=4\times 10^{-5}$.  We used the heuristic $\theta_{\mathrm{cs}}$ as the initial value for $\theta$ in our algorithm. To solve the optimisation problem \eqref{EQ: mapEstim} we use the Gradient Projection for Sparse Reconstruction (GPSR) algorithm proposed in \cite{figueiredo2007gradient}. We use this solver because it is the one used in the online MATLAB demonstration of \cite{balzano2010compressed}, however, more modern algorithms could be used as well. We implemented \Cref{alg:souk} using a fixed  stepsize $\gamma_n=6.9 \times 10^{-6}$, a fixed batch size $m_n=1$, $\delta_n=20 \, n^{-0.8}/d=0.0069\, n^{-0.8}$ and 100 burn-in iterations.

The algorithm converged in approximately 500 iterations, which were computed in only 325 milliseconds.  \Cref{fig:num-res-audio} (left), shows the first 250 iterations of the sequence $\theta_n$ and of the weighted average $\hat{\theta}_{n}$. Again, observe that the iterates oscillate for a few iterations and then quickly stabilise. Finally, to assess the quality of the estimate $\hat{\theta}_N$, \Cref{fig:num-res-audio} (right) presents the reconstruction mean squared error as a function of $\theta$. The error is measured with respect to the reconstructed signal and is given by $\mathrm{MSE} (\hat{ x}_{\MAP}) = \|z^\star-\Psimatrix \hat{x}_{\MAP}\|_2^2/\ell$, where $z^\star$ is the true audio signal. Observe that the estimated value $\hat{\theta}_N$ is very close to the value that minimises the estimation error, and significantly outperforms the heuristic value $\theta_{cs}$ commonly used by practitioners.

\begin{figure}[h!]
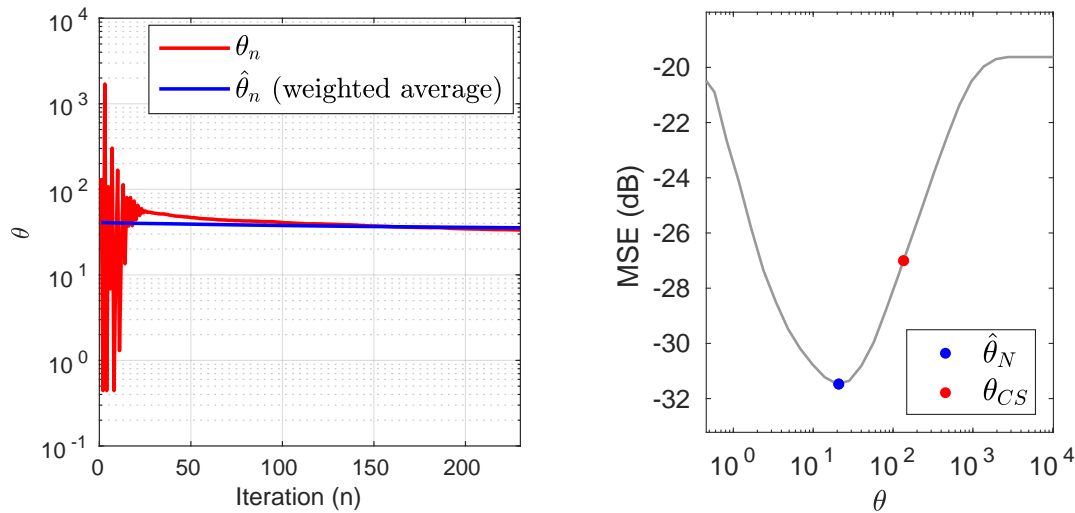

	\begin{minipage}[l1]{.5\linewidth}
		\centering
		\centerline{\includegraphics[width=\textwidth]{./Img/num-res-audio/audio-theta.pdf}}
	\end{minipage}
	\begin{minipage}[l2]{.48\linewidth}
		\centering
		\centerline{\includegraphics[width=0.9\textwidth]{./Img/num-res-audio/audio-mse.pdf}}
		
	\end{minipage}
	\caption{\small Statistical audio compression - Evolution of the the iterate $\theta_n$ and $\hat{\theta}_n$ with $\sigma=0.015$ in log scale (left). Reconstruction mean squared error (MSE) in dB as a function of the $\theta$ (right).} \label{fig:num-res-audio}
\end{figure}
        

\subsection{Sparse Bayesian logistic regression with random effects}
\label{ssec:num-res-logit-rand-effects}

Following on from the Bayesian logistic regression in Section \ref{ssec:num-res-logit}, where $p(y|\theta)$ is log-concave and hence $\theta^{\star}$ unique, we now consider a significantly more challenging sparse Bayesian logistic regression with random effects problem. In this experiment $p(y|\theta)$ is no longer log-concave, so SOUL can potentially get trapped in local maximisers. Furthermore, the dimension of $\theta$ in this experiment is very large ($d_\theta = 1001$), making the MLE problem even more challenging. This experiment was previously considered by \cite{atchade2017perturbed} and we replicate their setup. 

Let $\{y_i\}^{d_y}_{i=1} \in \{0,1\}$ be a vector of binary responses which can be modelled as ${d_y}$ conditionally independent realisations of a random effect logistic regression model, 
\begin{equation}
y_i|x \sim \textrm{Ber}\left( s(v_i^{\transpose} \beta+\upsigma z_i^{\transpose} x)\right) \eqsp,  \quad i \in \{1,\ldots,d_y\} \eqsp ,
\label{EQ: rand-ef-model}
\end{equation}
where $ v_i\in \rset^p $ are the covariates, $\beta \in \rset^p$ is the regression vector, $z_i \in \rset^d$ are (known) loading vectors, $x$ are random effects and $\upsigma >0$. In addition, recall that $\Ber(\alpha)$ denotes the Bernoulli distribution with parameter $\alpha \in \ccint{0, 1}$ and $s(u) = \rme^u /(1 + \rme^u )$ is the cumulative distribution function of the standard logistic distribution. The goal is to estimate the unknown parameters $\theta=(\beta, \upsigma) \in \rset^p \times \ooint{0,\plusinfty}$ directly from $\{y_i\}^{d_y}_{i=1}$, without knowing the value of $x$, which we assume to follow a standard Gaussian distribution, \ie~$p(x)=\exp \{-\norm{x}^2_2/2\}/(2\pi)^{d/2}$. We estimate $\theta$ by MLE using \Cref{alg:souk} to maximize \eqref{eq:def_theta_star}, with marginal likelihood given by
\begin{equation}
p(y|\theta)= \int_{\rset^d} \prod_{i=1}^{{d_y}} s(v_i^{\transpose} \beta+\upsigma z_i^{\transpose} x)^{y_{i}}(1-s(v_i^{\transpose} \beta+\upsigma z_i^{\transpose} x))^{1-y_{i}}p(x)\rmd x \eqsp ,
\label{EQ: rand-ef-marg-likelihood}
\end{equation}
and we use the penalty function
\begin{equation}
g(\theta)= \sum_{j=1}^{d} h_\lambda(\beta_j) \eqsp ,
\label{EQ: rand-ef-g-huber-loss}
\end{equation}
where $h_\lambda$ is the Huber function defined in \eqref{EQ: huber-function}.

We follow the procedure described in \cite{atchade2017perturbed} to generate the observations $\{y_i\}^{{d_y}}_{i=1}$, with ${d_y} = 500 $, $p = 1000$ and $d = 5$\footnote{\scriptsize We renamed some symbols for notation consistency. What we denote by $v_i$, $x$, ${d_y}$ and $d$, is denoted in \cite{atchade2017perturbed} by $x_i$, $\mathbf{U}$, $N$ and $q$ respectively.}. The vector of regressors $\beta_{\true}$  is generated from the uniform distribution on $ [1, 5] $ and $ 98\% $ of its coefficients are randomly set to zero. The variance $\upsigma_{\true}$ of the random effect is set to 0.1, and the projection interval for the estimated $\upsigma$ is $[10^{-5},+\infty)$. Finally, the parameter $\lambda$ in
 \eqref{EQ: rand-ef-g-huber-loss} is set to $\lambda=30$. We emphasize at this point that $\theta$ is high-dimensional in this experiment ($d_\Theta=1001$), making the estimation problem particularly challenging.

The conditional log-likelihood function for this model is
\begin{equation}
\log p(y|x,\theta)=\sum_{i=1}^{{d_y}}\left\{ y_{i}(v_i^{\transpose} \beta+\upsigma z_i^{\transpose} x)-\log(1+\rme^{v_i^{\transpose} \beta+\upsigma z_i^{\transpose} x})\right\} \eqsp .
\label{EQ: rand-ef-likelihood}
\end{equation}
To implement \Cref{alg:souk} we use the gradients 
\begin{align}
\nabla_{x}\log p(x|y,\theta)&= \sum_{i=1}^{{d_y}}\left\{ \upsigma z_{i}(y_{i}-s(v_{i}^{\transpose}\beta+\upsigma z_{i}^{\transpose}x))\right\} -x \eqsp ,
\label{EQ: rand-ef-grad-x-ULA}\\
  \nabla_{\theta} \log p(x ,y |\theta)&=\sum_{i=1}^{{d_y}}\left\{ (y_i -s(v_{i}^{\transpose}\beta+\upsigma z^{\transpose}_{i}x))   \begin{bmatrix}
      v_i \\
z_i^{\transpose} x
\end{bmatrix} \right\} \eqsp .
\label{EQ: rand-ef-grad-gradTheta}
\end{align}
Finally the gradient of the penalty function is given by
\begin{equation}
\frac{\partial}{\partial{\beta_i}} g(\theta)=\begin{cases}
{\beta_i} & |\beta_i|\leq\lambda\\
\lambda \, \sign(\beta_i), & |\beta_i|>\lambda
\end{cases} \eqsp , \qquad \frac{\partial}{\partial{\upsigma}} g(\theta) = 0 \eqsp ,
\label{EQ: rand-ef-grad-proxThetaG}
\end{equation}
where $\sign$ denotes the sign function, \ie~for any $s \in \Rset$, $\sign(s) = |s|/s$ if $s \neq 0$, and $\sign(s) = 0$ otherwise.

We use $\gamma_n=0.01$, $\delta_n=n^{-0.95}/d=0.2 \times n^{-0.95}$, a fixed batch size $m_n=1$, $\beta_0=\Unbf_p$ and $\upsigma_0=1$ as initial values. Moreover, we perform $10^4$ burn-in iterations with a fixed value of $\theta_0=(\beta_0,\upsigma_0)$ to warm-up the Markov chain, and further $600$ iterations of \Cref{alg:souk} to warm-start the iterates. Following on from this, we run $N=5\times10^4$ iterations of \Cref{alg:souk} to compute $\hat{\theta}_N$.  Computing this estimates required $25$ seconds in total.

Figure \ref{fig:num-res-logit-atchade-iterates} shows the evolution of the iterates throughout iterations, where we used $\|\hat{\beta}_n\|_0$ as a summary statistic to track the number of active components. Because the Huber penalty \eqref{EQ: huber-function} does not enforce exact sparsity on $\beta$, to estimate the number of active components we only consider values that are larger than a threshold $\tau$ (we used $\tau=0.005$).

\begin{figure}[h!]
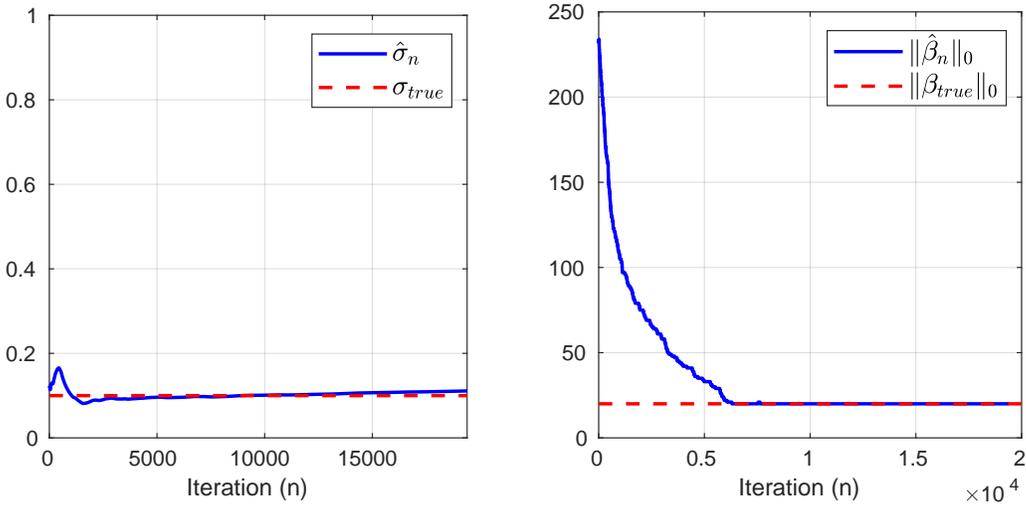

	\begin{minipage}[l1]{.48\linewidth}
		\centering
		\centerline{\includegraphics[width=\textwidth]{./Img/num-res-logit-atchade/sigma.pdf}}
	\end{minipage}
	\begin{minipage}[l2]{.48\linewidth}
		\centering
		\centerline{\includegraphics[width=\textwidth]{./Img/num-res-logit-atchade/beta-L0.pdf}}
		
	\end{minipage}
	\caption{\small Sparse Bayesian logistic regression with random effects - Evolution of the  $\normLigne{\hat{\beta}_n}_0$ and of the iterate $ \hat{\upsigma}_n$ for the proposed method. The true values are plotted in red as a reference.} \label{fig:num-res-logit-atchade-iterates}
\end{figure}

From Figure \ref{fig:num-res-logit-atchade-iterates} we observe that  $\hat{\upsigma}_n$ converges to a value that is very close to $\upsigma_{\true}$, and that the number of active components is also accurately estimated. Moreover, Figure \ref{fig:num-res-logit-atchade-betaSupport} shows that most active components were correctly identified.  We also observe that $\hat{\beta}_n$ stabilizes after approximately $6300$ iterations, which correspond to $6300$ Monte Carlo samples as $m_n$=1. This is in close agreement with the results presented in  \cite[Figure 5]{atchade2017perturbed}, where they observe stabilization after a similar number of iterations of their highly specialised Polya-Gamma sampler. 
 
\begin{figure}[h!]
	\centering
	\includegraphics[clip, trim=2cm 0.1cm 1.5cm 0.1cm, width=   
	\textwidth]{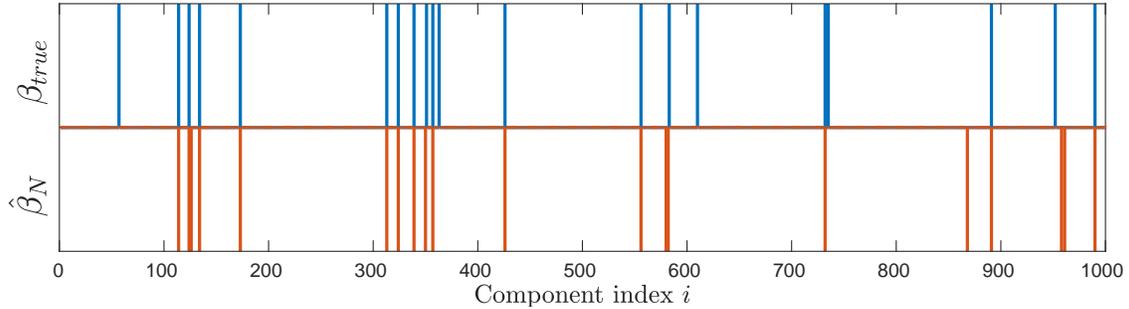}
	\caption{\small Sparse Bayesian logistic regression with random effects - Support of the estimated $\hat{\beta}_N$ compared with the support of $\beta_{true}$.} \label{fig:num-res-logit-atchade-betaSupport}
\end{figure}

It is worth emphasising at this point that \cite{atchade2017perturbed} considers the non-smooth penalty $g(\theta)=\lambda \|\beta\|_1$ instead of \eqref{EQ: rand-ef-g-huber-loss}. Consequently, instead of using the gradient of $g$, they resort to the so-called proximal operator of $g$ \cite{chambolle2016introduction}.
The generalisation of the SOUL methodology proposed in this paper to models that have non-differentiable terms is addressed in \citet{vidal2018maximum, vidal:debortoli:pereyra:durmus:2019}.


\section{Theoretical convergence analysis for SOUL, and generalisation to other inexact MCMC kernels (SOUK)}
\label{sec:convergence-results}
In this section we state our main theoretical results for SOUL. For completeness, we first present the results in a general stochastic optimisation setting and by considering a generic inexact MCMC sampler, and then show that our results apply to the specific MLE optimisation problem \eqref{eq:def_theta_star}, and to the specific Langevin algorithm \eqref{eq:euler_maruyama_langevin} used in SOUL.

\subsection{Notations and convention}
Denote by $\mathcal{B}(\rset^d)$ the Borel $\sigma$-field of
$\rset^d$, $\functionspace[]{\rset^d}$ the set of all Borel measurable
functions on $\rset^d$ and for $f \in \functionspace[]{\rset^d}$,
$\Vnorm[\infty]{f}= \sup_{x \in \rset^d} \abs{f(x)}$.  For $\mu$ a probability measure
on $(\rset^d, \mathcal{B}(\rset^d))$ and $f \in
\functionspace[]{\rset^d}$ a $\mu$-integrable function, denote by
$\mu(f)$ the integral of $f$ \wrt~$\mu$. For $f \in \functionspace[]{\rset^d}$, the $V$-norm of $f$ is given by $\Vnorm[V]{f}= \sup_{x \in \rset^d} |f(x)|/V(x)$. Let $\xi$ be a finite signed measure on $(\rset^d,\mcbb(\rset^d))$. The $V$-total variation distance of $\xi$ is defined as
\begin{equation}
\Vnorm[V]{\xi} = \sup_{f \in \functionspace[]{\rset^d}, \Vnorm[V]{f} \leq 1}  \abs{\int_{\rset^d } f(x) \rmd \xi (x)} \eqsp.
\end{equation}
If $V \equiv 1$, then $\Vnorm[V]{\cdot}$ is the total variation  denoted by $\tvnorm{\cdot}$. Let $\mu$ be a finite signed measure, then by the Hahn-Jordan theorem \cite[Theorem D.1.3]{douc:moulines:priouret:soulier:2018}, there exists a pair of finite singular measures $\mu^{+}, \mu^{-}$ such that $\mu = \mu^{+} - \mu^{-}$. The total variation measure $\abs{\mu}$ is given by $\abs{\mu} = \mu^+ + \mu^-$.

Let $\msu$ be an open set of $\rset^d$. We denote by
$\rmc^{k}(\msu, \rset^p)$
the set of $\rset^p$-valued $k$-differentiable functions, respectively
the set of compactly supported $\rset^p$-valued $k$-differentiable
functions. $\rmc^k(\msu)$  stands $\rmc^k(\msu,\rset)$.  Let $f : \msu \to \rset$, we denote by $\nabla f$,
the gradient of $f$ if it exists. $f$ is said to me $m$-convex with $m\geq 0$ if
for all $x,y \in \rset^d$ and $t \in \ccint{0,1}$,
\begin{equation}
f(t x + (1-t) y) \leq t f(x)  + (1-t) f(y) -(m/2)  \norm[2]{x-y}  \eqsp.
\end{equation}
We recall that if $f : \msu \to \rset$ is twice differentiable at point $a \in \rset^d$, its Laplacian is given by $\Delta f(a) = \sum_{i=1}^d \fraca{\partial^2 f}{\partial x_i^2}(a)$.
For any $\msa \subset \rset^d$, we denote by $\partial \msa$ the boundary of $\msa$. 
Let $(\Omega,\mcf,\PP)$ be a probability space. 
Denote by $\mu
\ll \nu$ if $\mu$ is absolutely continuous \wrt~$\nu$ and $\rmd \mu /
\rmd \nu$ an associated density. Let $\mu,\nu$ be two probability
measures on $(\rset^d, \mcbb(\rset^d))$. Define the Kullback-Leibler
divergence of $\mu$ from $\nu$ by
\begin{equation}
  \KL{\mu}{\nu} =
  \begin{cases}
    \int_{\rset^d} \frac{\rmd \mu}{\rmd \nu}(x) \log \parenthese{\frac{\rmd \mu}{\rmd \nu} (x)} \rmd \nu (x) \eqsp, & \text{if } \mu \ll \nu \\
\plusinfty & \text{ otherwise} \eqsp.
  \end{cases}
\end{equation}
The complement of a set $\msa \subset \rset^d$, is denoted by $\msa^{\complementary}$. We take the convention that $\prod_{k=p}^n = 1$ and $\sum_{k=p}^n =0$ for $n,p \in \nset$, $n< p$. All densities are w.r.t. the Lebesgue measure unless stated otherwise.


\subsection{Stochastic Optimization with inexact MCMC methods} \label{sec:stoch-optim-with}
We consider the problem of minimizing a function $f : \Theta \to \rset$ with $\Theta \subset \rset^{d_{\Theta}}$ under the following assumptions.

\begin{assumption}
  \label{assum:theta_compact}
  $\Theta$ is a convex compact set and $\Theta \subset \boulefermee{0}{\Rtheta}$ with $\Rtheta>0$.
\end{assumption}

\begin{assumption}
  \label{assum:f_diff}
  There exist an open set $\msu \subset \rset^{d_{\Theta}}$ and $\L \geq 0$ such that $\Theta \subset \msu$, $f \in \rmc^1(\msu,\R)$ and satisfies for any $\theta_1, \theta_2 \in \Theta$
  \begin{equation}
    \| \nabla f (\theta_1) - \nabla f (\theta_2) \| \leq \L \| \theta_1 - \theta_2 \| \eqsp .
  \end{equation}
  \end{assumption}

\begin{assumption}
  \label{assum:grad_expec}
  For any $\theta \in \Theta$, there exist $H_{\theta}: \ \rset^d \ \rightarrow \ \rset^{d_{\Theta}}$ and a probability distribution $\pi_{\theta}$ on $(\rset^d, \mcbb(\rset^d))$ satisfying that  
  $\pi_{\theta}(H_{\theta})< \plusinfty$ and for any $\theta \in \Theta$
  \begin{equation}
    \nabla f(\theta) = \int_{\rset^d} H_{\theta}(x) \rmd \pi_{\theta}(x) \eqsp .
    \end{equation}
  In addition, $(\theta, x) \mapsto H_{\theta}(x)$ is measurable.
\end{assumption}

\label{sec:stoch-optim-with-1}

Note that for the maximum marginal likelihood estimation problem \eqref{eq:def_theta_star}, $f$ corresponds to $\theta \mapsto -\log(p(y|\theta)) + g(\theta)$, for any $\theta \in \Theta$, $H_{\theta} : x \mapsto  \nabla_{\theta} \log(p(x,y | \theta))$ and $\pi_{\theta}$ is the probability distribution with density with respect to the Lebesgue measure $x \mapsto  p(x|y, \theta)$.

To minimize the objective function $f$ we suggest the use of a SA strategy which extends the one presented in \Cref{sec:sto_optim_Langevin}. 
More precisely, motivated by the methodology described in \Cref{sec:sto_optim_Langevin}, we propose a SA scheme which relies on biased estimates of $\nabla f(\theta)$ through a family of Markov kernels $\lbrace \Kker_{\gamma, \theta}, \gamma \in (0, \bgamma] \ \text{and} \ \theta \in \Theta \rbrace$, for $\bgamma >0$, such that for any $\theta \in \Theta$ and $\gamma \in (0,\bgamma]$, $\Kker_{\gamma, \theta}$ admits an invariant probability distribution $\pi_{\gamma,\theta}$ on $(\rset^d, \mathcal{B}(\rset^d))$. In the SOUL method, the Markov kernel $\Kker_{\gamma, \theta}$ stands for $\Rker_{\gamma, \theta}$ for any $\gamma \in (0, \bgamma]$  and $\theta \in \Theta$, where $\Rker_{\gamma, \theta}$ is the Markov kernel associated with \eqref{eq:euler_maruyama_langevin}. We assume in addition that the bias associated to the use of this family of Markov kernels can be controlled \wrt \ to $\gamma$ uniformly in $\theta$, \ie \ for example there exists $C>0$ such that for all $\gamma \in (0, \bgamma]$  and $\theta \in \Theta$, $\tvnorm{\pi_{\gamma, \theta} - \pi_{\theta}} \leq C \gamma^{\alpha}$ with $\alpha > 0$. 

Let now $(\delta_n)_{n \in \nset} \in (\rset_+^*)^{\nset}$ and $(m_n)_{n \in \N} \in \left(\nsets\right)^{\N}$ be sequences of stepsizes and batch sizes which will be used to define the sequence relatively to the variable $\theta$ similarly to \eqref{eq:theta_it} and \eqref{eq:delta_n}. Let $(\gamma_n)_{n \in \N} \in (\rset_+^*)^{\nset}$ be a sequence of stepsizes which will be used to get approximate samples from $\pi_{\theta_n}$, similarly to \eqref{eq:euler_maruyama_langevin}. Starting from $X_0^0 \in \rset^d$ and $\theta_0 \in \Theta$, we define on a probability space $(\Omega,\mcf,\mathbb{P})$, $(\lbrace X_k^n: k \in \lbrace 0, \dots, m_n \rbrace \rbrace, \theta_n)_{n \in \N}$ by the following recursion for $n \in \nset$ and $k \in \lbrace 0, \dots, m_n -1 \rbrace$
\begin{equation}
    \label{eq:algo_SOUL}
\begin{aligned}
   (X_{k}^n)_{ k \in \{0,\ldots,m_n\}} & \text{ is a MC with kernel } \Kker_{\gamma_n, \theta_n}    
  \text{ and } X_0^n = X_{m_{n-1}}^{n-1} \text{ given } \mathcal{F}_{n-1} \eqsp , \\
  \theta_{n+1} &= \Pi_{\Theta}\parentheseDeux{\theta_n - \frac{\delta_{n+1}}{m_{n}} \sum_{k=1}^{m_{n}} H_{\theta_n}(X_k^n)} \eqsp ,
\end{aligned}
\end{equation}
where $\Pi_{\Theta}$ is the projection onto $\Theta$ and  $\mathcal{F}_n$ is defined as follows for all $n \in \nset$
\begin{equation}
  \label{eq:def_F_n}
    \mathcal{F}_n = \sigma \left( \theta_0, \{(X_k^{\ell})_{k \in \{0,\ldots,m_\ell\}} \, : \, \ell \in \iint{0}{n}\} \right) \eqsp ,  \qquad \mcf_{-1} = \sigma(\theta_0, X_0^0)
  \end{equation}
  where $\{(X_k^{\ell})_{k \in \{0,\ldots,m_\ell\}} \, : \, \ell \in \iint{0}{n}\}$ is given by \eqref{eq:algo_SOUL}.
  Note that such a construction is always possible by the Kolmogorov extension theorem \cite[Theorem 5.16]{kallenberg2006foundations}, and  by \eqref{eq:algo_SOUL}, for any $n \in \nset$, $\theta_{n+1}$ is $\mcf_n$-measurable. Then the sequence of approximate minimizers of $f$ is given by $(\hat{\theta}_N)_{N \in \nset}$, \eqref{eq:theta_avg}. 

Under different sets of conditions on $f, H, (\delta_n)_{n \in \N}, (\gamma_n)_{n \in \N}$ and $(m_n)_{n \in \N}$ we obtain that
$(\theta_n)_{n \in \nset}$ converges \as \ to an element of $\argmin_{\Theta} f$. In particular in this section we consider the case where $f$ is assumed to be convex.
We establish that if $(\gamma_n)_{n \in \N}$ and $(\delta_n)_{n \in \N}$ go to $0$ sufficiently fast, $\expeLigne{f(\hat{\theta}_N)} - \min_{\Theta} f$ goes to $0$ with a quantitative rate of convergence. In the case where $(\gamma_n)_{n \in \N}$ is held fixed, \ie \ for all $n \in \N$, $\gamma_n = \gamma$, we show that while $\expeLigne{f(\hat{\theta}_N)}$ does not converge to $0$, there exists $C, \alpha >0$ such that $\limsup_{N \to +\infty} \expeLigne{f(\hat{\theta}_N)} - \min_{\Theta} f \leq C\gamma^{\alpha}$. In the case where $f$ is non-convex, we apply some results from stochastic approximation \cite{kushner2003stochastic}  which establish that the sequence $(\theta_n)_{n \in \N}$ converges \as \ to a stationary point of the projected ordinary differential equation associated with $\nabla f$ and $\Theta$. We postpone this result to \Cref{sec:non-convex-objective}, since it involves a theoretical background which we think is out of the scope of the main document.

 \subsection{Main results}
\label{sec:main-results}

We impose a stability condition on the stochastic process $\{(X_k^n)_{k \in \{0,\ldots,m_n\}} \, : \, n \in \nset\}$ defined by \eqref{eq:algo_SOUL} and that for any $\gamma \in \ocint{0, \bgamma}$ and $ \theta \in \Theta$ the iterates of $\Kker_{\gamma, \theta}$ are close enough to $\pi_{\theta}$ after a sufficiently large number of iterations.
\begin{assumptionH}
  \label{assum:condition_majo_V}
    There exists a measurable function $V : \rset^d \to \coint{1,\plusinfty}$ satisfying  the following conditions.
  \begin{enumerate}[label=(\roman*)]
  \item  \label{assum:condition_majo_V_i}  There exists    $A_1 \geq 1$ such that for any $n,p \in \N$, $k \in \{0, \dots, m_n \}$
       \begin{equation}
     \label{eq:condition_majo_V}
     \CPE{\Kker_{\gamma_n, \theta_n}^p V(X_{k}^n)}{X_0^0} \leq A_1 V(X_0^0) \eqsp , \qquad \expe{V(X_0^0)} < + \infty \eqsp , 
   \end{equation}
  where $\{(X_k^{\ell})_{k \in \{0,\ldots,m_\ell\}} \, : \, \ell \in \iint{0}{n}\}$  is given by \eqref{eq:algo_SOUL}.
 \item   \label{assum:condition_majo_V_ii}    There exist $A_2, A_3\geq 1$, $\rho \in \coint{0,1}$ such that for any $\gamma \in\ocint{0,\bgamma}$, $\theta \in \Theta$, $x \in \rset^d$ and  $n \in \N$, $\Kker_{\gamma,\theta}$ has a stationary distribution $\pi_{\gamma,\theta}$ and 
   \begin{equation}
     \label{eq:condition_V}
     \Vnorm{\updelta_x \Kker_{\gamma, \theta}^n - \pi_{\gamma, \theta}} \leq A_2 \rho^{n \gamma} V(x) \eqsp , \qquad \pi_{\gamma, \theta}(V) \leq A_3 \eqsp .
   \end{equation}  
 \item \label{assum:condition_majo_V_iii} There exists $\Psibf: \ \rset_+^{\star} \to \rset_+$ such that for any $\gamma \in \ocint{0, \bgamma}$ and $\theta \in \Theta$
   \begin{equation}
     \Vnorm[V^{1/2}]{\pi_{\gamma, \theta} - \pi_{\theta}} \leq \Psibf(\gamma) \eqsp .
   \end{equation}  
  \end{enumerate}
 \end{assumptionH}

 \Cref{assum:condition_majo_V}-\ref{assum:condition_majo_V_ii} is an ergodicity condition in $V$-norm for the Markov kernel $\Kker_{\gamma, \theta}$ uniform in $\theta \in \Theta$. There exists an extensive literature on the conditions under which a Markov kernel is ergodic \cite{meyn1993criteria_i,douc:moulines:priouret:soulier:2018}. \Cref{assum:condition_majo_V}-\ref{assum:condition_majo_V_iii} ensures that the distance between the invariant measure  $\pi_{\gamma, \theta}$ of the Markov kernel $\Kker_{\gamma, \theta}$ and $\pi_{\theta}$ can be controlled uniformly in $\theta$. We show that this condition holds in the case of the Langevin Monte Carlo algorithm in \Cref{propo:discrete_vs_continuous}. 

 We now state our mains results.

\begin{theorem}[Increasing batch size 1]
  \label{thm:salem_cv}
  Assume \tup{\Cref{assum:theta_compact}}, \tup{\Cref{assum:f_diff}}, \tup{\Cref{assum:grad_expec}} hold and $f$ is convex.
   Let $(\gamma_n)_{n \in \nset}$,
   $(\delta_n)_{n \in \nset}$ be sequences of non-increasing positive real numbers and $(m_n)_{n \in \nset}$ be sequences of positive integers satisfying 
   $\sup_{n \in \nset} \delta_n < 1/\L$, $\sup_{n \in \nset} \gamma_n < \bgamma$ and 
     \begin{equation}
    \label{eq:condition_cv}
    \sum_{n=0}^{+\infty} \delta_{n+1} = +\infty \eqsp , \qquad \sum_{n=0}^{+\infty} \delta_{n+1} \Psibf(\gamma_n) < +\infty \eqsp , \qquad \sum_{n=0}^{+\infty} \delta_{n+1} / (m_n \gamma_n) < +\infty \eqsp .
  \end{equation}
  Let $\{(X_k^n)_{k \in \{0,\ldots,m_n\}} \, : \, n \in \nset\}$ and $(\theta_n)_{n \in \nset}$ be given by \eqref{eq:algo_SOUL}.
  Assume in addition that \tup{\Cref{assum:condition_majo_V}} is satisfied and that for any $\theta \in \Theta$ and $x \in \rset^d$, $\norm{H_{\theta}(x)} \leq V^{1/2}(x)$. 
  Then, the following statements hold:
  \begin{enumerate}[label=(\alph*)]
  \item   $(\theta_n)_{n \in \N}$ 
    converges \as~to some $\theta^{\star} \in \argmin_{\Theta} f$ ;
  \item \label{thm:salem_cv_ii} furthermore, \as~there exists $C\geq 0$ such that for any $n \in \nsets$ 
  \begin{equation}
    \defEns{\left. \sum_{k=1}^n \delta_k f(\theta_k) \middle/ \sum_{k=1}^n \delta_k \right. } - \min_{\Theta} f \leq \left. C \middle/\left( \sum_{k=1}^n \delta_k \right) \right.  \eqsp.
  \end{equation}
  \end{enumerate}
\end{theorem}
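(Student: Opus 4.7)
The overall strategy is the classical convex stochastic-approximation analysis, but adapted to tolerate both the \emph{bias} introduced by $\pi_{\gamma_n,\theta_n}\neq \pi_{\theta_n}$ and the \emph{correlation} produced by warm-starting the inner Markov chain at $X_0^n=X_{m_{n-1}}^{n-1}$. Let $\theta^\star\in\argmin_\Theta f$ and write $\hat H_n=m_n^{-1}\sum_{k=1}^{m_n}H_{\theta_n}(X_k^n)$. By nonexpansiveness of $\Pi_\Theta$ the recursion \eqref{eq:algo_SOUL} gives, for every $n\geq 0$,
\begin{equation*}
\|\theta_{n+1}-\theta^\star\|^2\leq \|\theta_n-\theta^\star\|^2-2\delta_{n+1}\langle \nabla f(\theta_n),\theta_n-\theta^\star\rangle+2\delta_{n+1}\langle \nabla f(\theta_n)-\hat H_n,\theta_n-\theta^\star\rangle+\delta_{n+1}^2\|\hat H_n\|^2.
\end{equation*}
Convexity of $f$ yields $\langle\nabla f(\theta_n),\theta_n-\theta^\star\rangle\geq f(\theta_n)-\min_\Theta f$, and $L$-smoothness together with $\sup_n\delta_n<1/L$ lets us absorb the $\delta_{n+1}^2\|\hat H_n\|^2$ term against a descent-lemma upper bound on $f(\theta_{n+1})-f(\theta_n)$; equivalently, summing the above displayed inequality from $0$ to $N-1$ produces the core regret bound
\begin{equation*}
2\sum_{k=1}^{N}\delta_k\bigl(f(\theta_{k-1})-\min_\Theta f\bigr)\leq \|\theta_0-\theta^\star\|^2+\sum_{k=1}^{N}\delta_k^2\|\hat H_{k-1}\|^2+2\sum_{k=1}^{N}\delta_k\langle\nabla f(\theta_{k-1})-\hat H_{k-1},\theta_{k-1}-\theta^\star\rangle.
\end{equation*}

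The technical heart of the argument is to show that both correction sums on the right are almost surely finite. Split the error as
\begin{equation*}
\nabla f(\theta_n)-\hat H_n=\underbrace{\pi_{\theta_n}(H_{\theta_n})-\pi_{\gamma_n,\theta_n}(H_{\theta_n})}_{\text{bias}}+\underbrace{\pi_{\gamma_n,\theta_n}(H_{\theta_n})-\hat H_n}_{\text{fluctuation}}.
\end{equation*}
Since $\|H_{\theta}\|\leq V^{1/2}$, the bias at step $n$ is bounded by $\Psibf(\gamma_n)$ by \Cref{assum:condition_majo_V}\ref{assum:condition_majo_V_iii}, and \Cref{assum:theta_compact} makes $\|\theta_{n-1}-\theta^\star\|$ uniformly bounded, so the bias contribution to the sum is dominated by $\sum_n\delta_{n+1}\Psibf(\gamma_n)<\infty$. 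The fluctuation is handled by a Poisson-equation decomposition: the geometric ergodicity in \Cref{assum:condition_majo_V}\ref{assum:condition_majo_V_ii} ensures that $\hat H_{\gamma,\theta}=\sum_{j\geq 0}\Kker_{\gamma,\theta}^{j}(H_\theta-\pi_{\gamma,\theta}(H_\theta))$ converges and satisfies $\|\hat H_{\gamma,\theta}\|_V\lesssim 1/\gamma$. Writing $H_\theta-\pi_{\gamma,\theta}(H_\theta)=\hat H_{\gamma,\theta}-\Kker_{\gamma,\theta}\hat H_{\gamma,\theta}$ and telescoping along the inner chain yields
\begin{equation*}
\hat H_n-\pi_{\gamma_n,\theta_n}(H_{\theta_n})=\frac{1}{m_n}\Bigl(\mathcal M_{m_n}^{n}+\Kker_{\gamma_n,\theta_n}\hat H_{\gamma_n,\theta_n}(X_0^n)-\Kker_{\gamma_n,\theta_n}\hat H_{\gamma_n,\theta_n}(X_{m_n}^n)\Bigr),
\end{equation*}
where $\mathcal M_{\cdot}^{n}$ is a martingale in $k$. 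The boundary terms are bounded by $(V(X_0^n)+V(X_{m_n}^n))/(m_n\gamma_n)$, and \Cref{assum:condition_majo_V}\ref{assum:condition_majo_V_i} ensures $\sup_n\mathbb E[V(X_{m_n}^n)+V(X_0^n)]<\infty$, so the expected contribution is dominated by $\sum_n\delta_{n+1}/(m_n\gamma_n)<\infty$. The martingale part is controlled by Burkholder, using $\mathbb E[\|\mathcal M_{m_n}^n\|^2]\lesssim m_n\|\hat H_{\gamma_n,\theta_n}\|_V^2\mathbb E[V(X_0^n)]\lesssim m_n/\gamma_n^2$, combined with the square-summable structure $\sum_n\delta_{n+1}^2/(m_n\gamma_n^2)$ which is finite under \eqref{eq:condition_cv} (after Cauchy--Schwarz and the fact that $\delta_n$ is bounded). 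A parallel estimate bounds $\mathbb E[\|\hat H_{k-1}\|^2]$ uniformly, so $\sum_k\delta_k^2\|\hat H_{k-1}\|^2<\infty$ a.s.

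Once both correction sums are shown to be a.s.\ finite with an explicit (random) constant $C$, dividing the regret inequality by $\sum_{k=1}^{N}\delta_k$ and using Jensen's inequality to replace $f(\theta_{k-1})$ by $f(\theta_k)$ after a harmless reindexing (or proving the bound directly for $\theta_k$ by using $\delta_{k+1}$ on the right) delivers part \ref{thm:salem_cv_ii}. For part (a), the same inequality reads as a Robbins--Siegmund relation for $U_n=\|\theta_n-\theta^\star\|^2$ with nonnegative perturbation summable a.s.\ and nonnegative decrement $2\delta_{n+1}(f(\theta_n)-\min_\Theta f)$; this yields a.s.\ convergence of $U_n$ for every minimizer $\theta^\star$ and $\sum_n\delta_{n+1}(f(\theta_n)-\min_\Theta f)<\infty$. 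Since $\sum_n\delta_{n+1}=\infty$ and $f$ is continuous on the compact $\Theta$, there exists a.s.\ a subsequence $\theta_{n_j}$ converging to some $\theta^{\star\star}\in\argmin_\Theta f$; applying the Robbins--Siegmund conclusion with $\theta^\star=\theta^{\star\star}$ forces convergence of the whole sequence to $\theta^{\star\star}$.

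\textbf{Main obstacle.} The delicate point is the simultaneous treatment of the warm-start dependence and the bias blow-up of the Poisson solution: the natural bound $\|\hat H_{\gamma,\theta}\|_V\lesssim\gamma^{-1}$ forces the summability requirements $\sum_n\delta_{n+1}/(m_n\gamma_n)<\infty$ and the matching $L^2$ bound on the martingale, and both must be proven with constants uniform in $\theta$; this is where \Cref{assum:condition_majo_V}\ref{assum:condition_majo_V_i} (the $\theta$-uniform Foster--Lyapunov moment propagation along the nonhomogeneous chain) is indispensable and must be combined carefully with the decomposition above to avoid any circular dependence between the estimates for $V(X_{m_n}^n)$ and the increments $\delta_{n+1}$.
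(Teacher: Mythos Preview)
Your overall architecture is sound and close to the paper's: decompose the error $\eta_n=\hat H_n-\nabla f(\theta_n)$, control the bias part via $\Psibf(\gamma_n)$, control the fluctuation part via a Poisson-equation/martingale argument, then invoke a Robbins--Siegmund/regret inequality (this is essentially \cite[Theorems~2--3]{atchade2017perturbed}, which the paper cites). The bias and boundary pieces are handled correctly.

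The gap is in the martingale variance. Your Poisson solution is built for the one-step kernel $\Kker_{\gamma_n,\theta_n}$, whose geometric rate in \Cref{assum:condition_majo_V}\ref{assum:condition_majo_V_ii} is $\rho^{\gamma_n}$; summing the series gives $\|\hat H_{\gamma_n,\theta_n}\|_{V^{1/2}}\lesssim \gamma_n^{-1}$, hence $\expeLigne{\|m_n^{-1}\mathcal M_{m_n}^n\|^2}\lesssim m_n^{-1}\gamma_n^{-2}$. You then need $\sum_n \delta_{n+1}^2/(m_n\gamma_n^2)<\infty$, and you claim this follows from \eqref{eq:condition_cv} ``after Cauchy--Schwarz''. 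It does not. Take for instance $\delta_n=n^{-1/2}$, $\gamma_n=n^{-3/2}$, $m_n=\lceil n^{2.1}\rceil$, $\Psibf(\gamma)=C\gamma^{1/2}$: then $\delta_{n+1}\Psibf(\gamma_n)\asymp n^{-5/4}$ and $\delta_{n+1}/(m_n\gamma_n)\asymp n^{-1.1}$ are both summable, so \eqref{eq:condition_cv} holds, yet $\delta_{n+1}^2/(m_n\gamma_n^2)\asymp n^{-0.1}$ diverges. So under the hypotheses of the theorem your martingale term is not shown to converge.

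The paper avoids this by a sharper variance estimate (its Lemma on $\eta_n^{(2)}$): instead of applying the Poisson equation to $\Kker_{\gamma_n,\theta_n}$, it groups the inner chain into blocks of length $\lceil 1/\gamma_n\rceil$ and applies the Poisson equation to the skeleton kernel $\Kker_{\gamma_n,\theta_n}^{\lceil 1/\gamma_n\rceil}$, whose contraction rate is $\rho^{\gamma_n\lceil 1/\gamma_n\rceil}\leq\rho$, uniformly in $\gamma_n$. The associated Poisson solution then satisfies $\|\hat H\|_{V^{1/2}}\leq 2A_2(1-\rho^{1/2})^{-1}$, \emph{independent of $\gamma_n$}, and a short computation yields $\expeLigne{\|\eta_n^{(2)}\|^2}\lesssim (m_n\gamma_n)^{-1}+(m_n\gamma_n)^{-2}$. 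Both resulting series $\sum\delta_{n+1}^2/(m_n\gamma_n)$ and $\sum\delta_{n+1}^2/(m_n\gamma_n)^2$ are finite under \eqref{eq:condition_cv} (the first because $\delta_{n+1}$ is bounded, the second because $\sum a_n<\infty$ forces $\sum a_n^2<\infty$). If you want to keep your single-step Poisson route you would have to strengthen the summability hypotheses; to prove the theorem as stated, replace your variance step by the block-Poisson argument.
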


\begin{proof}
  The proof is postponed to \Cref{thm:salem_cv_proof}.
\end{proof}

Note that in \eqref{eq:algo_SOUL}, $X_0^n = X_{m_{n-1}}^{n-1}$ for $n \in \nsets$. This procedure is referred to as  warm-start in the sequel. An inspection of the proof of \Cref{thm:salem_cv} shows that $X_0^n$ could be any random variable independent from $\mathcal{F}_{n-1}$ for any $n \in \N$ with $\sup_{n \in \nsets} \expe{V(X_0^n)} < +\infty$. It is not an option in the fixed batch size setting of  \Cref{thm:salem_cv_fix}, where the warm-start procedure is crucial for the convergence to occur.

We extend this theorem to non convex objective function see \Cref{thm:salem_cv_noncvx} in \Cref{sec:non-convex-objective}. Under the conditions of \Cref{thm:salem_cv} with the additional assumption that $\partial \Theta$ is a smooth manifold we obtain that $(\theta_n)_{n \in \nset}$ converges \as \ to some point $\theta^{*}$ such that $\nabla f(\theta^{*}) + \bfn =0$ with $\bfn = 0$ if $\theta^{*} \in \interior (\Theta)$ and $\bfn \in \mathrm{T}(\theta^{*}, \partial \Theta)^{\perp}$ if $\theta^{*} \in \partial \Theta$, where $\mathrm{T}(\theta, \partial \Theta)$ is the tangent space of $\partial \Theta$ at point $\theta \in \partial \Theta$, see \cite[Chapter 2]{aubin:2000}.

  In the case where $\Kker_{\gamma, \theta} = \Rker_{\gamma, \theta}$ is the Markov kernel associated with the Langevin update \eqref{eq:euler_maruyama_langevin}, under appropriate conditions \Cref{propo:discrete_vs_continuous} shows that for any $\gamma \in\ocint{0,\bgamma}$ with $\bgamma >0$, $\Psibf(\gamma) = \bigO(\gamma^{1/2})$. In that case, assume then that there exist $a,b,c >0$ such that for any $n \in \nsets$, $\delta_n = n^{-a}$, $\gamma_n = n^{-b}$ and $m_n = \ceil{n^c}$ then \eqref{eq:condition_cv} is equivalent to 
  \begin{equation}
    a < 1 \eqsp , \qquad a + b/2 > 1 \eqsp , \qquad a - b + c > 1 \eqsp . \label{eq:cv}
  \end{equation}
  Suppose $a \in \coint{0,1}$ is given, then the previous equation reads
  \begin{equation} b = 2(1-a) + \varsigma_1 \eqsp, \qquad c=3(1-a) + \varsigma_2 \eqsp, \qquad \varsigma_2 > \varsigma_1 >0 \eqsp . \label{eq:rates_n} \end{equation}
 This illustrates a trade-off between the intrinsic inaccuracy of our algorithm through the family of Markov kernels \eqref{eq:algo_SOUL} which do not exactly target $\pi_{\theta}$ and the minimization aim of our scheme.
  Note also that $(\delta_n)_{n \in \N}$ is allowed to be constant. This case yields $\gamma_n = n^{-2-\varsigma_1}$ and $m_n = \ceil{n^{3+\varsigma_2}}$ with $\varsigma_2 > \varsigma_1 >0$. 

  In our next result we derive an non-asymptotic upper-bound of $(\expeLigne{f(\bartheta_n) - \min_{\Theta} f})_{n \in \nset}$.
\begin{theorem}[Increasing batch size 2]
  \label{thm:salem_cv_control}
    Assume \tup{\Cref{assum:theta_compact}}, \tup{\Cref{assum:f_diff}}, \tup{\Cref{assum:grad_expec}} hold and $f$ is convex. Let  $(\gamma_n)_{n \in \nset}$,
   $(\delta_n)_{n \in \nset}$ be sequences of non-increasing positive real numbers and $(m_n)_{n \in \nset}$ be a sequence of positive integers satisfying 
   $\sup_{n \in \nset} \delta_n < 1/\L$, $\sup_{n \in \nset} \gamma_n < \bgamma$. 
  Let $\{(X_k^n)_{k \in \{0,\ldots,m_n\}} \, : \, n \in \nset\}$ be given by \eqref{eq:algo_SOUL}.
 Assume in addition that  \tup{\Cref{assum:condition_majo_V}} is satisfied and that for any $\theta \in \Theta$ and $x \in \rset^d$, $\norm{H_{\theta}(x)}\leq V^{1/2}(x)$. 
   Then, there exists $(E_n)_{n \in \N}$ such that for any $n \in \nsets$
    \begin{equation}
    \expe{  \defEns{\left. \sum_{k=1}^n \delta_k f(\theta_k) \middle/ \sum_{k=1}^n \delta_k \right. } - \min_{\Theta} f  }\leq  \left. E_n \middle/  \left( \sum_{k=1}^n \delta_k \right) \right. \eqsp ,
  \end{equation}
  with for any $n \in \nsets$,
  \begin{multline}
E_n =  2\Rtheta^2 + 2B_1 \Rtheta\expe{V^{1/2}(X_0^0)} \sum_{k=0}^{n-1} \delta_{k+1}/ (m_k \gamma_k) \\ + 2 \Rtheta \sum_{k=0}^{n-1} \delta_{k+1} \Psibf(\gamma_k) +  4B_1^2 \expe{V(X_0^0)} \sum_{k=0}^{n-1} \delta_{k+1}^2/ (m_k \gamma_k)^2   \\ +   4  \sum_{k=0}^{n-1} \delta_{k+1}^2 \Psibf(\gamma_k)^2  + B_2 \sum_{k=0}^{n-1} \delta_{k+1}^2 / (m_k \gamma_k)^2  \eqsp , \numberthis     \label{eq:horrible_bound}
\end{multline}
where $B_1$ and $B_2$ are given in \Cref{lemma:error_bound} and \Cref{lemma:error_variance} respectively.
\end{theorem}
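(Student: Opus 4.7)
The proof will follow the classical projected-stochastic-gradient-descent template for a convex objective, adapted to the doubly biased estimator $\Delta_{\theta_n}$: the chain samples $(X_k^n)_k$ are not drawn from their invariant law $\pi_{\gamma_n,\theta_n}$, and that invariant law itself differs from the target $\pi_{\theta_n}$. Fix $\theta^{\star}\in\argmin_{\Theta} f$. Nonexpansiveness of $\Pi_{\Theta}$ applied to the update \eqref{eq:algo_SOUL} yields
\begin{equation}
\|\theta_{n+1}-\theta^{\star}\|^2 \le \|\theta_n-\theta^{\star}\|^2 - 2\delta_{n+1}\langle \Delta_{\theta_n},\theta_n-\theta^{\star}\rangle + \delta_{n+1}^2\|\Delta_{\theta_n}\|^2 .
\end{equation}
I would then decompose the gradient error as $\Delta_{\theta_n}-\nabla f(\theta_n) = \eta_{n+1}^{(a)}+\eta_{n+1}^{(b)}$ with
\begin{equation}
\eta_{n+1}^{(a)} = \Delta_{\theta_n}-\pi_{\gamma_n,\theta_n}(H_{\theta_n}), \qquad \eta_{n+1}^{(b)} = \pi_{\gamma_n,\theta_n}(H_{\theta_n})-\pi_{\theta_n}(H_{\theta_n}),
\end{equation}
and use the convexity inequality $\langle \nabla f(\theta_n),\theta_n-\theta^{\star}\rangle\ge f(\theta_n)-f(\theta^{\star})$ to arrive at
\begin{equation}
2\delta_{n+1}\bigl(f(\theta_n)-f(\theta^{\star})\bigr) \le \|\theta_n-\theta^{\star}\|^2 - \|\theta_{n+1}-\theta^{\star}\|^2 + \delta_{n+1}^2\|\Delta_{\theta_n}\|^2 - 2\delta_{n+1}\bigl\langle \eta_{n+1}^{(a)}+\eta_{n+1}^{(b)},\theta_n-\theta^{\star}\bigr\rangle .
\end{equation}

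Taking expectations, each non-standard term is controlled separately. Since $\theta_n$ is $\mcf_{n-1}$-measurable and $\|\theta_n-\theta^{\star}\|\le 2M_{\Theta}$, the tower property combined with Cauchy--Schwarz gives $|\mathbb{E}[\langle \eta_{n+1}^{(a)},\theta_n-\theta^{\star}\rangle]| \le 2M_{\Theta}\,\mathbb{E}[\|\mathbb{E}[\eta_{n+1}^{(a)}\mid\mcf_{n-1}]\|]$; the inner norm is estimated by \Cref{lemma:error_bound}, proved via a Poisson equation for $\Kker_{\gamma_n,\theta_n}$ under H1, giving a bound of order $B_1 V^{1/2}(X_0^n)/(m_n\gamma_n)$. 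The stationary-bias cross term uses $\|H_\theta(x)\|\le V^{1/2}(x)$ together with H1(iii) to obtain $\|\eta_{n+1}^{(b)}\|\le\Psibf(\gamma_n)$. For the quadratic term $\delta_{n+1}^2\|\Delta_{\theta_n}\|^2$, I would apply the conditional bias--variance identity $\mathbb{E}[\|\Delta_{\theta_n}\|^2\mid\mcf_{n-1}] = \|\mathbb{E}[\Delta_{\theta_n}\mid\mcf_{n-1}]\|^2 + \mathbb{E}[\|\Delta_{\theta_n}-\mathbb{E}[\Delta_{\theta_n}\mid\mcf_{n-1}]\|^2\mid\mcf_{n-1}]$, expand the conditional mean as $\nabla f(\theta_n)+\mathbb{E}[\eta_{n+1}^{(a)}\mid\mcf_{n-1}]+\eta_{n+1}^{(b)}$, and use $(a+b+c)^2\le 3(a^2+b^2+c^2)$: the squared MCMC-mean and bias contributions then yield the $B_1^2 V(X_0^n)/(m_n\gamma_n)^2$ and $\Psibf(\gamma_n)^2$ ingredients of $E_n$, while the conditional variance is controlled by \Cref{lemma:error_variance}, producing the $B_2/(m_n\gamma_n)^2$ ingredient. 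The residual $\|\nabla f(\theta_n)\|^2$ contribution is absorbed into the left-hand side via the standard convex $L$-smooth inequality $\|\nabla f(\theta_n)\|^2\le 2L(f(\theta_n)-f(\theta^{\star}))$ together with $\delta_{n+1}<1/L$.

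Summing from $k=0$ to $n-1$ telescopes the distance terms, leaving $\|\theta_0-\theta^{\star}\|^2\le (2M_{\Theta})^2 = 4M_{\Theta}^2$, which after division by $2$ produces the leading $2M_{\Theta}^2$ constant in $E_n$; all occurrences of $\mathbb{E}[V(X_0^k)]$ and $\mathbb{E}[V^{1/2}(X_0^k)]$ are reduced to their values at $X_0^0$ using H1(i) and the warm-start identity $X_0^k=X_{m_{k-1}}^{k-1}$, the uniform constant $A_1$ being absorbed into $B_1$ and $B_2$. The main obstacle is the careful treatment of $\eta^{(a)}$: because the chains $(X_k^n)_k$ are non-stationary and the increments are not $\mcf_{n-1}$-conditional martingale differences, one cannot invoke orthogonality of martingale increments and must instead control the conditional bias and the conditional variance separately through two Poisson-equation estimates (\Cref{lemma:error_bound} and \Cref{lemma:error_variance}); the $\gamma$-rate geometric ergodicity in H1(ii) is exactly what produces the characteristic $1/(m_n\gamma_n)$ scaling, rather than the naive $1/\sqrt{m_n}$ of i.i.d.\ averaging, that propagates into every term of $E_n$.
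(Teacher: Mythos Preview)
Your high-level template is sound and matches the paper's in spirit, but there is a genuine gap in the treatment of the quadratic term. Because you start from the naive non-expansiveness inequality, your right-hand side carries $\delta_{n+1}^2\|\Delta_{\theta_n}\|^2$, and you then need to eliminate the contribution $\|\nabla f(\theta_n)\|^2$. The inequality you invoke, $\|\nabla f(\theta_n)\|^2\le 2L_f\bigl(f(\theta_n)-f(\theta^{\star})\bigr)$, is \emph{false} for constrained minimisation: $\theta^{\star}\in\argmin_{\Theta}f$ may sit on $\partial\Theta$ with $\nabla f(\theta^{\star})\neq 0$, so at $\theta_n=\theta^{\star}$ the left side is positive while the right side vanishes. (A one-line counterexample: $f(\theta)=\theta+\theta^2/2$ on $\Theta=[0,1]$.) The absorption step therefore does not go through under the stated assumptions, and the exact $E_n$ in the statement cannot be reached this way; at best you could bound $\|\nabla f\|$ on $\Theta$ by compactness, which adds an extra $C\sum_k\delta_{k+1}^2$ term that is not in~\eqref{eq:horrible_bound}.

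The paper avoids this altogether by invoking \cite[Theorem~3]{atchade2017perturbed}, which yields the one-step inequality
\[
\sum_{k=1}^n \delta_{k}\bigl(f(\theta_{k})-\min_{\Theta}f\bigr)\le \tfrac12\|\theta_0-\theta^{\star}\|^2-\sum_{k=0}^{n-1}\delta_{k+1}\bigl\langle \Pi_{\Theta}(\theta_k-\delta_{k+1}\nabla f(\theta_k))-\theta^{\star},\eta_k\bigr\rangle+\sum_{k=0}^{n-1}\delta_{k+1}^2\|\eta_k\|^2,
\]
where $\eta_k=\Delta_{\theta_k}-\nabla f(\theta_k)$. The crucial feature is that the quadratic term involves only $\|\eta_k\|^2$, not $\|\Delta_{\theta_k}\|^2$: this is where the hypothesis $\delta_n<1/L_f$ is actually used, via a three-point projection identity that cancels the $\|\Delta_{\theta_k}\|^2$ contributions. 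The cross term then features the $\mcf_{k-1}$-measurable anchor $\Pi_{\Theta}(\theta_k-\delta_{k+1}\nabla f(\theta_k))-\theta^{\star}$ (bounded by $2M_\Theta$), and the paper splits $\eta_k=\eta_k^{(1)}+\eta_k^{(2)}$ with $\eta_k^{(1)}=\CPE{\eta_k}{\mcf_{k-1}}$ and $\eta_k^{(2)}$ a martingale increment, so that the $\eta^{(2)}$ cross term vanishes exactly in expectation. Taking expectation, applying \Cref{lemma:error_bound} to $\mathbb{E}\|\eta_k^{(1)}\|$, $\mathbb{E}\|\eta_k^{(1)}\|^2$ and \Cref{lemma:error_variance} to $\mathbb{E}\|\eta_k^{(2)}\|^2$ gives precisely the constants in~\eqref{eq:horrible_bound}. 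Two small side remarks: (i) your $(\eta^{(a)},\eta^{(b)})$ split is related but not the same as the paper's $(\eta^{(1)},\eta^{(2)})$, since $\eta^{(a)}$ is not a martingale increment; your tower-property handling is fine, but the paper's split is what yields the exact coefficients; (ii) \Cref{lemma:error_bound} does not use a Poisson equation --- it follows directly from the $V$-geometric ergodicity in \Cref{assum:condition_majo_V}-\ref{assum:condition_majo_V_ii}; the Poisson equation enters only in \Cref{lemma:error_variance}.
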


\begin{proof}
  The proof is postponed to \Cref{thm:salem_cv_control_proof}.
\end{proof}

 We recall that in the case where $\Kker_{\gamma, \theta} = \Rker_{\gamma, \theta}$ is the Markov kernel associated with the Langevin update \eqref{eq:euler_maruyama_langevin}, under appropriate conditions \Cref{propo:discrete_vs_continuous} shows that for any $\gamma \in\ocint{0,\bgamma}$ with $\bgamma >0$, $\Psibf(\gamma) = \bigO(\gamma^{1/2})$. 
In that case, if there exist $a, b,c \geq 0$ such that for any $n \in \nsets$, $\delta_n = n^{-a}$, $\gamma_n = n^{-b}$, $m_n = n^c$ and \eqref{eq:cv} holds, 
the accuracy, respectively the complexity, of the algorithm are of orders  $\left(\sum_{k=1}^n \delta_k\right)^{-1} = \bigO(n^{a-1})$, respectively $\sum_{k=0}^n m_k = \bigO(n^{3(1-a) + \varsigma_2 + 1})$ for $\varsigma_2 >0$. Thus, for a fix target precision $\vareps >0$, it requires that $\varespilon = \bigO(n^{a-1})$ and  the complexity reads $ \bigO(\vareps^{-3} \parenthese{\log(1/\vareps)/(1-a)}^{1+ \varsigma_2})$. On the other hand, if we fix the complexity budget to $N$ the accuracy is of order $\bigO(N^{-(3 + (1 + \varsigma_2)/(1-a))^{-1}})$. These two considerations suggest to set $a$ close to $0$. In the special case where $a = 0$, we obtain that the accuracy is of order $\bigO(n^{-1})$, which is similar to the order identified in the deterministic gradient descent for convex functionals.

A case of interest is the fix stepsize setting, \ie \ for all $n \in \N, \ \gamma_n = \gamma >0$. Assume that $ (\delta_n)_{ n \in \N}$ is non-increasing $\lim_{n \to +\infty} \delta_n = 0$ and $\lim_{n \to +\infty} m_n = +\infty$. In addition, assume that $\sum_{n \in \nsets} \delta_n = +\infty$ then, by  \cite[Problem 80, Part I]{polya1998problem}, it holds that
\begin{equation}
  \begin{cases}
    &\lim_{n \to +\infty} \parentheseDeux{\left . \left(\sum_{k=1}^n \delta_{k} / m_k\right) \middle/ \left(\sum_{k=1}^n \delta_k\right) \right.} = \lim_{n \to +\infty} 1/m_n = 0  \eqsp ; \\ &
    \lim_{n \to +\infty} \parentheseDeux{\left. \left(\sum_{k=1}^n \delta_{k}^2\right) \middle / \left(\sum_{k=1}^n \delta_k\right) \right.}= \lim_{n \to +\infty} \delta_n = 0 \eqsp  .
    \end{cases}
\end{equation}
Therefore, we obtain that
 \begin{equation}
 \limsup_{n \to \plusinfty} \expe{ \defEns{\left . \sum_{k=1}^n \delta_k f(\theta_k) \middle/ \sum_{k=1}^n \delta_k \right.} - \min f} \leq  2 \Rtheta  \Psibf(\gamma)  \eqsp .
\end{equation}
Similarly, if the stepsize is fixed and the number of Markov chain iterates is fixed, \ie \ for all $n \in \nset$, $\gamma_n=\gamma$ and $m_n = m$ with $\gamma >0$ and $m \in \nsets$, we obtain that 
\begin{equation}
  \label{eq:borne_fix}
  \limsup_{n \to \plusinfty} \expe{ \defEns{\left . \sum_{k=1}^n \delta_k f(\theta_k) \middle/ \sum_{k=1}^n \delta_k \right.} - \min f} \leq  \Xibf_1(\gamma)  \eqsp ,
\end{equation}
with
  \begin{equation}
    \label{eq:U1}
    \Xibf_1(\gamma) = 2B_1\Rtheta \expe{V^{1/2}(X_0^0)} /\gamma + 2 \Rtheta  \Psibf(\gamma)  \eqsp .
  \end{equation}
However if $(m_n)_{n \in \N}$ is constant the convergence cannot be obtained using \Cref{thm:salem_cv}.
Strengthening the conditions of \Cref{thm:salem_cv} and making use of the warm-start property of the algorithm
  we can derive the convergence in that case. 


We now are interested in the case where the batch size is fixed, \ie \ $m_n = m_0$ for all $n \in \nset$. For ease of exposition we only consider $m_0 = 1$ and let $\tX_{n+1} = X_1^n$ for any $n \in \nset$. However the general case can be adapted from the proof of the result stated below. 
More precisely we consider the setting where the recursion \eqref{eq:algo_SOUL} can be written for any $n \in \nset$ as 
\begin{equation}
   \label{eq:algo_x_fix}  
\begin{aligned}
  & \X_{n+1} \text{ has distribution } \Kker_{\gamma_n, \ttheta_n}(\X_n, \cdot) \text{ conditionally to } \F_{n} \eqsp , \\
  &\ttheta_{n+1} = \Pi_{\Theta}\parentheseDeux{\ttheta_n - \delta_{n+1} H_{\ttheta_n}(\X_{n+1})} \eqsp , 
\end{aligned}
\end{equation}
with $\theta_0 \in \Theta$, $\tX_0 \in \rset^d$ and where  $\F_n$ is given by
\begin{equation}
  \label{eq:def_tmcf}
  \F_n = \sigma \left( \ttheta_0, (\X_{\ell})_{\ell \in \{0,\ldots,n\}} \right) \eqsp.
\end{equation}
We consider the following assumption on the family $\ensemble{H_{\theta}}{\theta \in \Theta}$.
\begin{assumption}
  \label{assum:H_lip}
 There exists $L_H \geq 0$ such that for any $x \in \rset^d$ and $\theta_1, \theta_2 \in \Theta$,
  \begin{equation}
    \| H_{\theta_1}(x) - H_{\theta_2}(x) \| \leq L_H \| \theta_1 - \theta_2\| V^{1/2}(x) \eqsp .
  \end{equation}
\end{assumption}
We consider a similar property as \Cref{assum:H_lip} on the family of Markov kernels $\defEns{ \Kker_{\gamma, \theta}, \gamma \in\ocint{0,\bgamma}, \theta \in \Theta }$, which weakens the assumption \cite[H6]{atchade2017perturbed}.
\begin{assumptionH}
  \label{assum:condition_kernel_fix}
  There exist a measurable function $V : \rset^d \to \coint{1,\plusinfty}$, 
    $\Lambdabf_1: \left( \rset_+^* \right)^2 \to \rset_+$ and $\Lambdabf_2: \left( \rset_+^* \right)^2 \to \rset_+$ such that for any $\gamma_1,\gamma_2 \in\ocint{0,\bgamma}$ with $\gamma_2 < \gamma_1$, $\theta_1,\theta_2 \in \Theta$, $x \in \rset^d$
  and $a \in \ccint{1/4, 1/2}$
        \begin{equation}     
     \Vnorm[V^{a}]{\updelta_x \Kker_{\gamma_1, \theta_1} - \updelta_x \Kker_{\gamma_2, \theta_2} } \leq \parentheseDeux{ \Lambdabf_1(\gamma_1, \gamma_2)  + \Lambdabf_2(\gamma_1, \gamma_2)\| \theta_1 - \theta_2 \| } V^{2a}(x) \eqsp .
   \end{equation}
\end{assumptionH}



The following theorem ensures convergence properties for $(\theta_n)_{n \in \nset}$ 
similar to the ones of \Cref{thm:salem_cv}.
The proof of this result is based on  a generalization of \cite[Lemma 4.2]{fort2011convergence} for inexact MCMC schemes.
\begin{theorem}[Fixed batch size 1]
  \label{thm:salem_cv_fix}
  Assume \tup{\Cref{assum:theta_compact}}, \tup{\Cref{assum:f_diff}}, \tup{\Cref{assum:grad_expec}}, \tup{\Cref{assum:H_lip}} hold and $f$ is convex.
  Let $\bgamma>0$, $(\gamma_n)_{n \in \nset}$ and
   $(\delta_n)_{n \in \nset}$ be sequences of non-increasing positive real numbers satisfying 
   $\sup_{n \in \nset} \delta_n < 1/\L$, $\sup_{n \in \nset} \gamma_n < \bgamma$,   $ \sup_{n \in \N} |\delta_{n+1} - \delta_n | \delta_n^{-2} < +\infty$,  $\sum_{n=0}^{+\infty} \delta_{n+1} = +\infty$ and
  \begin{equation}
    \label{eq:condition_cv_fix_1}
    \begin{aligned}
      & \sum_{n=0}^{+\infty} \delta_{n+1} \Psibf(\gamma_n) < +\infty \eqsp , \qquad \sum_{n=0}^{+\infty} \delta_{n+1}^2\gamma_n^{-2}  < +\infty\eqsp , \\
      & \sum_{n=0}^{+\infty} \delta_{n+1} \gamma_{n+1}^{-2} \parentheseDeux{ \Lambdabf_1(\gamma_n, \gamma_{n+1}) +  \delta_{n+1} \Lambdabf_2(\gamma_n, \gamma_{n+1})}  < +\infty\eqsp .
    \end{aligned}
  \end{equation}
  Let $(\X_n)_{n \in \N}$ be given by \eqref{eq:algo_x_fix}.
    Assume in addition that  \tup{\Cref{assum:condition_majo_V}} and 
      \tup{\Cref{assum:condition_kernel_fix}} are satisfied and that for any $\theta \in \Theta$ and $x \in \rset^d$, $\norm{H_{\theta}(x)} \leq V^{1/4}(x)$. 
  Then the following statements hold:
  \begin{enumerate}[label=(\alph*)]
  \item   $(\ttheta_n)_{n \in \N}$ 
    converges \as~to some $\theta^{\star} \in \argmin_{\Theta} f$ ;
  \item furthermore, \as~there exists $C\geq0$ such that for any $n \in \nsets$
  \begin{equation}
    \defEns{\left. \sum_{k=1}^n \delta_k f(\ttheta_k) \middle/ \sum_{k=1}^n \delta_k \right. } - \min_{\Theta} f \leq \left. C \middle/\left( \sum_{k=1}^n \delta_k \right) \right.  \eqsp.
  \end{equation}  
\end{enumerate}

\end{theorem}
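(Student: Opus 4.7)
The plan is to adapt the standard projected stochastic gradient descent analysis for convex objectives, with the main work going into controlling the non-martingale bias coming from using a single step of an inexact kernel at each iteration. Starting from \eqref{eq:algo_x_fix}, I would introduce the error
\begin{equation*}
  e_{n+1} = H_{\ttheta_n}(\X_{n+1}) - \nabla f(\ttheta_n)\eqsp,
\end{equation*}
and use the non-expansiveness of $\Pi_{\Theta}$, \Cref{assum:f_diff}, $\sup_n \delta_n < 1/\L$ and convexity of $f$ to derive a one-step inequality of the form
\begin{equation*}
  \|\ttheta_{n+1}-\theta^{\star}\|^2 \leq \|\ttheta_n-\theta^{\star}\|^2 -\delta_{n+1}\{f(\ttheta_n)-f(\theta^{\star})\} - 2\delta_{n+1}\ps{e_{n+1}}{\ttheta_n-\theta^{\star}} + C\delta_{n+1}^2 \|e_{n+1}\|^2
\end{equation*}
for some constant $C$. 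Summing this telescopically and dividing by $\sum_{k=1}^n \delta_k$ will give the desired bound provided $\sum \delta_{n+1}\ps{e_{n+1}}{\ttheta_n-\theta^{\star}}$ converges almost surely and $\sum \delta_{n+1}^2\|e_{n+1}\|^2 < \infty$ almost surely (the latter is controlled by \Cref{assum:condition_majo_V}\ref{assum:condition_majo_V_i} together with $\sum \delta_{n+1}^2\gamma_n^{-2}<\infty$ and $\|H_\theta\|\leq V^{1/4}$).

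The key and most delicate step is controlling $S_n = \sum_{k=0}^{n-1}\delta_{k+1}\ps{e_{k+1}}{\ttheta_k-\theta^{\star}}$ with only $m_n = 1$ sample and a Markov chain that is \emph{not} restarted at stationarity. Following \cite{fort2011convergence}, I would split
\begin{equation*}
  e_{n+1} = \underbrace{\{H_{\ttheta_n}(\X_{n+1})-\pi_{\gamma_n,\ttheta_n}(H_{\ttheta_n})\}}_{\text{centered error w.r.t.\ } \pi_{\gamma_n,\ttheta_n}} + \underbrace{\{\pi_{\gamma_n,\ttheta_n}(H_{\ttheta_n}) - \pi_{\ttheta_n}(H_{\ttheta_n})\}}_{\text{bias}}\eqsp.
\end{equation*}
The bias is bounded in $V^{1/2}$-norm by $\Psibf(\gamma_n)$ using \Cref{assum:condition_majo_V}\ref{assum:condition_majo_V_iii}, and summability follows from $\sum\delta_{n+1}\Psibf(\gamma_n)<\infty$ and $\Theta \subset \cball{0}{\Rtheta}$. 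For the centered part, I would introduce the Poisson solution
\begin{equation*}
  \hat H_{\gamma,\theta} = \sum_{k\geq 0}\Kker_{\gamma,\theta}^k\{H_\theta - \pi_{\gamma,\theta}(H_\theta)\}\eqsp,
\end{equation*}
whose existence and $V^{1/2}$-bound of order $\gamma^{-1}$ (uniformly in $\theta$) follow from the geometric ergodicity in \Cref{assum:condition_majo_V}\ref{assum:condition_majo_V_ii} (the $\gamma^{-1}$ factor coming from $\log(\rho^\gamma)^{-1} \sim \gamma^{-1}$).

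With the Poisson decomposition $H_{\ttheta_n}-\pi_{\gamma_n,\ttheta_n}(H_{\ttheta_n}) = \hat H_{\gamma_n,\ttheta_n}-\Kker_{\gamma_n,\ttheta_n}\hat H_{\gamma_n,\ttheta_n}$ and the identity
\begin{equation*}
\hat H_{\gamma_n,\ttheta_n}(\X_{n+1}) - \Kker_{\gamma_n,\ttheta_n}\hat H_{\gamma_n,\ttheta_n}(\X_n) = \underbrace{\hat H_{\gamma_n,\ttheta_n}(\X_{n+1}) - \Kker_{\gamma_n,\ttheta_n}\hat H_{\gamma_n,\ttheta_n}(\X_n)}_{\text{$\F_{n+1}$-martingale increment}} + \{\Kker_{\gamma_n,\ttheta_n}\hat H_{\gamma_n,\ttheta_n}(\X_n) - \Kker_{\gamma_n,\ttheta_n}\hat H_{\gamma_n,\ttheta_n}(\X_{n+1})\}\eqsp,
\end{equation*}
the sum $S_n$ splits via Abel summation into (i) a square-integrable martingale whose bracket is summable thanks to $\sum\delta_{n+1}^2\gamma_n^{-2}<\infty$, (ii) a residual telescoping term handled by \Cref{assum:condition_majo_V}\ref{assum:condition_majo_V_i}, and (iii) increments of the form
\begin{equation*}
  \delta_{k+2}\gamma_{k+1}^{-1}\{\Kker_{\gamma_{k+1},\ttheta_{k+1}}\hat H_{\gamma_{k+1},\ttheta_{k+1}} - \Kker_{\gamma_k,\ttheta_k}\hat H_{\gamma_k,\ttheta_k}\}(\X_{k+1})\eqsp,
\end{equation*}
which are controlled using \Cref{assum:H_lip}, \Cref{assum:condition_kernel_fix} and $\|\ttheta_{k+1}-\ttheta_k\|\leq C\delta_{k+1}V^{1/4}(\X_{k+1})$: their summability is exactly the content of the third part of \eqref{eq:condition_cv_fix_1}. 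Once $S_n$ is shown to converge almost surely, an application of the Robbins--Siegmund lemma to the one-step inequality yields (b) with an explicit almost sure constant, and (a) follows by noting that $f(\ttheta_n)-\min_\Theta f \to 0$ along a subsequence, combined with compactness of $\Theta$ and uniqueness of accumulation points through continuity of $f$.

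The main obstacle is obtaining tight enough uniform-in-$\theta$ bounds on $\hat H_{\gamma,\theta}$ and its increments $\hat H_{\gamma_1,\theta_1}-\hat H_{\gamma_2,\theta_2}$; this is precisely where \Cref{assum:condition_kernel_fix} (together with the fact that the Poisson series picks up a factor $\gamma^{-1}$) enters, and where the warm-start $X_0^n = X_{m_{n-1}}^{n-1}$ becomes essential, since a fresh start would produce an extra telescoping term at each iteration that is not summable without growing batches. I expect the bookkeeping in the Abel summation and in tracking the $V^{a}$-norm with $a\in[1/4,1/2]$ to be the most delicate part of the write-up.
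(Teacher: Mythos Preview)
Your plan is essentially the paper's own proof: the same Poisson-equation decomposition of the error into a martingale increment, a telescoping term handled by Abel summation, a kernel-difference term controlled via \Cref{assum:condition_kernel_fix}, and the bias term controlled via \Cref{assum:condition_majo_V}\ref{assum:condition_majo_V_iii}, followed by an appeal to the Atchad\'e--Fort--Moulines framework. The only cosmetic slips are that the displayed identity should have $\Kker_{\gamma_n,\ttheta_n}\hat H_{\gamma_n,\ttheta_n}(\X_{n+1})$ rather than $\Kker_{\gamma_n,\ttheta_n}\hat H_{\gamma_n,\ttheta_n}(\X_n)$ on the left-hand side, and that in the paper the inner product is taken with $\Pi_\Theta(\ttheta_k-\delta_{k+1}\nabla f(\ttheta_k))-\theta^\star$ rather than $\ttheta_k-\theta^\star$, which is where the hypothesis $\sup_n|\delta_{n+1}-\delta_n|\delta_n^{-2}<\infty$ actually enters the Abel summation.
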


\begin{proof}
  The proof is postponed to \Cref{thm:salem_cv_fix_proof}.
\end{proof}

In the case where $\Kker_{\gamma, \theta} = \Rker_{\gamma, \theta}$ is the Markov kernel associated with the Langevin update \eqref{eq:euler_maruyama_langevin}, under appropriate conditions \Cref{propo:discrete_vs_continuous} and \Cref{lem:v_kernel_error} show that for any $\gamma_1, \gamma_2 \in\ocint{0,\bgamma}$ with $\bgamma >0$ and $\gamma_1 > \gamma_2$, $\Psibf(\gamma_1) = \mathrm{C}_1 \gamma^{1/2}$, $\Lambdabf_1(\gamma_1, \gamma_2) =  \mathrm{C}_2(\gamma_1/\gamma_2-1)$ and $\Lambdabf_2(\gamma_1, \gamma_2) = \mathrm{C}_3 \gamma_2^{1/2}$, for $\mathrm{C}_1,\mathrm{C}_2,\mathrm{C}_3 \geq 0$. Thus we obtain that the following series should converge
\begin{equation}
  \begin{aligned}
    &\sum_{n=0}^{+\infty} \delta_{n+1}\gamma_n^{1/2} < +\infty \eqsp , \qquad  \sum_{n=0}^{+\infty} \delta_{n+1}^2 / \gamma_{n+1}^2  < +\infty \eqsp , \\
    & \sum_{n=0}^{+\infty} \delta_{n+1} (\gamma_{n} - \gamma_{n+1}) / \gamma_{n+1}^3  < +\infty \eqsp . \end{aligned} \label{eq:cv_fixed_1}
\end{equation}
If there exist $a, b >0$ such that $\delta_n = n^{-a}$ and $\gamma_n = n^{-b}$, then \eqref{eq:cv_fixed_1} is satisfied if $b \in \ooint{2(1-a), a-1/2}$ which is not empty if $a>5/6$.


\begin{theorem}[Fixed batch size 2]
  \label{thm:salem_cv_control_fix}
    Assume \tup{\Cref{assum:theta_compact}}, \tup{\Cref{assum:f_diff}}, \tup{\Cref{assum:grad_expec}}, \tup{\Cref{assum:H_lip}} hold and $f$ is convex. Let  $(\gamma_n)_{n \in \nset}$,
   $(\delta_n)_{n \in \nset}$ be sequences of non-increasing positive real numbers and $(m_n)_{n \in \nset}$ be a sequence of positive integers satisfying 
   $\sup_{n \in \nset} \delta_n < 1/\L$ and $\sup_{n \in \nset} \gamma_n < \bgamma$. 
 Let $(\X_n)_{n \in \N}$ be given by \eqref{eq:algo_x_fix}.
 Assume in addition that  \tup{\Cref{assum:condition_majo_V}} and \tup{\Cref{assum:condition_kernel_fix}} are satisfied and that for any $\theta \in \Theta$ and $x \in \rset^d$, $\norm{H_{\theta}(x)} \leq V^{1/4}(x)$. 
   Then, there exists $(\tE_n)_{n \in \N}$ such that for any $n \in \nsets$
    \begin{equation}
    \expe{  \defEns{\left. \sum_{k=1}^n \delta_k f(\theta_k) \middle/ \sum_{k=1}^n \delta_k \right. } - \min_{\Theta} f  }\leq  \left. \tE_n \middle/  \left( \sum_{k=1}^n \delta_k \right) \right. \eqsp ,
  \end{equation}
  with for any $n \in \nsets$,
  \begin{align}
    \label{eq:horrible_bound_not}
    \tE_n &= 2 \Rtheta +2\Rtheta \sum_{k=0}^n \delta_{k+1} \Psibf(\gamma_k) + C_3 \sum_{k=0}^n \abs{\delta_{k+1} - \delta_k}\gamma_k^{-1} \\ &+2 \Rtheta C_2 \sum_{k=0}^n \delta_{k+1} \gamma_{k+1}^{-1} \parentheseDeux{\gamma_{k+1}^{-1}\defEns{ \Lambdabf_1(\gamma_{k}, \gamma_{k+1})  + \Lambdabf_2(\gamma_{k}, \gamma_{k+1})\delta_{k+1}} + \delta_{k+1}}  \\ &+ C_3 \sum_{k=0}^n \delta_{k+1}^2\gamma_{k+1}^{-1} + C_3 (\delta_{n+1} / \gamma_n - \delta_0 / \gamma_0) + C_1 \sum_{k=0}^n \delta_{k+1}^2 \eqsp .
  \end{align}
where $C_1$, $C_2$ and $C_3$ are given in \Cref{lemma:cv_sq_fix}, \Cref{lemma:cv_norm_c_d} and \Cref{lemma:cv_norm_b} respectively.
\end{theorem}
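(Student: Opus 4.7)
The plan is to mirror the analysis that yields \Cref{thm:salem_cv_control} but in the fixed batch size setting, where we can no longer exploit independence within a batch and must instead track dependencies between $\ttheta_n$ and $\X_{n+1}$ using a Poisson equation. First, starting from the recursion \eqref{eq:algo_x_fix} and using the non-expansiveness of $\Pi_\Theta$, for any $\theta^\star \in \argmin_\Theta f$ I would expand
\begin{equation*}
\norm[2]{\ttheta_{n+1} - \theta^\star} \leq \norm[2]{\ttheta_n - \theta^\star} - 2\delta_{n+1} \ps{H_{\ttheta_n}(\X_{n+1})}{\ttheta_n - \theta^\star} + \delta_{n+1}^2 \norm[2]{H_{\ttheta_n}(\X_{n+1})} \eqsp .
\end{equation*}
Adding and subtracting $\nabla f(\ttheta_n) = \pi_{\ttheta_n}(H_{\ttheta_n})$ inside the inner product and using convexity of $f$ gives $\ps{\nabla f(\ttheta_n)}{\ttheta_n - \theta^\star} \geq f(\ttheta_n) - f(\theta^\star)$, so that summation with weights $\delta_{k+1}$ produces a telescoping bound of the form
\begin{equation*}
\sum_{k=0}^{n-1} \delta_{k+1} (f(\ttheta_k) - f(\theta^\star)) \leq \tfrac{1}{2}\norm[2]{\ttheta_0 - \theta^\star} + \underbrace{\sum_{k=0}^{n-1} \delta_{k+1} \ps{\pi_{\ttheta_k}(H_{\ttheta_k}) - H_{\ttheta_k}(\X_{k+1})}{\ttheta_k - \theta^\star}}_{\mathrm{(I)}} + \tfrac{1}{2}\sum_{k=0}^{n-1} \delta_{k+1}^2 \expe{\norm[2]{H_{\ttheta_k}(\X_{k+1})}} \eqsp .
\end{equation*}
The final term is controlled by $\sum \delta_{k+1}^2$ using $\norm{H_\theta(x)} \leq V^{1/4}(x)$ and \tup{\Cref{assum:condition_majo_V}}\ref{assum:condition_majo_V_i}, which should match the $C_1 \sum \delta_{k+1}^2$ appearing in the bound via the constant from \Cref{lemma:cv_sq_fix}.

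The core of the proof is the control of $\mathrm{(I)}$, and the main obstacle is that $\X_{k+1}$ is not drawn from the target $\pi_{\ttheta_k}$ but from $\Kker_{\gamma_k, \ttheta_k}(\X_k, \cdot)$, with $\X_k$ itself carrying all the history. The strategy is a two-level decomposition: write $\pi_{\ttheta_k}(H_{\ttheta_k}) - H_{\ttheta_k}(\X_{k+1}) = \{\pi_{\ttheta_k}(H_{\ttheta_k}) - \pi_{\gamma_k, \ttheta_k}(H_{\ttheta_k})\} + \{\pi_{\gamma_k, \ttheta_k}(H_{\ttheta_k}) - H_{\ttheta_k}(\X_{k+1})\}$. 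The first bracket is a pure bias term, directly bounded by $\Psibf(\gamma_k) \pi_{\ttheta_k}(V^{1/2})$ through \tup{\Cref{assum:condition_majo_V}}\ref{assum:condition_majo_V_iii}, and contributes the $2\Rtheta \sum \delta_{k+1}\Psibf(\gamma_k)$ term. The second bracket is handled by introducing the Poisson solution $\poissp{\gamma_k}{\ttheta_k}$ of $(\Id - \Kker_{\gamma_k, \ttheta_k}) \poissp{\gamma_k}{\ttheta_k} = H_{\ttheta_k} - \pi_{\gamma_k, \ttheta_k}(H_{\ttheta_k})$, whose existence and $V^{1/2}$-norm bound of order $\gamma_k^{-1}$ follow from the geometric ergodicity in \tup{\Cref{assum:condition_majo_V}}\ref{assum:condition_majo_V_ii}.

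Once the Poisson equation is in hand, I would decompose the summand as
\begin{equation*}
\pi_{\gamma_k, \ttheta_k}(H_{\ttheta_k}) - H_{\ttheta_k}(\X_{k+1}) = \{\Kker_{\gamma_k, \ttheta_k} \poissp{\gamma_k}{\ttheta_k}(\X_k) - \poissp{\gamma_k}{\ttheta_k}(\X_{k+1})\} + \{\poissp{\gamma_k}{\ttheta_k}(\X_k) - \Kker_{\gamma_k, \ttheta_k} \poissp{\gamma_k}{\ttheta_k}(\X_k)\} - \{\poissp{\gamma_k}{\ttheta_k}(\X_k) - H_{\ttheta_k}(\X_{k+1})\} \eqsp ,
\end{equation*}
and combine with $\ttheta_k - \theta^\star$. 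Summing by parts (Abel summation) transfers differences of $\delta_{k+1} \poissp{\gamma_k}{\ttheta_k}(\X_k) (\ttheta_k - \theta^\star)$ across consecutive indices, producing three controllable families: a martingale-increment family (conditional expectation zero, bounded in $L^2$ by $V^{1/2}$ and yielding the $\delta_{k+1}^2 \gamma_{k+1}^{-1}$ contributions via \Cref{lemma:cv_sq_fix}); a family involving $\abs{\delta_{k+1}-\delta_k}\gamma_k^{-1}$ coming from the variation of the stepsize (\Cref{lemma:cv_norm_b}); and a family involving $\delta_{k+1}\gamma_{k+1}^{-1}\{\gamma_{k+1}^{-1}[\Lambdabf_1(\gamma_k,\gamma_{k+1}) + \delta_{k+1}\Lambdabf_2(\gamma_k,\gamma_{k+1})] + \delta_{k+1}\}$ capturing how much $\poissp{\gamma_{k}}{\ttheta_{k}}$ differs from $\poissp{\gamma_{k+1}}{\ttheta_{k+1}}$ at $\X_{k+1}$, which is precisely where \tup{\Cref{assum:condition_kernel_fix}} and \tup{\Cref{assum:H_lip}} enter (\Cref{lemma:cv_norm_c_d}). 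A telescoping boundary term $\delta_{n+1}/\gamma_n - \delta_0/\gamma_0$ also appears naturally from the Abel rearrangement.

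The hard part will be the regularity analysis of the Poisson solution across iterations: obtaining the bound
$\Vnorm[V^{1/2}]{\poissp{\gamma_1}{\theta_1} - \poissp{\gamma_2}{\theta_2}} \lesssim \gamma^{-1}[\Lambdabf_1(\gamma_1,\gamma_2) + L_H\norm{\theta_1-\theta_2}]$ requires carefully combining \tup{\Cref{assum:condition_majo_V}}\ref{assum:condition_majo_V_ii} with \tup{\Cref{assum:condition_kernel_fix}} and \tup{\Cref{assum:H_lip}} via the resolvent identity $\poissp{\gamma_1}{\theta_1} - \poissp{\gamma_2}{\theta_2} = \sum_{j\geq 0} \Kker_{\gamma_1,\theta_1}^j [(\Kker_{\gamma_1,\theta_1} - \Kker_{\gamma_2,\theta_2})\poissp{\gamma_2}{\theta_2}] + \sum_{j\geq 0} \Kker_{\gamma_1,\theta_1}^j[(H_{\theta_1} - H_{\theta_2}) - (\pi_{\gamma_1,\theta_1} - \pi_{\gamma_2,\theta_2})(H_{\theta_1})]$, which must be handled under the $V^{1/2}$-norm and uniformly in $\theta$. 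Assuming \Cref{lemma:cv_sq_fix}, \Cref{lemma:cv_norm_c_d}, \Cref{lemma:cv_norm_b} encapsulate these estimates, the theorem then follows from taking expectations in the initial telescoping inequality, dividing by $\sum_{k=1}^n \delta_k$, and collecting the constants $C_1, C_2, C_3$.
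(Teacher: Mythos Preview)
Your approach is essentially the paper's: start from the deterministic bound \eqref{eq:majo_error} (or your equivalent expansion), split the error $\teta_k$ into a bias piece $\tetad{k}$ controlled by $\Psibf(\gamma_k)$ and a centred piece handled via the Poisson solution $\hat H_{\gamma_k,\ttheta_k}$, then decompose the latter into a martingale increment $\tetaa{k}$, a telescoping term $\tetab{k}$ treated by Abel summation (\Cref{lemma:cv_norm_b}), and a Poisson-regularity term $\tetac{k}$ (\Cref{lemma:cv_norm_c_d}); finally bound $\sum\delta_{k+1}^2\norm{\teta_k}^2$ by \Cref{lemma:cv_sq_fix}. Two small corrections: the explicit three-bracket identity you wrote for $\pi_{\gamma_k,\ttheta_k}(H_{\ttheta_k})-H_{\ttheta_k}(\X_{k+1})$ does not add up (the right-hand side collapses to $H_{\ttheta_k}(\X_{k+1})-\hat H_{\gamma_k,\ttheta_k}(\X_{k+1})$, not $\Kker_{\gamma_k,\ttheta_k}\hat H_{\gamma_k,\ttheta_k}(\X_{k+1})-\hat H_{\gamma_k,\ttheta_k}(\X_{k+1})$); the correct version is the paper's \eqref{eq:tetan_def_2}, and the $C_3\sum\delta_{k+1}^2\gamma_{k+1}^{-1}$ contribution comes from the Abel-summation term $\tetab{k}$ via \Cref{lemma:cv_norm_b}, not from the martingale piece (which vanishes in expectation).
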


\begin{proof}
  The proof is postponed to \Cref{thm:salem_cv_control_fix_proof}.
\end{proof}

\Cref{thm:salem_cv_control_fix} improves the conclusions of \Cref{thm:salem_cv_control} in the case where $\gamma_n = \gamma >0$ for any $n \in\nset$. Indeed, in that case, similarly to \eqref{eq:borne_fix}, assuming that $\lim_{n \to +\infty} \delta_n =0$, $\sup_{n \in \nset} \abs{\delta_{n+1} - \delta_n} \delta_n^{-2} < +\infty$, $\Lambdabf_1(t,t) = 0$ for any $t >0$, we obtain that 
for all $n \in \nset$
\begin{equation}  
\limsup_{n \to \plusinfty} \expe{ \defEns{\left . \sum_{k=1}^n \delta_k f(\theta_k) \middle/ \sum_{k=1}^n \delta_k \right.} - \min f} \leq \Xibf_2(\gamma) \eqsp ,
\end{equation}
with $\Xibf_2(\gamma) = 2 \Rtheta \Psi(\gamma) \leq \Xibf_1(\gamma) = 2B_1\Rtheta \expe{V^{1/2}(X_0^0)} /\gamma + 2 \Rtheta  \Psibf(\gamma)$.
In the case where $\sup_{\gamma \in \ocint{0, \bgamma}} \Psibf(\gamma) < +\infty$, $\Xibf_2(\gamma)$ is of order $\bigO(\Psibf(\gamma))$ and $\Xibf_1(\gamma)$ is of order $\bigO(\gamma^{-1})$. Therefore if $\lim_{\gamma \to 0} \Psibf(\gamma) = 0$, even in the fixed batch size setting, the minimum of the objective function $f$ can be approached with arbitrary precision $\vareps >0$ by choosing $\gamma$ small enough. 

\subsection{Application to SOUL}
\label{sec:soul-method}

We now apply our results to the SOUL methodology introduced in 
\Cref{sec:sto_optim_Langevin} where the Markov kernel $\Rker_{\gamma, \theta}$ with $\gamma \in \ocint{0, \bgamma}$ and $\theta \in \Theta$ is given by a Langevin Markov kernel and 
associated with recursion \eqref{eq:euler_maruyama_langevin}.
Setting for any $\theta \in \Theta$, $\pi_{\theta} = p(\cdot |y, \theta)$, we consider the following assumption on the family of probability distributions $(\pi_{\theta})_{\theta \in \Theta}$.
\begin{assumptionL}
  \label{assum:glip}
  For any $\theta \in \Theta$, there exists $U_{\theta}: \rset^d \to \rset$ such that $\pi_{\theta}$ admits a probability density function \wrt \ to the Lebesgue measure proportional to $x \mapsto \exp(-U_{\theta}(x))$.
  In addition $(\theta, x) \mapsto U_{\theta}(x)$ is continuous, $x \mapsto U_{\theta}(x)$ is differentiable for all $\theta \in \Theta$ and there exists $\LUx \geq 0$ such that for any $x,y \in \rset^d$,
  \begin{equation}
    \sup_{\theta \in \Theta}       \norm{\nabla_x U_{\theta}(x) -\nabla_x U_{\theta}(y)} \leq \LUx \norm{x-y} \eqsp ,
  \end{equation}
  and $\ensemble{\| \nabla_x U_{\theta}(0) \|}{\theta \in \Theta}$ is bounded.
\end{assumptionL}

In the case where $\Kker_{\gamma, \theta} = \Rker_{\gamma, \theta}$ for any $\gamma \in \ocint{0, \bgamma}$ and $\theta \in \Theta$, the first line of \eqref{eq:algo_SOUL} can be rewritten for any $n \in \N$ and $k \in \lbrace 0, \dots, m_n -1 \rbrace$
\begin{equation}
  X_{k+1}^n = X_k^n - \gamma_n \nabla_x U_{\theta_n}(X_k^n) + \sqrt{2\gamma_n} Z_{k+1}^n  \eqsp, \text{ with  $X_0^n = X_{m_{n-1}}^{n-1}$ if $n \geq 1$ }\eqsp,  \label{eq:langevin_discrete} 
\end{equation}
given $(\gamma_n)_{n \in \N} \in \ocint{0,\bgamma}^{\nset}$, $(m_n)_{n \in \N} \in \left(\nsets\right)^{\N}$ and $(Z_k^n)_{n \in \N, k \in \lbrace 1, \dots, m_n \rbrace}$ a family of i.i.d  $d$-dimensional zero-mean Gaussian random variables with covariance matrix identity. In the following propositions, we show that the results above hold by deriving sufficient conditions under which \tup{\Cref{assum:condition_majo_V}} and \tup{\Cref{assum:condition_kernel_fix}} are satisfied. 

Under \Cref{assum:glip}, the Langevin diffusion defined by \eqref{eq:langevin}
admits a unique strong solution for any $\theta \in \Theta$. Consider now the following additional tail condition on $U_{\theta}$ which ensures geometric ergodicity of $\Rker_{\gamma, \theta}$ for any  $\theta \in \Theta$ and $\gamma \in\ocint{0,\bgamma}$, with $\bgamma$ which will be specified below.
\begin{assumptionL}
  \label{assum:curvature}
  There exist $\mttdeux >0$ and $\mtttrois,\ccur, \Rdeux \geq 0$ such that  for any $\theta \in \Theta$ and $x \in \rset^d$,
  \[ \langle \nabla_xU_{\theta}(x), x \rangle \geq\mttdeux \| x \| \1_{\boule{0}{\Rdeux}^{\complementary}}(x) + \mtttrois \| \nabla_x U_{\theta} (x) \|^2 - \ccur\eqsp.\]
\end{assumptionL}

 \begin{assumptionL}
  \label{assum:glip_theta}
There exists $L_{U} \geq 0$ such that for any $x\in \rset^d$ and $\theta_1, \theta_2 \in \Theta$
  \begin{equation}
  \norm{\nabla_x U_{\theta_1}(x) -\nabla_x U_{\theta_2}(x)} \leq L_U \| \theta_1 - \theta_2 \| V(x)^{1/2} \eqsp.
  \end{equation}
\end{assumptionL}

The next theorems assert that under \Cref{assum:glip}, \Cref{assum:curvature} and \Cref{assum:glip_theta} the SOUL algorithm introduced in \Cref{sec:sto_optim_Langevin} satisfy \Cref{assum:condition_majo_V} and \Cref{assum:condition_kernel_fix} and therefore \Cref{thm:salem_cv}, \Cref{thm:salem_cv_control}, \Cref{thm:salem_cv_fix} and \Cref{thm:salem_cv_control_fix}
can be applied if in addition \Cref{assum:theta_compact}, \Cref{assum:f_diff}, \Cref{assum:grad_expec} and \Cref{assum:H_lip} hold.

 Under \tup{\Cref{assum:curvature}} define for any $x \in \rset^d$
\begin{equation}\Ve(x) = \exp\parentheseDeux{\mttdeux\sqrt{1 + \norm{x}^2}/4} \eqsp . \end{equation}

 \begin{theorem}
   \label{thm:salem_langevin_cv}
   Assume \tup{\Cref{assum:glip}} and \tup{\Cref{assum:curvature}}. 
Then
,  \tup{\Cref{assum:condition_majo_V}} holds with $V \leftarrow \Ve$, $\bgamma \leftarrow \min(1,2\mtttrois)$ and $\Psibf(\gamma) = D_4 \sqrt{\gamma}$ where $D_4$ is given in \Cref{propo:discrete_vs_continuous}.
   
 \end{theorem}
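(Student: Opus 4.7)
The plan is to verify each of the three conditions of \Cref{assum:condition_majo_V} for the ULA Markov kernel $\Rker_{\gamma, \theta}$ associated with \eqref{eq:langevin_discrete} and the Lyapunov function $\Ve(x) = \exp(\mttdeux \sqrt{1+\norm{x}^2}/4)$.

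The heart of the argument is to establish a uniform geometric drift inequality of the form
\begin{equation*}
\Rker_{\gamma, \theta} \Ve(x) \leq \kappa^{\gamma} \Ve(x) + \mttb \gamma \1_{\ball{0}{\mathrm{R}}}(x) \eqsp ,
\end{equation*}
valid for every $\gamma \in \ocint{0,\bgamma}$ and $\theta \in \Theta$, with $\bgamma = \min(1, 2\mtttrois)$ and constants $\kappa \in \ooint{0,1}, \mttb, \mathrm{R}>0$ depending only on $\mttdeux, \mtttrois, \ccur, \LUx, \Rdeux$ and $\sup_{\theta \in \Theta} \norm{\nabla_x U_{\theta}(0)}$. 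To obtain it I would expand $\sqrt{1+\norm{Y}^2} - \sqrt{1+\norm{x}^2}$ for $Y = x - \gamma \nabla_x U_{\theta}(x) + \sqrt{2\gamma} Z$, by combining the identity $\sqrt{1+\norm{Y}^2} - \sqrt{1+\norm{x}^2} = (2 \ps{x}{Y-x} + \norm{Y-x}^2)/(\sqrt{1+\norm{Y}^2}+\sqrt{1+\norm{x}^2})$ with the $1$-Lipschitz property of $t \mapsto \sqrt{1+t^2}$. The dissipativity in \Cref{assum:curvature} provides $\ps{\nabla_x U_\theta(x)}{x} \geq \mttdeux \norm{x} \1_{\ball{0}{\Rdeux}^{\complementary}}(x) + \mtttrois \norm{\nabla_x U_\theta(x)}^2 - \ccur$, which makes the deterministic contribution strictly negative in $\norm{x}$ outside a compact set; the constraint $\gamma \leq 2\mtttrois$ ensures the term $\norm{Y-x}^2$ is absorbed by the $\mtttrois \norm{\nabla_x U_\theta}^2$ term after taking expectations. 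The stochastic part is handled via Gaussian exponential moments together with the Lipschitz bound \Cref{assum:glip}, yielding the drift.

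From this drift inequality, \Cref{assum:condition_majo_V_i} follows by iteration: since the drift holds uniformly in $(\theta, \gamma)$, the composition $\Kker_{\gamma_n, \theta_n}^p$ satisfies $\Kker_{\gamma_n, \theta_n}^p \Ve(x) \leq \mtta_1 \Ve(x)$ with $\mtta_1$ independent of $n, p$, and the condition $\expe{\Ve(X_0^0)} < \infty$ is imposed on the chosen initial distribution. For \Cref{assum:condition_majo_V_ii}, I would combine the drift with a uniform minorisation on the level set $\ball{0}{\mathrm{R}}$, which holds because the Gaussian transition density of ULA is bounded below on compacts uniformly in $\gamma \in \ocint{0,\bgamma}$ and in $\theta \in \Theta$ thanks to the boundedness of $\ensemble{\nabla_x U_\theta(0)}{\theta \in \Theta}$ and \Cref{assum:glip}; a quantitative Meyn--Tweedie / Hairer--Mattingly-style $V$-geometric ergodicity theorem then delivers $\Vnorm[\Ve]{\updelta_x \Rker_{\gamma, \theta}^n - \pi_{\gamma, \theta}} \leq \mtta_2 \rho^{n\gamma} \Ve(x)$ and $\pi_{\gamma, \theta}(\Ve) \leq \mtta_3$ with constants uniform in $\theta$ and $\gamma$. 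Finally, \Cref{assum:condition_majo_V_iii} is precisely the statement of \Cref{propo:discrete_vs_continuous}, from which we read off $\Psibf(\gamma) = D_4 \sqrt{\gamma}$.

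The main obstacle is the drift calculation itself: because $\Ve$ grows exponentially, standard second-order Taylor expansion is insufficient, and one must control the exact ratio $\expe{\Ve(Y)/\Ve(x)}$ by carefully balancing the negative contribution from the deterministic gradient step against the positive contribution of the Gaussian exponential moments, all while maintaining uniformity in $\theta \in \Theta$. The subsidiary difficulty is tracking the uniformity of the geometric rate $\rho$ in the minorisation/drift argument, so that the small set and the associated coupling constants are the same for every $\theta$; this is where the compactness of $\Theta$ in \Cref{assum:theta_compact} and the continuity of $(\theta,x) \mapsto U_\theta(x)$ in \Cref{assum:glip} are essential.
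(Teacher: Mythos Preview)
Your overall strategy matches the paper's: establish a uniform discrete Foster--Lyapunov drift for $\Rker_{\gamma,\theta}$ with Lyapunov function $\Ve$, derive \Cref{assum:condition_majo_V}-\ref{assum:condition_majo_V_i} by iterating it (this is \Cref{lemma:v_norm_soul}), combine drift and a uniform small-set minorisation to get the $V$-geometric ergodicity of \Cref{assum:condition_majo_V}-\ref{assum:condition_majo_V_ii}, and read off \Cref{assum:condition_majo_V}-\ref{assum:condition_majo_V_iii} from \Cref{propo:discrete_vs_continuous}.

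Two points of comparison and one gap are worth flagging. First, the drift computation in the paper (\Cref{propo:lyap_condition}) avoids your direct expansion of $\Ve(Y)/\Ve(x)$ altogether: since $\phi(x)=\sqrt{1+\norm{x}^2}$ is $1$-Lipschitz, the Gaussian log-Sobolev/Herbst argument gives $\Rker_{\gamma,\theta}\Ve(x)\leq \exp\{\tmttdeux\, \Rker_{\gamma,\theta}\phi(x)+\tmttdeux^2\gamma\}$, and then Jensen reduces everything to the purely deterministic quantity $\sqrt{1+\norm{x-\gamma\nabla_x U_\theta(x)}^2+2\gamma d}$, where \Cref{assum:curvature} plugs in directly. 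Your route via the ratio identity for $\phi(Y)-\phi(x)$ and Gaussian exponential moments is in principle workable, but the random denominator $\phi(Y)+\phi(x)$ and the $\chi^2$-type moment from $\norm{Z}^2$ make the bookkeeping noticeably heavier; the log-Sobolev trick is what makes the paper's proof short. Second, the uniformity in $\theta$ does not require the compactness assumption \Cref{assum:theta_compact}: it is already built into the uniform constants of \Cref{assum:glip} and \Cref{assum:curvature}, and the theorem only assumes those.

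The one genuine omission is in your treatment of \Cref{assum:condition_majo_V}-\ref{assum:condition_majo_V_iii}. \Cref{propo:discrete_vs_continuous} needs, besides the discrete drift, a bound $\pi_\theta(\Ve)\leq D_0$ uniform in $\theta$. This does not follow from the discrete drift for $\Rker_{\gamma,\theta}$; the paper obtains it from a separate \emph{continuous} Foster--Lyapunov drift for the Langevin generator $\generator_\theta$ (\Cref{propo:unif_ergo_c}) combined with \Cref{lemma:V_control}-\ref{item_V_control_ii}. You should add this ingredient to your sketch.
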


 \begin{proof}
   The proof is postponed to \Cref{sec:proofs-crefs-meth}.
 \end{proof}

 \begin{theorem}
   \label{thm:salem_langevin_cv_fixed}
   Assume \tup{\Cref{assum:glip}}, \tup{\Cref{assum:curvature}}, \tup{\Cref{assum:glip_theta}} and that for any $\theta \in \Theta$ and $x \in \rset^d$, $\norm{H_{\theta}(x)} \leq \Ve^{1/4}(x)$.
    \tup{\Cref{assum:condition_kernel_fix}} 
    holds with $V \leftarrow \Ve$ and  $\bgamma \leftarrow \min(1,2\mtttrois)$ and for any $\gamma_1,\gamma \in \ocint{0,\bgamma}$, $\gamma_2 < \gamma_1$,
    \begin{equation}
        \Lambdabf_1(\gamma_1, \gamma_2) = D_5 (\gamma_1/\gamma_2-1) \eqsp , \quad
        \Lambdabf_2(\gamma_1, \gamma_2) = D_5 \gamma_2^{1/2} \eqsp ,
    \end{equation}
     where $D_5$ is given in \Cref{lem:v_kernel_error} in \Cref{sec:proof-crefthm:s-1}. 
 \end{theorem}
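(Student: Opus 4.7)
The plan is to exploit the fact that under \tup{\Cref{assum:glip}}, for any $\gamma \in \ocint{0,\bgamma}$, $\theta \in \Theta$ and $x \in \rset^d$, the measure $\updelta_x \Kker_{\gamma,\theta}$ is explicitly the Gaussian $\loiGauss(x - \gamma \nabla_x U_\theta(x), 2\gamma \Idd_d)$. I would first insert the intermediate kernel $\Kker_{\gamma_2, \theta_1}$ to decompose, via the triangle inequality,
\begin{equation*}
\Vnorm[V^a]{\updelta_x \Kker_{\gamma_1,\theta_1} - \updelta_x \Kker_{\gamma_2,\theta_2}} \leq \Vnorm[V^a]{\updelta_x \Kker_{\gamma_1,\theta_1} - \updelta_x \Kker_{\gamma_2,\theta_1}} + \Vnorm[V^a]{\updelta_x \Kker_{\gamma_2,\theta_1} - \updelta_x \Kker_{\gamma_2,\theta_2}} \eqsp .
\end{equation*}
The first term then only reflects the change of step size and will produce $\Lambdabf_1(\gamma_1,\gamma_2)$, while the second only depends on the parameter increment and will yield $\Lambdabf_2(\gamma_1,\gamma_2) \| \theta_1 - \theta_2 \|$.

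For the second term, both Gaussians share the covariance $2 \gamma_2 \Idd_d$ and differ only through their means $x - \gamma_2 \nabla_x U_{\theta_i}(x)$. I would bound the $V^a$-norm by writing the Radon–Nikodym derivative of one Gaussian against the other explicitly, then applying Cauchy–Schwarz together with a Pinsker-type inequality. This reduces the problem to controlling $\| \nabla_x U_{\theta_1}(x) - \nabla_x U_{\theta_2}(x) \| / \sqrt{\gamma_2}$, and \tup{\Cref{assum:glip_theta}} gives the bound $L_U \| \theta_1 - \theta_2 \| V^{1/2}(x) / \sqrt{\gamma_2}$. Combined with a Gaussian integral estimate that controls $\int V^a \rmd(\updelta_x \Kker_{\gamma_2,\theta_1})$ by a constant multiple of $V^{2a}(x)$ (using the explicit exponential form $V_{\rme}(x) = \exp(\mttdeux \sqrt{1+\| x\|^2}/4)$ of \tup{\Cref{thm:salem_langevin_cv}}), this produces the $D_5 \gamma_2^{1/2}$ factor.

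For the first term, both kernels share $\theta_1$ but differ in both covariance and mean. Writing the densities against the Lebesgue measure, the log-ratio splits naturally into a pure covariance contribution of order $(\gamma_2^{-1} - \gamma_1^{-1})$ and a mean-mismatch contribution proportional to $|\gamma_1 - \gamma_2|^2 \| \nabla_x U_{\theta_1}(x)\|^2 / \gamma_2$. Applying \tup{\Cref{assum:glip}} (through the linear growth $\| \nabla_x U_\theta(x)\| \leq \LUx \| x \| + \sup_\theta \| \nabla_x U_\theta(0)\|$, which is itself dominated by $V^{1/2}(x)$) and Pinsker's inequality yields a control of the form $\gamma_2^{-1}(\gamma_2^{-1} - \gamma_1^{-1})^{1/2} V^{2a}(x)$ from the covariance piece and $\gamma_2^{-1/2} | \gamma_1 - \gamma_2 | V^{2a}(x)$ from the mean piece, exactly matching the stated $\Lambdabf_1(\gamma_1,\gamma_2)$.

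The main obstacle is not the two KL-type computations themselves but the careful weighting by $V^a$: one must pass from a total-variation bound to a $V^a$-norm bound while preserving the correct powers of $\gamma$ and keeping the right-hand side dominated by $V^{2a}(x)$ uniformly in $\theta \in \Theta$ and $a \in [1/4,1/2]$. The key technical lemma here is a uniform Gaussian-moment estimate $\int V^a(y) \rmd \loiGauss(x - \gamma \nabla_x U_\theta(x), 2\gamma \Idd_d)(y) \leq C V^{2a}(x)$, which can be obtained by the same drift-type argument that underlies the verification of \tup{\Cref{assum:condition_majo_V}}-\ref{assum:condition_majo_V_ii} in \tup{\Cref{thm:salem_langevin_cv}}, and which will supply the universal constant $D_5$ of the statement.
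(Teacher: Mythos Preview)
Your proposal is correct and follows essentially the same route as the paper. The paper's \Cref{lem:v_kernel_error} also recognises that $\updelta_x \Rker_{\gamma,\theta}$ is an explicit Gaussian and uses the same two ingredients you identify: a weighted Pinsker-type inequality $\Vnorm[V^a]{\mu-\nu}^2 \leq 2(\mu(V^{2a})+\nu(V^{2a}))\,\KL{\mu}{\nu}$ together with the drift bound $\Rker_{\gamma,\theta} V \leq M V$ (which supplies exactly your Gaussian-moment estimate $\int V^a \,\rmd(\updelta_x \Rker_{\gamma,\theta}) \leq C V^{2a}(x)$ via \Cref{propo:lyap_condition}), followed by the explicit KL formula between two Gaussians.

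The only organisational difference is that the paper does \emph{not} split at the $V^a$-norm level via the intermediate kernel $\Kker_{\gamma_2,\theta_1}$; instead it applies Pinsker once to the pair $(\gamma_1,\theta_1)$, $(\gamma_2,\theta_2)$ directly and then decomposes the mean discrepancy $\|\gamma_1 \nabla_x U_{\theta_1}(x) - \gamma_2 \nabla_x U_{\theta_2}(x)\|^2$ inside the KL by inserting $\gamma_2 \nabla_x U_{\theta_1}(x)$. Your triangle-inequality split achieves the same separation into a $\gamma$-variation piece and a $\theta$-variation piece, at the cost of one extra Pinsker application but with slightly cleaner bookkeeping. Either way the constants assemble into the stated $D_5$ and the powers of $\gamma_1,\gamma_2$ match.
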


 \begin{proof}
   The proof is postponed to \Cref{sec:proof-crefthm:s-1}.
 \end{proof}




\bibliographystyle{plainnat}
\bibliography{main.bbl}

\appendix
\section{Posterior convexity}
\label{sec:posterior}

\begin{lemma}
  \label{lemma:prekoppa}
For any $y \in \{ 0, 1 \}^{d_y}$, $\theta \mapsto p(y |\theta)$ given by \eqref{EQ: bayes-logit-marg-likelihood} is log-concave.
\end{lemma}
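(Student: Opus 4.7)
The plan is to apply Prékopa's theorem on the preservation of log-concavity under marginalization: if $F : \rset^{d_\theta} \times \rset^d \to \rset_+$ is jointly log-concave, then $\theta \mapsto \int_{\rset^d} F(\theta, \beta)\, \rmd\beta$ is log-concave. Accordingly, setting
\begin{equation}
F(\theta, \beta) = \left\{\prod_{i=1}^{d_y} s(v_i^{\transpose}\beta)^{y_i}(1-s(v_i^{\transpose}\beta))^{1-y_i}\right\} \exp\!\left(-\frac{\|\beta - \theta \Unbf_d\|^2}{2\sigma^2}\right),
\end{equation}
it suffices to check that $(\theta,\beta) \mapsto \log F(\theta,\beta)$ is concave on $\rset \times \rset^d$.

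The log of the integrand splits into two pieces. First, using $s(u) = \rme^u/(1+\rme^u)$, for each $i$ we have $y_i \log s(v_i^{\transpose}\beta) + (1-y_i)\log(1-s(v_i^{\transpose}\beta)) = y_i v_i^{\transpose}\beta - \log(1+\rme^{v_i^{\transpose}\beta})$. Since $u \mapsto \log(1+\rme^u)$ is convex on $\rset$ (its second derivative equals $s(u)(1-s(u)) \geq 0$), the map $\beta \mapsto -\log(1+\rme^{v_i^{\transpose}\beta})$ is concave as the composition of a concave function with an affine map; the linear part $y_i v_i^{\transpose}\beta$ is trivially concave. These contributions do not depend on $\theta$, so they are also jointly concave in $(\theta,\beta)$. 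Second, $(\theta,\beta) \mapsto -\|\beta - \theta\Unbf_d\|^2/(2\sigma^2)$ is a negative semidefinite quadratic form in $(\theta,\beta)$, hence jointly concave. Summing concave functions yields the joint concavity of $\log F$.

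Applying Prékopa's theorem to $F$ then gives that $\theta \mapsto \int_{\rset^d} F(\theta, \beta)\, \rmd\beta$ is log-concave on $\rset$, and since this integral equals $(2\uppi\sigma^2)^{d/2} p(y|\theta)$ up to a positive multiplicative constant that is independent of $\theta$, the claim follows. I do not anticipate any real obstacle here: all ingredients are standard convex-analytic facts, and the only thing to be careful about is confirming joint concavity in $(\theta,\beta)$ rather than concavity in each variable separately, which is where the quadratic coupling term is handled cleanly because the Hessian of $\|\beta-\theta\Unbf_d\|^2$ is positive semidefinite on $\rset^{1+d}$.
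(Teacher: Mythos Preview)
Your proof is correct and follows essentially the same route as the paper: establish joint log-concavity of the integrand in $(\theta,\beta)$ and then invoke Pr\'ekopa's theorem (the paper cites the Pr\'ekopa--Leindler inequality). The only cosmetic difference is that the paper argues concavity of $t\mapsto\log s(t)$ and $t\mapsto\log s(-t)$ directly, whereas you first simplify each summand to $y_i v_i^{\transpose}\beta - \log(1+\rme^{v_i^{\transpose}\beta})$; both verifications are equivalent.
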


\begin{proof}
  Let $\theta \in \rset$, then by \eqref{EQ: bayes-logit-marg-likelihood}, for any $y \in \rset$ we have $p(y|\theta) = \int_{\rset^d} p(y, \beta | \theta) \rmd \beta$ with
  \begin{equation}
    p(y, \beta | \theta) = (2\uppi\sigma^2)^{-d/2}\defEns{\prod_{i=1}^{{d_y}} s(x_i^{\transpose} \beta)^{y_{i}}(1-s(x_i^{\transpose} \beta))^{1-y_{i}}} \rme^{-\frac{\norm[2]{\beta - \theta \Unbf_d}}{2 \sigma^2}} \eqsp .
  \end{equation}
  Therefore we have using that for any $t \in \rset$, $1 - s(t) = s(-t)$
  \begin{multline}
    \log p(y, \beta | \theta) = (-d/2)\log(2\uppi\sigma^2) \\ + \defEns{\sum_{i=1}^{{d_y}} y_i \log(s(x_i^{\transpose} \beta))  + (1- y_i) \log(s(-x_i^{\transpose} \beta))} -\frac{\norm[2]{\beta - \theta \Unbf_d}}{2 \sigma^2} \eqsp .
  \end{multline}
  Since $y_i \geq 0$, $1-y_i \geq 0$, $(\beta, \theta) \mapsto \norm[2]{\beta - \theta \Unbf_d}$, $t \mapsto \log(s(t))$ and $t \mapsto \log(s(-t))$ are convex, we obtain that $(\beta, \theta) \mapsto p(y, \beta | \theta)$ is log-concave. Using the Prékopa–Leindler inequality \cite[Theorem 7.1]{gardner2002brunn}  we obtain that $\theta \mapsto p(y | \theta)$ is log-concave which concludes the proof.
\end{proof}

\section{Non-convex objective function}
\label{sec:non-convex-objective}

In this section we turn to the case where $f$ is non-convex. We recall that the normal space of a sub-manifold $\mathcal{M} \subset \R^{d_{\Theta}}$ at point $\theta$ is given by
 \begin{equation}
   \mathrm{N}(\theta,\mathcal{M}) = \left\lbrace \begin{aligned}
       & \mathrm{T}(\theta,\mathcal{M})^{\perp}  \quad \text{if } \theta \in \mathcal{M} \eqsp ; \\
       & \lbrace 0 \rbrace \quad \text{otherwise} \eqsp ,
     \end{aligned}
     \right. \label{eq:normal_plan}
 \end{equation}
where $\mathrm{T}(\theta,\mathcal{M})$ is the tangent space of the sub-manifold $\mathcal{M}$ at point $x$, see \cite{aubin:2000}.

 \begin{theorem}
   \label{thm:salem_cv_noncvx}
     Assume \tup{\Cref{assum:theta_compact}}, \tup{\Cref{assum:f_diff}}, \tup{\Cref{assum:grad_expec}} and that $\Theta$ is a $d_{\Theta}$ dimensional connected differentiable manifold with boundary and continuously differentiable outer normal. 
    Let $\bgamma>0$, $(\gamma_n)_{n \in \nset}$,
    $(\delta_n)_{n \in \nset}$ be sequences of non-increasing positive real numbers and $(m_n)_{n \in \nset}$ be a sequence of positive integers such that
    $\sup_{n \in \nset} \delta_n < 1/\L$, $\sup_{n \in \nset} \gamma_n < \bgamma$ and \eqref{eq:condition_cv} are satisfied.   Let $\{(X_k^n)_{k \in \{0,\ldots,m_n\}} \, : \, n \in \nset\}$ be given by \eqref{eq:algo_SOUL}. Assume in addition that  \tup{\Cref{assum:condition_majo_V}} is satisfied.
Then  $(\theta_n)_{n \in \N}$ defined by \eqref{eq:algo_SOUL} converges \as~to some $\theta^{\star} \in \lbrace \theta \in \Theta \, : \,  \ \nabla f(\theta) + \bfn = 0, \  \bfn  \in \mathrm{N}(\theta, \partial \Theta) \rbrace$.
\end{theorem}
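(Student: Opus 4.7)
The plan is to cast the recursion \eqref{eq:algo_SOUL} into the perturbed stochastic approximation framework on a constrained domain, and then invoke the projected ODE convergence theorem of Kushner and Yin. Rewrite the iteration as
\begin{equation*}
\theta_{n+1} = \Pi_{\Theta}\bigl[\theta_n - \delta_{n+1}(\nabla f(\theta_n) + \varepsilon_{n+1})\bigr] \eqsp,
\end{equation*}
where $\varepsilon_{n+1} = m_n^{-1}\sum_{k=1}^{m_n} H_{\theta_n}(X_k^n) - \nabla f(\theta_n)$ is the total perturbation. Once I establish that $\sum_{n} \delta_{n+1}\varepsilon_{n+1}$ converges almost surely, the standard ODE theorem \cite[Theorem~5.2.1]{kushner2003stochastic} applies to yield convergence of $(\theta_n)$ to a zero of the projected mean field $-\nabla f$ on $\Theta$, because $\Theta$ is compact by \tup{\Cref{assum:theta_compact}}, $\partial\Theta$ is a smooth manifold with continuously differentiable outer normal by assumption, and $-\nabla f$ is continuous by \tup{\Cref{assum:f_diff}}. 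The stationary set is exactly the characterization in the statement.

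To control $\varepsilon_{n+1}$, split it as $\varepsilon_{n+1} = A_n + B_n$ with
\begin{equation*}
A_n = m_n^{-1}\textstyle\sum_{k=1}^{m_n} H_{\theta_n}(X_k^n) - \pi_{\gamma_n,\theta_n}(H_{\theta_n}) \eqsp, \quad B_n = \pi_{\gamma_n,\theta_n}(H_{\theta_n}) - \pi_{\theta_n}(H_{\theta_n}) \eqsp.
\end{equation*}
Using the growth bound $\norm{H_\theta(x)} \leq V^{1/2}(x)$ and \tup{\Cref{assum:condition_majo_V}}\ref{assum:condition_majo_V_iii}, I obtain $\norm{B_n} \leq \Psibf(\gamma_n)$ uniformly in $\theta_n$, so the second summability in \eqref{eq:condition_cv} yields $\sum_n \delta_{n+1}\norm{B_n} < +\infty$ a.s. For $A_n$, use the Poisson-equation apparatus: by the uniform geometric ergodicity \tup{\Cref{assum:condition_majo_V}}\ref{assum:condition_majo_V_ii}, the solution
\begin{equation*}
\hat H_{\gamma,\theta} = \textstyle\sum_{k\geq 0}\bigl(\Kker_{\gamma,\theta}^k H_\theta - \pi_{\gamma,\theta}(H_\theta)\bigr)
\end{equation*}
is well-defined with a $V^{1/2}$-norm bounded by $C/\gamma$ (the $1/\gamma$ factor comes from the rate $\rho^{n\gamma}$) and satisfies $\hat H_{\gamma,\theta} - \Kker_{\gamma,\theta}\hat H_{\gamma,\theta} = H_\theta - \pi_{\gamma,\theta}(H_\theta)$. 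Combining this identity with the Markov property delivers the standard decomposition
\begin{equation*}
m_n A_n = \hat H_{\gamma_n,\theta_n}(X_0^n) - \Kker_{\gamma_n,\theta_n}\hat H_{\gamma_n,\theta_n}(X_{m_n}^n) + \mathcal{M}_n \eqsp,
\end{equation*}
where $\mathcal{M}_n = \sum_{k=1}^{m_n}[\hat H_{\gamma_n,\theta_n}(X_k^n) - \Kker_{\gamma_n,\theta_n}\hat H_{\gamma_n,\theta_n}(X_{k-1}^n)]$ is a sum of martingale increments adapted to the within-chain filtration, legitimate because $\theta_n,\gamma_n$ are $\mcf_{n-1}$-measurable.

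Each of the three resulting pieces is controlled by one of the conditions in \eqref{eq:condition_cv}. The boundary term is dominated by $(C/\gamma_n)(V^{1/2}(X_0^n) + V^{1/2}(X_{m_n}^n))$, whose conditional expectation is uniformly bounded in $n$ via \tup{\Cref{assum:condition_majo_V}}\ref{assum:condition_majo_V_i}, so the third condition in \eqref{eq:condition_cv} furnishes $L^1$-summability and hence a.s. summability of the associated series. For the martingale sum, a conditional Burkholder-Davis-Gundy estimate gives $\PE[\norm{\mathcal{M}_n}^2 \mid \mathcal{F}_{n-1}] \leq C m_n/\gamma_n^2 \cdot \PE[V(X_0^n)\mid\mathcal{F}_{n-1}]$; Cauchy-Schwarz applied to $\sum_n \delta_{n+1}^2/(m_n\gamma_n^2)$ combined with the third summability of \eqref{eq:condition_cv} and Doob's martingale convergence theorem yield almost-sure convergence of $\sum_n \delta_{n+1}\mathcal{M}_n/m_n$. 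Aggregating, $\sum_n \delta_{n+1}\varepsilon_{n+1}$ converges a.s., which is the hypothesis fed into the Kushner--Yin theorem.

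The main obstacle is the uniform-in-$\theta$ construction and sharp $V^{1/2}$-control of the Poisson solution $\hat H_{\gamma,\theta}$ as a function of the stepsize $\gamma$; this dependence is precisely what forces the factor $1/\gamma_n$ in the denominators of \eqref{eq:condition_cv}. The warm-start $X_0^n = X_{m_{n-1}}^{n-1}$ and the change of kernel between outer iterations are delicate because the filtration and the Poisson solution both evolve, but they are handled exactly as in \cite{fort2011convergence, atchade2017perturbed} using the $\mcf_{n-1}$-measurability of $(\theta_n,\gamma_n,m_n)$ to preserve the conditional martingale structure.
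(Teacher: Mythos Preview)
Your high-level framework is right and matches the paper: cast the recursion as perturbed projected gradient, control the noise, invoke Kushner--Yin. But the martingale step has a gap that condition \eqref{eq:condition_cv} does not close.

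With the one-step Poisson solution $\hat H_{\gamma_n,\theta_n}$ you correctly get $\Vnorm[V^{1/2}]{\hat H_{\gamma_n,\theta_n}} \leq C/\gamma_n$, because the ergodicity rate is $\rho^{k\gamma_n}$. Each martingale increment then has second moment $O(\gamma_n^{-2})$ in $V$-expectation, so $\PE[\norm{\mathcal{M}_n}^2\mid\mcf_{n-1}] \leq C m_n/\gamma_n^2$, exactly as you wrote. For the outer martingale $\sum_n \delta_{n+1}\mathcal{M}_n/m_n$ to converge a.s.\ you therefore need $\sum_n \delta_{n+1}^2/(m_n\gamma_n^2) < +\infty$. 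This is \emph{not} implied by \eqref{eq:condition_cv}, and no Cauchy--Schwarz rearrangement rescues it. A concrete counterexample with $\Psibf(\gamma)=C\gamma^{1/2}$: take $\delta_n = n^{-1/2}$, $\gamma_n = n^{-1.1}$, $m_n = \lceil n^{1.7}\rceil$. Then $\sum \delta_{n+1}\Psibf(\gamma_n) \asymp \sum n^{-1.05}<+\infty$ and $\sum \delta_{n+1}/(m_n\gamma_n) \asymp \sum n^{-1.1}<+\infty$, so \eqref{eq:condition_cv} holds, yet $\sum \delta_{n+1}^2/(m_n\gamma_n^2) \asymp \sum n^{-0.5}=+\infty$.

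The paper avoids this by a different decomposition (see \Cref{lemma:error_bound} and \Cref{lemma:error_variance}). The bias term is simply $\eta_n^{(1)}=\CPE{\eta_n}{\mcf_{n-1}}$, bounded directly via geometric ergodicity plus $\Psibf(\gamma_n)$; no Poisson equation is used there. The martingale increment is $\eta_n^{(2)}=\eta_n-\eta_n^{(1)}$ at the \emph{outer} filtration, and its variance is controlled by a \emph{block} Poisson equation: one solves the Poisson equation for the $\lceil 1/\gamma_n\rceil$-step kernel $\Kker_{\gamma_n,\theta_n}^{\lceil 1/\gamma_n\rceil}$, whose ergodicity rate $\rho^{\lceil 1/\gamma_n\rceil\gamma_n}\leq \rho$ is \emph{uniform in $\gamma_n$}. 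The resulting Poisson solution has $V^{1/2}$-norm $O(1)$, not $O(1/\gamma_n)$, and this yields $\expeLigne{\norm{\eta_n^{(2)}}^2}\leq C\bigl[(m_n\gamma_n)^{-1}+(m_n\gamma_n)^{-2}\bigr]$, which is exactly what \eqref{eq:condition_cv} is designed to sum. Your boundary-term argument is fine; it is the loss of one factor of $\gamma_n$ in the variance that breaks the proof, and the blocking trick is precisely what buys it back.
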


 \begin{proof}
   The proof is an application of \cite[Chapter 5, Theorem 2.3]{kushner2003stochastic} using the decomposition of the error term considered in the proof of \Cref{thm:salem_cv} and \Cref{thm:salem_cv_fix}. Indeed we decompose the error term $\eta_n$ defined by \eqref{eq:error_term} as $\eta_n= \delta M_n + B_n$, where $\delta M_n$ is a martingale increment. Then, we only need to show that the following sums converge
   \begin{equation}
     \sum_{k=0}^n \delta_{k+1}^2 \expe{\| \delta M_k\| ^2} \eqsp , \qquad \sum_{k=0}^n \delta_{k+1} \expe{\| B_k\|} \eqsp ,
   \end{equation}
    which is established in \Cref{lemma:error_bound} and \Cref{lemma:error_variance}.
 \end{proof}


\section{Postponed proofs}
We first derive the following technical lemmas.
 \begin{lemma}
   \label{lem:tech_sum_finie}
   Let $t \in (0,1)$ and $\gamma \in \ocint{0,\bgamma}$ with $\bgamma >0$ then $\sum_{n \in \nset} t^{n \gamma}  \leq t^{-\bgamma}\log^{-1}(1/t)\gamma^{-1}$ and $\sum_{n \in \nset} n t^{n \gamma}  \leq t^{- \bgamma}\log^{-2}(1/t)\gamma^{-2}$.
 \end{lemma}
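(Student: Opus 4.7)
The plan is to treat both sums as geometric (and derivative-of-geometric) series in the variable $u = t^{\gamma} \in (0,1)$, then convert the resulting closed forms into the desired bounds using a single elementary inequality of the form $1 - \mathrm{e}^{-x} \geq x \mathrm{e}^{-x}$ valid for $x \geq 0$.

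More precisely, setting $u = t^{\gamma}$ and $x = \gamma \log(1/t) > 0$, the first sum is the geometric series
\[
\sum_{n \in \nset} t^{n\gamma} \;=\; \sum_{n \in \nset} u^{n} \;=\; \frac{1}{1-u} \;=\; \frac{1}{1 - \mathrm{e}^{-x}} \eqsp .
\]
The inequality $\mathrm{e}^{x} \geq 1 + x$ applied to $x \geq 0$ and multiplied by $\mathrm{e}^{-x}$ yields $1 - \mathrm{e}^{-x} \geq x \mathrm{e}^{-x}$, so
\[
\frac{1}{1 - \mathrm{e}^{-x}} \;\leq\; \frac{\mathrm{e}^{x}}{x} \;=\; \frac{t^{-\gamma}}{\gamma \log(1/t)} \eqsp .
\]
Since $\gamma \in \ocint{0,\bgamma}$ and $t \in (0,1)$, $t^{-\gamma} \leq t^{-\bgamma}$, yielding the first bound. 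For the second sum, I would differentiate the geometric series identity (or equivalently recognize $\sum n u^{n} = u/(1-u)^{2}$), obtaining
\[
\sum_{n \in \nset} n \, t^{n \gamma} \;=\; \frac{t^{\gamma}}{(1 - t^{\gamma})^{2}} \;=\; \frac{\mathrm{e}^{-x}}{(1 - \mathrm{e}^{-x})^{2}} \eqsp ,
\]
and squaring the inequality $1 - \mathrm{e}^{-x} \geq x \mathrm{e}^{-x}$ gives
\[
\frac{\mathrm{e}^{-x}}{(1-\mathrm{e}^{-x})^{2}} \;\leq\; \frac{\mathrm{e}^{-x}}{x^{2}\mathrm{e}^{-2x}} \;=\; \frac{\mathrm{e}^{x}}{x^{2}} \;=\; \frac{t^{-\gamma}}{\gamma^{2}\log^{2}(1/t)} \eqsp ,
\]
and the same bound $t^{-\gamma} \leq t^{-\bgamma}$ concludes the argument.

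There is essentially no obstacle: the entire lemma reduces to the one-line convexity inequality $\mathrm{e}^{x} \geq 1 + x$, with the role of the upper bound $\bgamma$ being only to replace $t^{-\gamma}$ by the $\gamma$-independent factor $t^{-\bgamma}$ so that the bound is uniform in $\gamma \in \ocint{0,\bgamma}$.
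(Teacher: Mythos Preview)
Your proof is correct and follows essentially the same approach as the paper's: both evaluate the geometric (and derivative-of-geometric) series exactly and then apply the elementary inequality $e^{x} \geq 1 + x$ (equivalently $1 - e^{-x} \geq x e^{-x}$) together with $t^{-\gamma} \leq t^{-\bgamma}$ to obtain the stated uniform bounds.
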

 \begin{proof}

    Let $t \in (0,1)$ and $\gamma \in \ocint{0,\bgamma}$ with $\bgamma >0$. Using that $e^u -1 \leq u e^u$ for all $u\geq 0$, we have 
    \begin{equation}      
      \sum_{n \in \nset} t^{n \gamma} = - (t^{\gamma} - 1)^{-1}  \leq - \gamma^{-1}\log^{-1}(t)\exp(-\log(t)\gamma) \leq t^{-\bgamma}\log^{-1}(1/t)\gamma^{-1} \eqsp,
    \end{equation}
    and
    \begin{equation}      
      \sum_{n \in \nset} n  t^{n \gamma} =  t^{\gamma} (t^{\gamma} - 1)^{-2}  \leq t^{\gamma}\{  \gamma^{-1}\log^{-1}(t)\exp(-\log(t)\gamma) \}^2\leq  t^{- \bgamma}\log^{-2}(1/t)\gamma^{-2} \eqsp,
    \end{equation}
    which completes the proof.
  \end{proof}

    \begin{lemma}
   \label{lemma:jensen}
  For any  probability measures $\mu, \nu $ on $\mcb{\rset^d}$, measurable function $V : \ \rset^d \to \coint{1,+\infty}$ such that $\mu(V) + \nu(V) < +\infty$ and $\expo \in \ooint{0,1}$, we have
   \begin{equation}
     \Vnorm[V^a]{\mu-\nu} \leq 2 \Vnorm{\mu - \nu}^{\expo} \eqsp .
   \end{equation}
 \end{lemma}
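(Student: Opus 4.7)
The plan is to exploit the lemma's name as a hint: apply Jensen's inequality to the concave function $t \mapsto t^a$ against a renormalised version of the total variation measure $|\mu - \nu|$, after first converting both sides of the claimed inequality into integrals.

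First I would establish the dual identity $\Vnorm[W]{\mu - \nu} = \int W \, \rmd |\mu - \nu|$ for any measurable weight $W : \rset^d \to \coint{1, \plusinfty}$. This follows from the Hahn-Jordan decomposition $\mu - \nu = \xi^+ - \xi^-$ recalled in the notation section: the inequality $\leq$ comes from $|\int f \, \rmd(\mu - \nu)| \leq \int |f| \, \rmd|\mu - \nu| \leq \int W \, \rmd|\mu - \nu|$ whenever $|f| \leq W$, while equality is achieved by choosing $f = W$ on the positive Hahn set and $f = -W$ on its complement. Applied to $W = V^a$ and $W = V$, this reduces the lemma to showing $\int V^a \, \rmd|\mu - \nu| \leq 2 \parentheseDeux{\int V \, \rmd|\mu - \nu|}^a$.

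Next, assuming $\mu \neq \nu$ (otherwise the inequality is trivial), I would renormalise: set $\tilde{\lambda} = |\mu - \nu| / |\mu - \nu|(\rset^d)$, which is a probability measure on $\mcb{\rset^d}$. Since $t \mapsto t^a$ is concave on $\coint{0, \plusinfty}$ for $a \in \ooint{0,1}$, Jensen's inequality gives $\int V^a \, \rmd \tilde{\lambda} \leq \parentheseDeux{\int V \, \rmd \tilde{\lambda}}^a$. Multiplying through by $|\mu - \nu|(\rset^d)$ yields
\[ \int V^a \, \rmd |\mu - \nu| \leq |\mu - \nu|(\rset^d)^{1 - a} \parentheseDeux{\int V \, \rmd |\mu - \nu|}^a = \tvnorm{\mu - \nu}^{1 - a} \Vnorm{\mu - \nu}^a. \]

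The final step is to absorb the factor $\tvnorm{\mu - \nu}^{1 - a}$ into the constant $2$. Because $\mu$ and $\nu$ are probability measures, $\tvnorm{\mu - \nu} = |\mu - \nu|(\rset^d) \leq \mu(\rset^d) + \nu(\rset^d) = 2$, hence $\tvnorm{\mu - \nu}^{1 - a} \leq 2^{1 - a} \leq 2$, which closes the argument. There is no real obstacle: the only technical point worth writing out carefully is the dual identity $\Vnorm[W]{\xi} = \int W \, \rmd|\xi|$ (so that the Jensen manipulation takes place on an honest measure-theoretic integral rather than on a sup over test functions); once that is established, the concavity step and the probability-measure bound are both immediate.
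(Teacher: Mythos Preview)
Your proposal is correct and follows essentially the same approach as the paper: both proofs normalise $|\mu-\nu|$ to a probability measure, apply Jensen's inequality to the concave map $t\mapsto t^a$ to get $\int V^a\,\rmd|\mu-\nu|\leq |\mu-\nu|(\rset^d)^{1-a}\bigl(\int V\,\rmd|\mu-\nu|\bigr)^a$, and then use $|\mu-\nu|(\rset^d)\leq 2$. The only cosmetic difference is that the paper invokes the identity $\Vnorm[W]{\mu-\nu}=\tfrac{1}{2}|\mu-\nu|(W)$ by citation rather than deriving it, and carries the resulting factor $1/2$ through to obtain the slightly sharper constant $2^{a-1}\cdot 2^{1-a}\cdot 2 = 2$ (indeed $2^{a}\leq 2$) in the same way.
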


 \begin{proof}

\label{lemma:jensen_proof}

 Let $\expo \in \ocint{0,1}$. The statement is trivial if $\mu = \nu$. We just need to consider the case where $\mu \neq \nu$. Define $\xi = \abs{\mu - \nu} / (\abs{\mu - \nu}(\rset^d))$. Using \cite[Definition D.3.1]{douc:moulines:priouret:soulier:2018} we get that 
 \begin{align}
   \Vnorm[V^{\expo}]{\mu -\nu } &= (1/2) \xi(V^{\expo}) \times \abs{\mu - \nu}(\rset^d) \\ 
                                &\leq (1/2) \xi(V)^{\expo} \times \abs{\mu - \nu}(\rset^d) \\
   &\leq 2^{a-1} \Vnorm{\mu - \nu}^{\expo} \times [\abs{\mu - \nu}(\rset^d)]^{1-\expo} \eqsp ,
 \end{align}
   which concludes the proof using that $\expo \leq 1$.
 \end{proof}

Jensen's inequality implies that \Cref{assum:condition_majo_V}-\ref{assum:condition_majo_V_i} holds for $V \leftarrow V^{\expo}$ with $\expo \in \ocint{0,1}$ since $A_1 \geq 1$. \Cref{lemma:jensen} implies that \Cref{assum:condition_majo_V}-\ref{assum:condition_majo_V_ii} holds replacing $V$ by $V^{\expo}$, $\rho$ by $\rho^{\expo}$ and $A_2$ by $2A_2$. Similarly \Cref{assum:condition_majo_V}-\ref{assum:condition_majo_V_iii} holds replacing $V$ by $V^{\expo}$ and $\Psibf(\gamma)$ by $2 \Psibf(\gamma)$.
 
 \subsection{Proof of \Cref{thm:salem_cv}}
\label{thm:salem_cv_proof}

Consider $(\eta_n)_{n \in \N}$ defined for any $n \in \nset$ by 
\begin{equation}
  \eta_n = m_n^{-1} \sum_{k=1}^{m_n} \defEns{H_{\theta_n}(X_k^n) - \pi_{\theta_n}(H_{\theta_n})} \eqsp.
  \label{eq:error_term}
\end{equation}
The proof of \Cref{thm:salem_cv} relies on the two following lemmas.
We consider the following decomposition for any $n \in \nset$, $\eta_n = \eta_n^{(1)} + \eta_n^{(2)}$, where
\begin{equation}
  \label{eq:decompo_1}
  \eta_n^{(1)} = \CPE{\eta_n}{\mathcal{F}_{n-1}} \eqsp , \qquad \eta_n^{(2)} = \eta_n - \CPE{\eta_n}{\mathcal{F}_{n-1}}\eqsp .
\end{equation}
We now give upper bounds on $\expeLigne{\normLigne{\eta_n^{(1)}}}$, $\expeLigne{\normLigne{\eta_n^{(1)}}^2}$ and $\expeLigne{\normLigne{\eta_n^{(2)}}^2}$.
  \begin{lemma}
    \label{lemma:error_bound}
  Assume \tup{\Cref{assum:theta_compact}}, \tup{\Cref{assum:f_diff}}, \tup{\Cref{assum:grad_expec}}, \tup{\Cref{assum:condition_majo_V}} and that for any $\theta \in \Theta$ and $x \in \rset^d$, $\normLigne{H_{\theta}(x)} \leq V^{1/2}(x)$. Then we have for any $n \in \nset$
  \begin{equation}
    \begin{aligned}
      &\expe{\normLigne{\eta_n^{(1)}} }   \leq  B_1 \expe{V^{1/2}(X_0^0)}/(m_n \gamma_n) +   \Psibf(\gamma_{n})  \eqsp ; \\ &\expe{\normLigne{\eta_n^{(1)}}^2 }   \leq  2B_1^2 \expe{V(X_0^0)}/(m_n \gamma_n)^2 +  2 \Psibf(\gamma_{n})^2  \eqsp ,
    \end{aligned}
\end{equation}
with
\begin{equation}
  B_1 =  2A_1 A_2 \rho^{-\bgamma} / \log(1/\rho) \eqsp .
\end{equation}
\end{lemma}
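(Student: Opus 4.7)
The plan is to use the tower property to write $\eta_n^{(1)}$ as a deterministic (given $\mathcal{F}_{n-1}$) bias term, split this into a Markov ergodicity error and a discretisation error, and then bound each contribution via the ingredients of \Cref{assum:condition_majo_V}.

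First I would observe that, conditionally on $\mathcal{F}_{n-1}$, $\theta_n$ and $X_0^n = X_{m_{n-1}}^{n-1}$ are fixed and $(X_k^n)_{k\geq 0}$ is a Markov chain with kernel $\Kker_{\gamma_n,\theta_n}$. Hence by the Markov property
\[
\eta_n^{(1)} = m_n^{-1}\sum_{k=1}^{m_n}\bigl\{\updelta_{X_0^n}\Kker_{\gamma_n,\theta_n}^k H_{\theta_n} - \pi_{\theta_n}(H_{\theta_n})\bigr\}.
\]
I would then insert and subtract $\pi_{\gamma_n,\theta_n}(H_{\theta_n})$, splitting $\eta_n^{(1)} = T_1 + T_2$ with
\[
T_1 = m_n^{-1}\sum_{k=1}^{m_n}\bigl(\updelta_{X_0^n}\Kker_{\gamma_n,\theta_n}^k - \pi_{\gamma_n,\theta_n}\bigr)(H_{\theta_n}), \qquad T_2 = \pi_{\gamma_n,\theta_n}(H_{\theta_n}) - \pi_{\theta_n}(H_{\theta_n}).
\]

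For $T_1$, since $\|H_\theta(x)\|\le V^{1/2}(x)$, each scalar component of $H_{\theta_n}$ has $V^{1/2}$-norm at most $1$. Using the reformulation of \Cref{assum:condition_majo_V}\ref{assum:condition_majo_V_ii} for $V^{1/2}$ provided just after \Cref{lemma:jensen} (which yields $\Vnorm[V^{1/2}]{\updelta_x\Kker_{\gamma,\theta}^k - \pi_{\gamma,\theta}}\le 2A_2 \rho^{k\gamma/2} V^{1/2}(x)$), I would obtain
\[
\|T_1\|\le 2A_2 V^{1/2}(X_0^n) m_n^{-1}\sum_{k=1}^{m_n}\rho^{k\gamma_n/2}.
\]
\Cref{lem:tech_sum_finie} bounds the geometric sum by $C \rho^{-\bgamma}/(\gamma_n\log(1/\rho))$, so $\|T_1\|\le (B_1/(m_n\gamma_n)) V^{1/2}(X_0^n)$ for a constant of the claimed form. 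For $T_2$, \Cref{assum:condition_majo_V}\ref{assum:condition_majo_V_iii} together with $\|H_{\theta_n}\|_{V^{1/2}}\le 1$ gives $\|T_2\|\le \Psibf(\gamma_n)$ directly.

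To complete the first moment bound, I would take expectations and use \Cref{assum:condition_majo_V}\ref{assum:condition_majo_V_i} with Jensen's inequality (noting $V^{1/2}\leq V$) to get $\expe{V^{1/2}(X_0^n)}\le A_1^{1/2}\expe{V^{1/2}(X_0^0)}\le A_1\expe{V^{1/2}(X_0^0)}$, absorbing $A_1$ into $B_1$. The second moment bound follows immediately from $\|\eta_n^{(1)}\|^2\le 2\|T_1\|^2 + 2\|T_2\|^2$: the $T_2^2$ term yields $2\Psibf(\gamma_n)^2$, while for $T_1^2$ I would apply \Cref{assum:condition_majo_V}\ref{assum:condition_majo_V_i} in the form $\expe{V(X_0^n)}\le A_1\expe{V(X_0^0)}$, producing the stated $2B_1^2\expe{V(X_0^0)}/(m_n\gamma_n)^2$ term.

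The only delicate step is the transition from the $V$-norm ergodic bound to the $V^{1/2}$-norm variant used against $H_\theta$; everything else is routine decomposition and geometric summation. I would make sure to invoke the rewriting of \Cref{assum:condition_majo_V} for $V^{1/2}$ at the outset so that the constants $A_2$, $\rho$ appearing in $B_1$ are exactly those in the statement, and to use Jensen's inequality carefully when passing between $V$ and $V^{1/2}$ moments of $X_0^n$.
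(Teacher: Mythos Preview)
Your proposal is correct and follows essentially the same route as the paper: express $\eta_n^{(1)}$ via the Markov property, split through $\pi_{\gamma_n,\theta_n}$ into an ergodic term controlled by the $V^{1/2}$-version of \Cref{assum:condition_majo_V}\ref{assum:condition_majo_V_ii} (via \Cref{lemma:jensen}) and a bias term controlled by \Cref{assum:condition_majo_V}\ref{assum:condition_majo_V_iii}, sum the geometric series using \Cref{lem:tech_sum_finie}, and finish with \Cref{assum:condition_majo_V}\ref{assum:condition_majo_V_i} for the moment propagation and $(a+b)^2\le 2a^2+2b^2$ for the square. The only cosmetic difference is that your careful use of the $V^{1/2}$-rewriting yields the decay rate $\rho^{k\gamma_n/2}$, whereas the paper writes $\rho^{k\gamma_n}$; this affects only the explicit form of $B_1$ and not the argument.
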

\begin{proof}
  Using the definition of $(\mcf_n)_{n \in \nset}$, see \eqref{eq:def_F_n}, the Markov property, \Cref{assum:condition_majo_V}-\ref{assum:condition_majo_V_ii}-\ref{assum:condition_majo_V_iii}, \Cref{lemma:jensen}, Jensen's inequality and that for any $\theta \in \Theta$ and $x \in \rset^d$, $\normLigne{H_{\theta}(x)} \leq V^{1/2}(x)$, we have for any $n \in \nsets$
  \begin{align}
    &    \| \CPE{ \eta_{n} }{\mathcal{F}_{n-1} } \| \leq m_{n}^{-1} \sum_{k=1}^{m_{n}} \norm{ \Kker_{\gamma_{n},\theta_{n}}^k H_{\theta_{n}}(X_{0}^{n}) - \pi_{\theta_{n}} \left(H_{\theta_{n}}\right) } \\
    &\qquad \leq m_{n}^{-1} \sum_{k=1}^{m_{n}} \norm{ \abs{\updelta_{X_0^n}\Kker_{\gamma_{n},\theta_{n}}^k  - \pi_{\theta_{n}}} \left(H_{\theta_{n}}\right) } \\
    &\qquad \leq m_{n}^{-1} \sum_{k=1}^{m_{n}} \abs{\updelta_{X_0^n}\Kker_{\gamma_{n},\theta_{n}}^k  - \pi_{\theta_{n}}} \left(\norm{H_{\theta_{n}}}\right)  \\    
    &\qquad \leq m_{n}^{-1} \sum_{k=1}^{m_{n}} \defEns{\Vnorm[V^{1/2}]{ \updelta_{X_{0}^n}\Kker_{\gamma_{n},\theta_{n}}^k  - \pi_{\gamma_{n}, \theta_{n}}  }}  + \Vnorm[V^{1/2}] {\pi_{\gamma_{n}, \theta_{n}} - \pi_{\theta_{n}} } \\
                                   &\qquad  \leq m_{n}^{-1} \sum_{k=1}^{m_{n}} \defEns{2A_2\rho^{k \gamma_{n}}V^{1/2}(X_{m_n}^n) + \Psibf(\gamma_{n})}  \\
                                              & \qquad  \leq \frac{2A_2\rho^{-\bgamma}V^{1/2}(X^{n}_{m_{n}})}{\log(1/\rho)\gamma_{n}m_{n}} +  \Psibf(\gamma_{n}) \eqsp , \label{eq:error_cond} 
  \end{align}
  where for the last inequality we have used \Cref{lem:tech_sum_finie}. In a similar manner, we have
  \begin{equation}
    \norm{\CPE{ \eta_0 }{X_0^0}}  \leq \frac{2A_2\rho^{-\bgamma}V^{1/2}(X^{0}_{0})}{\log(1/\rho)\gamma_{0}m_{0}} +  \Psibf(\gamma_{0}) \eqsp .
  \end{equation}
  We conclude using \Cref{assum:condition_majo_V}-\ref{assum:condition_majo_V_i} and that $(a+b)^2 \leq 2a^2 + 2b^2$ for $a,b \in \rset$.
\end{proof}

\begin{lemma}
  \label{lemma:error_variance}
 Assume \tup{\Cref{assum:theta_compact}}, \tup{\Cref{assum:f_diff}}, \tup{\Cref{assum:grad_expec}}, \tup{\Cref{assum:condition_majo_V}} and that for any $\theta \in \Theta$ and $x \in \rset^d$, $\normLigne{H_{\theta}(x)} \leq V^{1/2}(x)$. Then we have for any $n \in \nset$
  \begin{equation}
   \expe{ \normLigne{\eta_n^{(2)}}^2}\leq B_2 m_n^{-2} \gamma_n^{-1} \parenthese{m_n + \gamma_n^{-1} \expe{V(X_0^0)}}) \eqsp ,
 \end{equation}
 with $B_2 = 2(1 + \bgamma)^2 \max (B_{2,1}, B_{2,2})$ and
 \begin{equation}
   \begin{aligned}
     B_{2,1} &= 24A_2^2(1-\rho^{1/2})^{-2}A_3 \eqsp , \\
     B_{2,2} &= 4 A_1 \parentheseDeux{1 + 6A_2^2(1-\rho^{1/2})^{-2} \defEns{A_2(1-\rho)^{-1} + 2} + A_2^2 \log^{-2}(1/\rho) + A_3^2 } \eqsp .
     \end{aligned}
 \end{equation}

\end{lemma}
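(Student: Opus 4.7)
The strategy is a Poisson-equation martingale decomposition, applied conditionally on $\mathcal{F}_{n-1}$. On this $\sigma$-algebra both $\theta_n$ and $X_0^n$ are fixed, and $(X_k^n)_{k\in\{0,\ldots,m_n\}}$ evolves as a Markov chain with kernel $\Kker_{\gamma_n,\theta_n}$. Observing that $\pi_{\theta_n}(H_{\theta_n})-\pi_{\gamma_n,\theta_n}(H_{\theta_n})$ is $\mathcal{F}_{n-1}$-measurable and therefore cancels in $\eta_n^{(2)}$, I would re-centre with respect to the kernel's invariant measure by setting $\tilde f_n(x)=H_{\theta_n}(x)-\pi_{\gamma_n,\theta_n}(H_{\theta_n})$ and introducing the Poisson solution $\hat f_n=\sum_{j\geq 0}\Kker_{\gamma_n,\theta_n}^j\tilde f_n$, which satisfies $\tilde f_n=\hat f_n-\Kker_{\gamma_n,\theta_n}\hat f_n$. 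A routine manipulation then yields
\begin{equation*}
m_n\eta_n^{(2)} = \sum_{k=1}^{m_n} M_k^n - \Kker_{\gamma_n,\theta_n}\hat f_n(X_{m_n}^n) + \Kker_{\gamma_n,\theta_n}^{m_n+1}\hat f_n(X_0^n),
\end{equation*}
where $M_k^n=\hat f_n(X_k^n)-\Kker_{\gamma_n,\theta_n}\hat f_n(X_{k-1}^n)$ is, conditionally on $\mathcal{F}_{n-1}$, a martingale difference for the natural filtration of $(X_k^n)_k$.

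I would next derive pointwise control on $\hat f_n$. \Cref{lemma:jensen} applied to \Cref{assum:condition_majo_V}-\ref{assum:condition_majo_V_ii} transfers the geometric ergodicity to $V^{1/2}$-total variation with rate $\rho^{j\gamma_n/2}$ and constant $2A_2$; combined with $\|H_\theta(\cdot)\|\leq V^{1/2}$ and Jensen's inequality applied to $\pi_{\gamma,\theta}(V)\leq A_3$ (giving $\pi_{\gamma,\theta}(V^{1/2})\leq A_3^{1/2}$), this produces $\Vnorm[V^{1/2}]{\tilde f_n}\leq 1+A_3^{1/2}$ and hence $\Vnorm[V^{1/2}]{\hat f_n}\leq 2A_2(1+A_3^{1/2})/(1-\rho^{\gamma_n/2})$. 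An elementary concavity estimate gives $1-\rho^{\gamma_n/2}\geq c_\rho\,\gamma_n$ for $\gamma_n\in\ocint{0,\bgamma}$ with $c_\rho$ of order $(1-\rho^{1/2})/(1+\bgamma)$, converting the $\gamma_n$-dependence into the explicit $(1+\bgamma)^2(1-\rho^{1/2})^{-2}$ prefactor appearing in $B_2$.

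For the martingale term I would exploit conditional orthogonality, writing $\CPE{\|M_k^n\|^2}{X_{k-1}^n}=\Kker_{\gamma_n,\theta_n}(\|\hat f_n\|^2)(X_{k-1}^n)-\|\Kker_{\gamma_n,\theta_n}\hat f_n(X_{k-1}^n)\|^2$; substituting $\Kker_{\gamma_n,\theta_n}\hat f_n=\hat f_n-\tilde f_n$ gives the pointwise identity $\|\hat f_n\|^2-\|\Kker\hat f_n\|^2=2\langle\hat f_n,\tilde f_n\rangle-\|\tilde f_n\|^2$. After summation over $k$ the $\|\hat f_n(X_k^n)\|^2$ increments telescope into boundary contributions, leaving the bulk term $\sum_k\CPE{2\langle\hat f_n,\tilde f_n\rangle(X_k^n)-\|\tilde f_n(X_k^n)\|^2}{\mathcal{F}_{n-1}}$. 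Invariance and the Poisson series yield $\pi_{\gamma_n,\theta_n}(\langle\hat f_n,\tilde f_n\rangle)=\sum_{j\geq 0}\pi_{\gamma_n,\theta_n}(\tilde f_n^{\top}\Kker_{\gamma_n,\theta_n}^j\tilde f_n)$, and bounding each summand by $O(A_3\rho^{j\gamma_n/2})$ via the $V^{1/2}$-estimates and $\pi_{\gamma_n,\theta_n}(V)\leq A_3$ and summing the resulting geometric series produces the $O(m_n A_3(1-\rho^{1/2})^{-2}\gamma_n^{-1})$ contribution corresponding to the $B_{2,1}m_n\gamma_n^{-1}$ part of the bound.

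Finally, the boundary residues $\Kker\hat f_n(X_{m_n}^n)$ and $\Kker^{m_n+1}\hat f_n(X_0^n)$ are bounded pointwise by $O(V^{1/2}(x)(1+\bgamma)(1-\rho^{1/2})^{-1}\gamma_n^{-1})$ via the Poisson estimate; squaring, applying Jensen, and using \Cref{assum:condition_majo_V}-\ref{assum:condition_majo_V_i} to control $\expe{V(X_{m_n}^n)}\leq A_1\expe{V(X_0^0)}$ yields the $B_{2,2}\gamma_n^{-2}\expe{V(X_0^0)}$ contribution. Dividing the whole estimate by $m_n^2$ then recovers the stated bound. The main obstacle will be the careful bookkeeping of the non-stationarity corrections in the telescoping step: since $(X_k^n)$ starts from $X_0^n$ rather than from $\pi_{\gamma_n,\theta_n}$, replacing sample averages of $\langle\hat f_n,\tilde f_n\rangle$ by $\pi_{\gamma_n,\theta_n}$-averages requires a further invocation of \Cref{assum:condition_majo_V}-\ref{assum:condition_majo_V_ii} applied to a function whose $V$-norm is of order $\gamma_n^{-1}$, and this is precisely where the $(1-\rho)^{-1}$ and $\log^{-2}(1/\rho)$ factors inside $B_{2,2}$ emerge.
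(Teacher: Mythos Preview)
Your approach is correct and genuinely different from the paper's. Both arguments first re-centre at $\pi_{\gamma_n,\theta_n}$ (you note explicitly that the shift $\pi_{\theta_n}(H_{\theta_n})-\pi_{\gamma_n,\theta_n}(H_{\theta_n})$ is $\mathcal{F}_{n-1}$-measurable and drops out of $\eta_n^{(2)}$; the paper does this via the splitting in its equation~\eqref{eq:decompo_1}). From there the two proofs diverge. The paper performs a \emph{blocking} step: it partitions $\{1,\dots,m_n\}$ into $\lceil 1/\gamma_n\rceil$ interlaced sub-chains, so that each sub-chain is driven by the skeleton kernel $\Kker_{\gamma_n,\theta_n}^{\lceil 1/\gamma_n\rceil}$, whose contraction rate is at least $\rho^{1/2}$ independently of $\gamma_n$. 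The Poisson solution is then built for this skeleton (equation~\eqref{eq:poisson_def}), has $\gamma_n$-free $V^{1/2}$-norm, and a standard martingale-plus-boundary estimate on each sub-chain, followed by Cauchy--Schwarz over the $\lceil 1/\gamma_n\rceil$ sub-chains, produces the $\gamma_n^{-1}m_n$ and $\gamma_n^{-2}$ contributions.

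You instead work directly with the one-step Poisson solution $\hat f_n$, whose $V^{1/2}$-norm carries the factor $(1-\rho^{\gamma_n/2})^{-1}=O(\gamma_n^{-1})$. The na\"ive route (bounding $\sum_k \CPE{\|M_k^n\|^2}{\mathcal{F}_{n-1}}$ by $\sum_k \Kker\|\hat f_n\|^2(X_{k-1}^n)$) would then overshoot by one power of $\gamma_n^{-1}$; the telescoping identity you invoke, rewriting the summed conditional variances as the boundary difference $\|\hat f_n(X_{m_n}^n)\|^2-\|\hat f_n(X_0^n)\|^2$ plus the bulk $\sum_k(2\langle\hat f_n,\tilde f_n\rangle-\|\tilde f_n\|^2)(X_{k-1}^n)$, is exactly what rescues this. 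Since $\langle\hat f_n,\tilde f_n\rangle$ has $V$-norm only $O(\gamma_n^{-1})$ (one factor, not two), its stationary average contributes $O(m_n\gamma_n^{-1})$ and the non-stationarity correction $O(\gamma_n^{-2}V(X_0^n))$, matching the paper's scaling. Your route is more streamlined and avoids the somewhat artificial block decomposition; the paper's blocking, on the other hand, keeps the Poisson solution uniformly bounded and so sidesteps the need for the telescoping identity, which may be more transparent to readers unfamiliar with that device.

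One caveat: your remark that $c_\rho$ is ``of order $(1-\rho^{1/2})/(1+\bgamma)$'' and that this reproduces the exact prefactors in $B_{2,1},B_{2,2}$ is loose. The natural lower bound is $1-\rho^{\gamma_n/2}\geq (\gamma_n/2)\rho^{\bgamma/2}\log(1/\rho)$ (or the concavity estimate $(\gamma_n/\bgamma)(1-\rho^{\bgamma/2})$), which feeds $\log^{-1}(1/\rho)$-type factors rather than $(1-\rho^{1/2})^{-1}$ directly; the paper's specific constants (in particular the separate $(1-\rho)^{-1}$ and $\log^{-2}(1/\rho)$ pieces inside $B_{2,2}$) are artefacts of its blocking and of the separate treatment of the term in~\eqref{eq:ineq_0}. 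Your argument delivers constants of the same order but not literally the displayed ones, so you should either present your own explicit constants or simply claim the bound with \emph{some} $B_2$ depending only on $A_1,A_2,A_3,\rho,\bgamma$.
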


\begin{proof}
      Let $n \in \nsets$. We have using the Cauchy-Schwarz inequality
  \begin{align}
    &\expe{\norm{ \sum_{k=1}^{m_n} \defEns{H_{\theta_n}(X_k^n) - \CPE{H_{\theta_n}(X_k^n)}{\mathcal{F}_{n-1}}}}^2}\\ & \qquad \leq 2 \expe{\norm{ \sum_{k=1}^{m_n} \defEns{H_{\theta_n}(X_k^n) - \pi_{\gamma_n, \theta_n}(H_{\theta_n}) }}^2}  \\ & \qquad + 2\expe{\norm{ \sum_{k=1}^{m_n}  \defEns{\CPE{H_{\theta_n}(X_k^n)}{\mathcal{F}_{n-1}} - \pi_{\gamma_n,\theta_n}(H_{\theta_n})}}^2} \label{eq:decompo_1}
  \end{align}
Using the Markov property, \Cref{assum:condition_majo_V}-\ref{assum:condition_majo_V_i}-\ref{assum:condition_majo_V_ii}, \Cref{lemma:jensen}, \Cref{lem:tech_sum_finie} and that for any $\theta \in \Theta$ and $x \in \rset^d$, $\normLigne{H_{\theta}(x)} \leq V^{1/2}(x)$ 
we obtain that
\begin{align}
  &\expe{\norm{ \sum_{k=1}^{m_n}  \defEns{\CPE{H_{\theta_n}(X_k^n)}{\mathcal{F}_{n-1}} - \pi_{\gamma_n, \theta_n}(H_{\theta_n})}}^2}  \\ 
  &\qquad \qquad \qquad \qquad  \leq \expe{\abs{\sum_{k=1}^{m_n} \CPE{\Vnorm[V^{1/2}]{\updelta_{X_0^n}\Rker_{\gamma_n, \theta_n} - \pi_{\gamma_n, \theta_n}}}{\mathcal{F}_{n-1}}}^2} \\
  &\qquad \qquad \qquad \qquad  \leq 4A_2^2 \expe{ \abs{\CPE{V^{1/2}(X_0^n)}{\mathcal{F}_{n-1}}\sum_{k=1}^{m_n} \rho^{k\gamma_n/2}}^2}
  \\ & \qquad \qquad \qquad \qquad \leq  4 A_1 A_2^2 \gamma_n^{-2}\rho^{-2 \bgamma} \log^{-2}(1/\rho) \expe{V(X_0^0)} \eqsp . \label{eq:ineq_0}
\end{align}
We now give an upper-bound on the first term in the right-hand side of \eqref{eq:decompo_1}.
Consider for any $n \in \nset$ the Euclidean division of $m_n$ by $\ceil{1/\gamma_n}$ there exist $q_n \in \nset$ and $r_n \in \{0, \dots, \ceil{1/\gamma_n}-1\}$ such that $m_n = q_n\ceil{1 / \gamma_n} + r_n$. Therefore using the Cauchy-Schwarz inequality we can derive the following decomposition
\begin{align}
  \expe{\norm{\sum_{k=1}^{m_n} H_{\theta_n}(X_k^n) - \pi_{\gamma_n, \theta_n}(H_{\theta_n})}^2} &\leq 2\expe{\norm{\sum_{j=1}^{r_n} H_{\theta_n}(X_{j + q_n \ceil{1/\gamma_n}}^n) - \pi_{\gamma_n, \theta_n}(H_{\theta_n})}^2} \\ &\qquad +  2\expe{\norm{\sum_{j=1}^{\ceil{1/\gamma_n}}\sum_{k=0}^{q_n-1} H_{\theta_n}(X_{j+ k \ceil{1/\gamma_n}}^n) - \pi_{\gamma_n, \theta_n}(H_{\theta_n})}^2}                                                                                                                               \\ &\leq 2\expe{\norm{\sum_{j=1}^{r_n} H_{\theta_n}(\bar{X}_{q_n}^{j,n}) - \pi_{\gamma_n, \theta_n}(H_{\theta_n})}^2} \\ &\qquad +  2\ceil{1/\gamma_n}\sum_{j=1}^{\ceil{1/\gamma_n}}\expe{\norm{\sum_{k=0}^{q_n-1} H_{\theta_n}(\bar{X}_k^{j,n}) - \pi_{\gamma_n, \theta_n}(H_{\theta_n})}^2}
       \label{eq:decompo}
\end{align}
Setting for any $j \in \{1, \dots, \ceil{1/\gamma_n} \}$ and $k \in \lbrace{0, \dots, q_n - 1\rbrace}$, $\bar{X}_{k}^{j,n} = X_{j+k\ceil{1/\gamma_n}}^n$. We now bound the two terms in the right-hand side.
First, using the Cauchy-Schwarz inequality and \Cref{assum:condition_majo_V}-\ref{assum:condition_majo_V_i}-\ref{assum:condition_majo_V_iii}, the fact that $r_n \leq \ceil{1/\gamma_n}$ and that for any $\theta \in \Theta$ and $x \in \rset^d$, $\normLigne{H_{\theta}(x)} \leq V^{1/2}(x)$  we have
\begin{align}
  \expe{\norm{\sum_{j=1}^{r_n} H_{\theta_n}(\bar{X}_{q_n}^{j,n}) - \pi_{\gamma_n, \theta_n}(H_{\theta_n})}^2} &\leq r_n  \sum_{j=1}^{r_n} \expe{\norm{H_{\theta_n}(\bar{X}_{q_n}^{j,n}) - \pi_{\gamma_n, \theta_n}(H_{\theta_n})}^2} \\ &\leq \ceil{1/\gamma_n}^2 \parenthese{2 A_1 \expe{V(X_0^0)} + 2 A_3^2} \label{eq:reste} \eqsp .
\end{align}
Now consider the solution of the \emph{Poisson equation} \cite[Section 17.4.1]{meyn1993markov} associated with $\Kker_{\gamma_n, \theta_n}^{\ceil{1/\gamma_n}}$, $x \mapsto \poiss{n}(x)$ defined for any $x \in \rset^d$ by
\begin{equation}
  \label{eq:poisson_def}
  \poiss{n}(x) = \sum_{\ell \in \nset} \left(  \Kker_{\gamma_n, \theta_n}^{\ell \ceil{1/\gamma_n}}H_{\theta_n}(x)  - \pi_{\gamma_n, \theta_n}(H_{\theta_n}) \right) \eqsp .
\end{equation}
Note that by \Cref{assum:condition_majo_V}-\ref{assum:condition_majo_V_ii}, \Cref{lemma:jensen} and since for any $\theta \in \Theta$ and $x \in \rset^d$, $\normLigne{H_{\theta}(x)} \leq V^{1/2}(x)$, we have that for any $x \in \rset^d$
\begin{equation}
  \norm{\poiss{n}(x)} \leq 2A_2  (1 - \rho^{1/2})^{-1} V^{1/2}(x)\eqsp , \label{eq:majo_poiss}
\end{equation}
and in addition for any $x \in \rset^d$
\begin{equation}
 \poiss{n}(x) - \Kker^{\ceil{1/\gamma_n}}_{\gamma_n, \theta_n} \poiss{n}(x) = H_{\theta_n}(x) - \pi_{\gamma_n, \theta_n}(H_{\theta_n}) \eqsp .
\end{equation}
Therefore, we have for any $j \in \{1, \dots, \ceil{1/\gamma_n} \}$
\begin{align}
  \sum_{k=0}^{q_n - 1} \left( H_{\theta_n}(\bar{X}_k^{j,n}) - \pi_{\gamma_n, \theta_n}(H_{\theta_n})\right) &= \sum_{k=0}^{q_n - 1} \left( \poiss{n}(\bar{X}_{k}^{j,n}) - \Kker_{\gamma_n, \theta_n}^{\ceil{1/\gamma_n}}\poiss{n}(\bar{X}_{k}^{j,n}) \right)\\ &= \sum_{k=0}^{q_n - 2} \left( \poiss{n}(\bar{X}_{k+1}^{j,n}) - \Kker_{\gamma_n, \theta_n}^{\ceil{1/\gamma_n}}\poiss{n}(\bar{X}_{k}^{j,n}) \right) \\ & \qquad  + \poiss{n}(\bar{X}_0^{j,n}) - \Kker_{\gamma_n, \theta_n}^{\ceil{1/\gamma_n}} \poiss{n}(\bar{X}_{q_n-1}^{j,n})  \eqsp . \label{eq:decompo_2}
\end{align}
Combining the Cauchy-Schwarz inequality and \eqref{eq:decompo_2}  we obtain that 
\begin{align}
  \expe{\norm{\sum_{k=0}^{q_n - 1} H_{\theta_n}(\bar{X}_k^{j,n}) - \pi_{\gamma_n, \theta_n}(H_{\theta_n})}^2} \leq 3(\mathrm{C}_1 + \mathrm{C}_2) \eqsp , \label{eq:decompo_3}
\end{align}
with
\begin{equation}
  \begin{aligned}
    &\mathrm{C}_1 = \expe{\norm{\poiss{n}(\bar{X}_0^{j,n})}^2 + \Kker_{\gamma_n, \theta_n}^{\ceil{1/\gamma_n}}\norm{\poiss{n}(\bar{X}_{q_n- 1}^{j,n})}^2} \eqsp ;   \\ 
    &\mathrm{C}_2 = \expe{\norm{\sum_{k=0}^{q_n - 2} \poiss{n}(\bar{X}_{k+1}^{j,n}) - \Kker_{\gamma_n, \theta_n}^{\ceil{1/\gamma_n}}\poiss{n}(\bar{X}_{k}^{j,n})}^2} \eqsp .
  \end{aligned}
\end{equation}
First, using \eqref{eq:majo_poiss} and \Cref{assum:condition_majo_V}-\ref{assum:condition_majo_V_i} we get that
\begin{align}
  \mathrm{C}_1 &\leq 4A_2^2(1 - \rho^{1/2})^{-2} \defEns{\expe{V(X_j^n)} + \expe{\Kker_{\gamma_n, \theta_n}V(X_{q_{n}+j-1}^n}}  \\
  &\leq 8A_1A_2^2(1 - \rho^{1/2})^{-2}\expe{V(X_0^0)} \eqsp . \label{eq:inte_1_majo}
\end{align}
We now give an upper-bound on $\mathrm{C}_2$.
For any $j \in \lbrace 1, \dots, r_n \rbrace$ let $(\mathcal{G}_{j,k})_{k \in \lbrace 0, q_n-2\rbrace}$ generated by $\mathcal{F}_{n-1}$ and the sequence of random variables $X_0^n, \dots, X_{k \ceil{1/\gamma_n}+j}^n$. Using the Markov property we have for any $k \in \lbrace 0, \dots, q_n -2 \rbrace$ and $j \in \lbrace 1, \dots, r_n \rbrace$
\begin{equation}
    \CPE{\poiss{n}(X_{k+1}^{j,n})}{\mathcal{G}_{j,k}} = \Kker_{\gamma_n, \theta_n}^{\ceil{1/\gamma_n}}\poiss{n}(X_k^{j,n}) \eqsp .
  \end{equation}
  Therefore, for any $j \in \lbrace 1, \dots, r_n \rbrace$, $\poiss{n}(X_{k+1}^{j,n}) - \Kker_{\gamma_n, \theta_n}^{\ceil{1/\gamma_n}}\poiss{n}(X_k^{j,n})$ is a martingale increment with respect to $(\mathcal{G}_{j,k})_{k \in \lbrace 0, q_n-2\rbrace}$,
  Combining this result with the Markov property implies that for any $k \in \lbrace 0, \dots, q_n -2 \rbrace$ and $j \in \lbrace 1, \dots, r_n \rbrace$, 
\begin{align}
  \mathrm{C}_2 &= \sum_{k=0}^{q_n -2} \expe{\Kker_{\gamma_n, \theta_n}^{\ceil{1/\gamma_n}} \norm{\poiss{n}(\bar{X}_k^{j,n}) -  \Kker_{\gamma_n, \theta_n}^{\ceil{1/\gamma_n}} \poiss{n}(\bar{X}_k^{j,n})}^2}
  \\ &= \sum_{k=0}^{q_n -2} \expe{\Kker_{\gamma_n, \theta_n}^{\ceil{1/\gamma_n}} \norm{\poiss{n}(\bar{X}_k^{j,n})}^2 - \norm{ \Kker_{\gamma_n, \theta_n}^{\ceil{1/\gamma_n}} \poiss{n}(\bar{X}_k^{j,n})}^2}\eqsp . \label{eq:martingale_increment}
\end{align}
Define for any $x \in \rset^d$, $g_n(x) = 
\normLigne{\poiss{n}(x)}^2$.
Using \eqref{eq:martingale_increment}, 
\Cref{assum:condition_majo_V}-\ref{assum:condition_majo_V_ii}-\ref{assum:condition_majo_V_iii} and \eqref{eq:majo_poiss} we obtain that
\begin{align}
  \mathrm{C}_2  &= \sum_{k=0}^{q_n -2} \expe{\Kker_{\gamma_n, \theta_n}^{\ceil{1/\gamma_n}} \norm{\poiss{n}(\bar{X}_k^{j,n})}^2 - \norm{ \Kker_{\gamma_n, \theta_n}^{\ceil{1/\gamma_n}} \poiss{n}(\bar{X}_k^{j,n})}^2} \\
  &\leq \sum_{k=0}^{q_n -2} \expe{\Kker_{\gamma_n, \theta_n}^{\ceil{1/\gamma_n}} \norm{\poiss{n}(\bar{X}_k^{j,n})}^2} \\
  &\leq \expe{\sum_{k=0}^{q_n - 2} \CPE{\Kker_{\gamma_n, \theta_n}^{(k+1) \ceil{1/\gamma_n}} g_n(\bar{X}_0^{j,n}) - \pi_{\gamma_n, \theta_n}(g_n)}{\mathcal{G}_{j, 0}} } + \sum_{k=0}^{q_n - 2} \pi_{\gamma_n, \theta_n} (g_n) \\
    &\leq \frac{4A_2^2}{(1-\rho^{1/2})^{2}}\defEns{\sum_{k=0}^{q_n - 2} \expe{\CPE{\Vnorm{\updelta_{X_j^n} \Kker_{\gamma_n, \theta_n}^{(k+1)\ceil{1/\gamma_n}} - \pi_{\gamma_n, \theta_n}}}{\mathcal{G}_{j,0}}} + \sum_{k=0}^{q_n - 2} \pi_{\gamma_n, \theta_n} (V)} \\
  &\leq 4A_2^2 (1 - \rho^{1/2})^{-2}\defEns{  A_2 (1-\rho)^{-1}\expe{V(X_j^n)} + q_n  A_3 } \\
    &\leq 4A_2^2 (1 - \rho^{1/2})^{-2}\defEns{  A_1 A_2 (1-\rho)^{-1}\expe{V(X_0^0)} + q_n  A_3 } \eqsp . \label{eq:ineq_2}
\end{align}
Therefore, using \eqref{eq:inte_1_majo} and \eqref{eq:ineq_2} in \eqref{eq:decompo_3} we obtain that
\begin{multline}
  \expe{\norm{\sum_{k=0}^{q_n - 1} H_{\theta_n}(\bar{X}_k^{j,n}) - \pi_{\gamma_n, \theta_n}(H_{\theta_n})}^2} \\ \leq 12A_2^2 (1 - \rho^{1/2})^{-2}\parentheseDeux{\defEns{  A_1 A_2 (1-\rho)^{-1}\expe{V(X_0^0)} + q_n  A_3 } + 2 \expe{V(X_0^0)}} \eqsp . \label{eq:majoun}
\end{multline}
As a consequence, using \eqref{eq:reste} and \eqref{eq:majoun} in \eqref{eq:decompo} we get that
\begin{align}
  &\expe{\norm{ \sum_{k=1}^{m_n} H_{\theta_n}(X_k^n) - \pi_{\gamma_n, \theta_n}(H_{\theta_n}) }^2} 
   \leq 4 \ceil{1/\gamma_n}^{2}(A_1\expe{V(X_0^0)} + A_3^2)  \\ &\phantom{aa}+
24\ceil{1/\gamma_n}^2A_2^2(1-\rho^{1/2})^{-2}\defEns{A_1\expe{V(X_0^0)}(A_2(1-\rho)^{-1} + 2) + q_n A_3}                                                                                  \\
  & \leq  \left[\gamma_n^{-2} \parenthese{A_1\expe{V(X_0^0)} \parentheseDeux{24A_2^2(1-\rho^{1/2})^{-2}\defEns{A_2(1-\rho)^{-1} + 2} + 4} + 4A_3^2} \right.\\ &\left. + 24A_2^2(1-\rho^{1/2})^{-2}A_3m_n/\gamma_n\right](1 + \bgamma)^2 \label{eq:ineq_final_1}
\end{align}
Combining \eqref{eq:ineq_0} and \eqref{eq:ineq_final_1} in \eqref{eq:decompo_1} we obtain that
\begin{align}
  &\expe{\norm{ \sum_{k=1}^{m_n} H_{\theta_n}(X_k^n) - \expe{H_{\theta_n}(X_k^n)}}^2}  \leq 8 \gamma_n^{-2} A_1A_2^2 \rho^{-2\bgamma}\log^{-2}(1/\rho)  \expe{V(X_0^0)}
  \\
  &+ 2\left[\gamma_n^{-2} \parenthese{A_1\expe{V(X_0^0)} \parentheseDeux{24A_2^2(1-\rho^{1/2})^{-2}\defEns{A_2(1-\rho)^{-1} + 2} + 4} + 4A_3^2} \right.\\ &\left. + 24A_2^2(1-\rho^{1/2})^{-2}A_3m_n/\gamma_n\right](1 + \bgamma)^2
  \\ &\leq 2 (1+\bgamma)^2\left(A_1\expe{V(X_0^0)} \left[24A_2^2(1-\rho^{1/2})^{-2}\defEns{A_2(1-\rho)^{-1} + 2} \right. \right.\\  &\left. \left.+ 4\defEns{1 + A_2^2 \log^{-2}(1/\rho)}\right] + 4A_3^2\right)\gamma_n^{-2}  +  48A_2^2(1-\rho^{1/2})^{-2}A_3(1 + \bgamma)^2(m_n/\gamma_n) \eqsp ,
\end{align}
which concludes the proof for $n \neq 0$. The same inequality holds in the case where $n = 0$.
\end{proof}

We now turn to the proof of \Cref{thm:salem_cv}.
\begin{proof}[Proof of \Cref{thm:salem_cv}]
  The proof is an application of \cite[Theorem 2, Theorem 3]{atchade2017perturbed}.
  \begin{enumerate}[wide, labelwidth=!, labelindent=0pt, label=(\alph*)]
  \item \label{item:aa}
To apply  \cite[Theorem 2]{atchade2017perturbed}, it is enough to show that the following series converge \as
  \begin{equation}
    \sum_{n =0}^{\plusinfty} \delta_{n+1} \langle \Pi_{\Theta} ( \theta_{n} - \delta_{n+1} \nabla f (\theta_{n})), \eta_{n}^{(i)} \rangle \eqsp , \quad \sum_{n=0}^{\plusinfty} \delta_{n+1}  \eta_n^{(i)}   \eqsp , \quad \sum_{n=0}^{\plusinfty} \delta_{n+1}^2 \| \eta_n^{(i)} \|^2  \eqsp .
  \end{equation}
  where $i \in \{1, 2\}$ and the sequences $(\eta_n^{(1)})_{n \in \nset}$ and $(\eta_n^{(2)})_{n \in \nset}$ are given in \eqref{eq:decompo_1}.

  In the case where $i=1$, since $(\Pi_{\Theta} ( \theta_n - \delta_{n+1} \nabla f(\theta_n)))_{n \in \nset}$ is bounded, we are reduced to proving that \as \ $\sum_{n=0}^{\plusinfty} \delta_{n+1} \| \eta_n^{(1)} \| < + \infty$.
  Using \eqref{eq:condition_cv}, \Cref{lemma:error_bound} and Fubini-Tonelli's theorem we obtain that
  \begin{equation}\expe{\sum_{n \in \nset}\delta_{n+1} \| \eta_n^{(1)} \|} = \sum_{n \in \nset}\delta_{n+1} \expe{\| \eta_n^{(1)} \|} < +\infty \eqsp . \label{eq:cv_1}
  \end{equation}

  We consider the case where $i=2$. Let $(S_n)_{n \in \nset} $ and $(T_n)_{n \in \nset}$ be defined for any $n \in \nset$ by $S_n = \sum_{k=0}^n \delta_{k+1} \langle \Pi_{\Theta} ( \theta_{k} - \delta_{k+1} \nabla f (\theta_{k})), \eta_{k}^{(2)} \rangle$ and $T_n = \sum_{k=0}^n \delta_{k+1} \eta_n^{(2)}$ are $(\mathcal{F}_{n})_{n \in \nset}$-martingale by definition of $(\eta_n^{(2)})_{n \in \nset}$ in \eqref{eq:decompo_1} and $(\mathcal{F}_n)_{n \in \nset}$ in \eqref{eq:def_F_n}. Therefore, using \cite[Section 12.5]{williams1991prob}, the Cauchy-Schwarz inequality and that the sequence $(\Pi_{\Theta} ( \theta_n - \delta_{n+1} \nabla f(\theta_n)))_{n \in \nset}$ is bounded, it suffices to show that $\sum_{n=0}^{+\infty} \delta_{n+1}^2 \expeLigne{\| \eta_n^{(2)} \|^2} < +\infty$. Using \Cref{lemma:error_variance} we get that
  \begin{equation}
    \sum_{n=0}^{+\infty} \delta_{n+1}^2 \expeLigne{\| \eta_n^{(2)} \|^2} \leq B_2 \parenthese{\sum_{n=0}^{+\infty} \delta_{n+1}^2 / (m_n \gamma_n) + \expe{V(X_0^0)}\sum_{n=0}^{+\infty} \delta_{n+1}^2 / (m_n \gamma_n)^2 } \eqsp .
  \end{equation}
  Combining this result and \eqref{eq:cv_1} implies the stated convergence applying \cite[Theorem 2]{atchade2017perturbed}. 
\item Applying \cite[Theorem 3]{atchade2017perturbed}, the Cauchy-Schwarz inequality and using \Cref{assum:theta_compact} we obtain that \as~for any $n \in \N$
  \begin{multline}
    \label{eq:majo_error}
    \sum_{k=1}^n \delta_{k} \defEns{f(\theta_{k}) - \min_{\Theta} f}   \\
    \leq \frac{\| \theta_0 - \theta^{\star} \|^2}{2} - \sum_{k=0}^{n-1} \delta_{k+1} \langle\Pi_{\Theta} ( \theta_k - \delta_{k+1} \nabla f(\theta_k)) -\theta^{\star}, \eta_k \rangle + \sum_{k=0}^{n-1} \delta_{k+1}^2 \| \eta_k \|^2 \\
    \leq 2\Rtheta^2 - \sum_{i=1}^2\sum_{k=0}^{n-1} \delta_{k+1} \langle\Pi_{\Theta} ( \theta_k - \delta_{k+1} \nabla f(\theta_k)) -\theta^{\star}, \eta_k^{(i)} \rangle + 2\sum_{î=1}^2\sum_{k=0}^{n-1} \delta_{k+1}^2 \| \eta_k^{(i)} \|^2 \eqsp .    
   \end{multline}
   which implies by the proof of \ref{item:aa} that $\sup_{n \in \nset}  [ \sum_{k=1}^n \delta_{k} \defEns{f(\theta_{k}) - \min_{\Theta} f}] < +\infty$ \as . The proof is then completed upon dividing \eqref{eq:majo_error}  by $\sum_{k=1}^n \delta_k$.
 \end{enumerate}
\end{proof}
\subsection{Proof of \Cref{thm:salem_cv_control}}
\label{thm:salem_cv_control_proof}

\begin{proof}
  Taking the  expectation in \eqref{eq:majo_error} and using that $\eta_n^{(2)}$ is a martingale increment with respect to $(\mathcal{F}_n)_{n \in \nset}$, we get that for every $n \in \N$
    \begin{align}
      &\expe{ \sum_{k=1}^{n} \delta_{k} \defEns{f(\theta_k) - \min_{\Theta} f}} \\ &\leq \expe{2\Rtheta^2 - \sum_{k=0}^{n-1} \delta_{k+1} \langle\Pi_{\Theta} ( \theta_k - \delta_{k+1} \nabla f(\theta_k)) -\theta^{\star}, \eta_k \rangle + \sum_{k=0}^{n-1} \delta_{k+1}^2 \| \eta_k \|^2 } \\
      &\leq 2 \Rtheta^2 + 2 \Rtheta \sum_{k=0}^{n-1} \delta_{k+1} \expe{\norm{\eta_k^{(1)}}} + 2 \sum_{k=0}^{n-1} \delta_{k+1}^2 \expe{\norm{\eta_k^{(1)}}^2} + 2 \sum_{k=0}^{n-1} \delta_{k+1}^2 \expe{\norm{\eta_k^{(2)}}^2}
              \label{eq:majo_L1}
    \end{align}
    Combining this result, \Cref{lemma:error_bound} and \Cref{lemma:error_variance} completes the proof.
 \end{proof}



 \subsection{Proof of \Cref{thm:salem_cv_fix}}
 \label{thm:salem_cv_fix_proof}

We now introduce some tools needed for the proof.  By  \Cref{assum:H_lip} and \Cref{assum:condition_majo_V}-\ref{assum:condition_majo_V_i}-\ref{assum:condition_majo_V_ii}, for any
  $\theta \in \Theta$ and $ \gamma \in \ocint{0,\bgamma}$, there exists a
  function $\hat{H}_{\gamma,\theta}: \rset^d \to \rset^{d_{\theta}}$ solution of
  the \textit{Poisson equation},
  \begin{equation}
  \label{eq:poisson_eq}
(\Id-\Kker_{\gamma,\theta}) \hat{H}_{\gamma,\theta} = H_{\theta} -
  \pi_{\gamma, \theta}(H_{\theta})    \eqsp,
  \end{equation}
  defined for any  $x \in \rset^d$ by 
  \begin{equation}
    \label{eq:def_poisson}
       \hat{H}_{\gamma,\theta}(x) = \sum_{j \in \N} \{
  \Kker_{\gamma,\theta}^jH_{\theta}(x) - \pi_{\gamma,
    \theta}(H_{\theta})\} \eqsp .
\end{equation}
Note that using \Cref{assum:condition_majo_V}-\ref{assum:condition_majo_V_ii} and \Cref{lemma:jensen} we have for any $\theta \in \Theta$ and $x \in \rset^d$
\begin{equation}
  \norm{\hat{H}_{\theta}(x)} \leq C_{\hat{H}} \gamma^{-1} V^{1/4}(x)  \label{eq:majo_poisson} \eqsp , \qquad C_{\hat{H}} = 8 A_2 \log^{-1}(1/\rho) \rho^{-\bgamma/4} \eqsp .
\end{equation}
Define for any $n \in \nset$
\begin{equation}
  \label{eq:tetan_def}\teta_n = H_{\theta_n}(\tX_{n+1}) - \pi_{\tilde{\theta}_n}(H_{\tilde{\theta}_n}) \eqsp . \end{equation}
Using \eqref{eq:poisson_eq} an alternative expression of $(\teta_n)_{n \in \nset}$ is given for any $n \in \N$ by 
\begin{align}
  &  \teta_{n} =    \tpoiss{n}(\tX_{n+1})-\tkernel{n} \tpoiss{n}(\tX_{n+1}) + \pi_{\gamma_n, \ttheta_n}(H_{\ttheta_n}) - \pi_{\ttheta_{n}}(H_{\ttheta_{n}}) \\
\label{eq:proof_salem_cv_fix_decomp_eta}
  & \phantom{aa} = \tetaa{n} + \tetab{n} + \tetac{n} + \tetad{n} \eqsp,
\end{align}
where 
\begin{equation}
\label{eq:tetan_def_2}
  \begin{aligned}
  \tetaa{n} &=  \tpoiss{n}(\tX_{n+1}) - \tkernel{n}\tpoiss{n}(\tX_n) \eqsp,
  \\
  \tetab{n} &=  \tkernel{n}\tpoiss{n}(\tX_{n}) - \tkernel{n+1}\tpoiss{n+1}(\tX_{n+1})  \eqsp ,
  \\
  \tetac{n} &= \tkernel{n+1}\tpoiss{n+1}(\tX_{n+1}) -  \tkernel{n}\tpoiss{n}(\tX_{n+1})
  \eqsp ,
  \\
       \tetad{n} &= \pi_{\gamma_n, \ttheta_n}(H_{\ttheta_n}) - \pi_{\ttheta_n}(H_{\ttheta_n}) \eqsp.
    \end{aligned}
  \end{equation}
To establish \Cref{thm:salem_cv_fix} we need to get estimates on moments of $\norm{\teta_n^{(i)}}$ for $i \in \lbrace a,b,c,d \rbrace$. It is the matter of the following technical results.
  
  \begin{lemma}
    \label{lemma:cv_sq_fix}
    Assume \tup{\Cref{assum:theta_compact}}, \tup{\Cref{assum:f_diff}}, \tup{\Cref{assum:grad_expec}}, \tup{\Cref{assum:condition_majo_V}} and that for any $\theta \in \Theta$ and $x \in \rset^d$, $\normLigne{H_{\theta}(x)} \leq V^{1/4}(x)$. Then we have for any $n \in \nset$, 
 $   \expe{ \| \teta_n\|^2}  \leq C_1$, 
  with
  \begin{equation}
    \label{eq:C1}
    C_1 = 2A_1\expe{V^{1/2}(\tX_0)}  +  2 \sup_{\theta \in \Theta} \| \nabla f(\theta) \|^2  \eqsp .
  \end{equation}
\end{lemma}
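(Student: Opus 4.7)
The plan is to split $\|\teta_n\|^2$ into the contributions coming from the sample $H_{\theta_n}(\tX_{n+1})$ and from its ``target'' $\pi_{\ttheta_n}(H_{\ttheta_n})$, and bound each separately using the assumptions at hand. More precisely, by the elementary inequality $(a+b)^2 \leq 2a^2 + 2b^2$, I would write
\begin{equation}
\|\teta_n\|^2 \leq 2 \|H_{\theta_n}(\tX_{n+1})\|^2 + 2 \|\pi_{\ttheta_n}(H_{\ttheta_n})\|^2 \eqsp.
\end{equation}

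For the first summand, I would apply the hypothesis $\|H_\theta(x)\| \leq V^{1/4}(x)$, which gives $\|H_{\theta_n}(\tX_{n+1})\|^2 \leq V^{1/2}(\tX_{n+1})$. The moment $\expe{V^{1/2}(\tX_{n+1})}$ is then controlled by \tup{\Cref{assum:condition_majo_V}}-\ref{assum:condition_majo_V_i}: indeed, as remarked just after \Cref{lemma:jensen}, Jensen's inequality implies that the same condition holds with $V$ replaced by $V^{1/2}$, so that
\begin{equation}
\expe{V^{1/2}(\tX_{n+1})} \leq A_1 \expe{V^{1/2}(\tX_0)} \eqsp.
\end{equation}
(The recursion \eqref{eq:algo_x_fix} fits into the framework of \eqref{eq:algo_SOUL} with $m_n = 1$ and $\tX_{n+1} = X_1^n$, so this is a direct application.)

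For the second summand, I would invoke \tup{\Cref{assum:grad_expec}}: since $\pi_\theta(H_\theta) = \nabla f(\theta)$ for every $\theta \in \Theta$, we obtain $\|\pi_{\ttheta_n}(H_{\ttheta_n})\|^2 \leq \sup_{\theta \in \Theta} \|\nabla f(\theta)\|^2$ pointwise. (This supremum is finite by \tup{\Cref{assum:theta_compact}} and \tup{\Cref{assum:f_diff}}, which together imply continuity of $\nabla f$ on a compact set.) Combining the two estimates and taking expectations yields
\begin{equation}
\expe{\|\teta_n\|^2} \leq 2 A_1 \expe{V^{1/2}(\tX_0)} + 2 \sup_{\theta \in \Theta}\|\nabla f(\theta)\|^2 = C_1 \eqsp,
\end{equation}
which is the claimed bound. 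No real obstacle is expected here: the result is a direct consequence of the uniform drift in \tup{\Cref{assum:condition_majo_V}}-\ref{assum:condition_majo_V_i}, the pointwise envelope on $H_\theta$, and the identification $\pi_\theta(H_\theta) = \nabla f(\theta)$. The only point to keep in mind is to correctly invoke the version of the drift inequality for $V^{1/2}$ rather than $V$, which is legitimate since $A_1 \geq 1$.
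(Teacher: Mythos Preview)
Your proof is correct and follows essentially the same approach as the paper: split via $\|a-b\|^2 \leq 2\|a\|^2 + 2\|b\|^2$, bound the sample term using the envelope $\|H_\theta(x)\|\leq V^{1/4}(x)$ together with the drift \tup{\Cref{assum:condition_majo_V}}-\ref{assum:condition_majo_V_i} (applied to $V^{1/2}$ via Jensen), and bound the target term using $\pi_\theta(H_\theta)=\nabla f(\theta)$ from \tup{\Cref{assum:grad_expec}} and the compactness/continuity from \tup{\Cref{assum:theta_compact}}, \tup{\Cref{assum:f_diff}}.
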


  \begin{proof}
    Using \eqref{eq:tetan_def}, that $\norm[2]{x+y} \leq 2 (\norm[2]{x} + \norm[2]{y})$ for any $x,y \in \rset^d$, \Cref{assum:theta_compact}, \Cref{assum:f_diff}, \Cref{assum:grad_expec} and \Cref{assum:condition_majo_V}-\ref{assum:condition_majo_V_i} and that for any $\theta \in \Theta$ and $x \in \rset^d$, $\normLigne{H_{\theta}(x)} \leq V^{1/2}(x)$, we get for any $ k \in \N$,
    \begin{align}
      \expe{\| \teta_k \|^2} &\leq 2 \expeLigne{\normLigne{H_{\ttheta_k}(\tX_{k+1})}^2} + 2\parentheseDeux{\pi_{\ttheta_k}\parentheseLigne{\normLigne{H_{\ttheta_k}}}}^2
      \\ &\leq 2 A_1 \expe{V^{1/2}(\tX_0)} + 2 \sup_{\theta \in \Theta} \| \nabla f(\theta) \|^2 \eqsp .
    \end{align}

\end{proof}

\begin{lemma}
  \label{lemma:cv_norm_a}
    Assume \tup{\Cref{assum:theta_compact}}, \tup{\Cref{assum:f_diff}}, \tup{\Cref{assum:grad_expec}}, \tup{\Cref{assum:H_lip}}, \tup{\Cref{assum:condition_majo_V}}, \tup{\Cref{assum:condition_kernel_fix}} and that for any $\theta \in \Theta$ and $x \in \rset^d$, $\normLigne{H_{\theta}(x)} \leq V^{1/4}(x)$. Then we have for any $n \in \nset$, 
    $\expe{\norm{\tetaa{n}}^2} \leq \tilde{C_1} \gamma_n^{-2}$, 
  with
  \begin{equation}
    \label{eq:C0}
    \tilde{C_1} = A_1 C_{\hat{H}}^2 \expe{V^{1/2}(\tX_0)} \eqsp .
  \end{equation}
\end{lemma}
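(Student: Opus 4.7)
The plan is to exploit the martingale-increment structure of $\tetaa{n}$ with respect to the filtration $(\tmcf_n)_{n \in \nset}$. Since $\ttheta_n$ is $\tmcf_n$-measurable and the conditional law of $\tX_{n+1}$ given $\tmcf_n$ is $\tkernel{n}(\tX_n,\cdot)$ by \eqref{eq:algo_x_fix}, one has $\CPE{\tpoiss{n}(\tX_{n+1})}{\tmcf_n} = \tkernel{n}\tpoiss{n}(\tX_n)$ and hence $\CPE{\tetaa{n}}{\tmcf_n} = 0$. Expanding $\normLigne{\tetaa{n}}^2$ and taking conditional expectation makes the cross term collapse to $\normLigne{\tkernel{n}\tpoiss{n}(\tX_n)}^2$, so after total expectation one obtains
\begin{equation*}
\expe{\normLigne{\tetaa{n}}^2} = \expe{\normLigne{\tpoiss{n}(\tX_{n+1})}^2} - \expe{\normLigne{\tkernel{n}\tpoiss{n}(\tX_n)}^2} \leq \expe{\normLigne{\tpoiss{n}(\tX_{n+1})}^2}\eqsp.
\end{equation*}

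Next, I would plug in the pointwise bound on the Poisson solution already recorded in \eqref{eq:majo_poisson}, namely $\normLigne{\tpoiss{n}(x)} \leq C_{\hat{H}}\gamma_n^{-1}V^{1/4}(x)$ for every $x \in \rset^d$. Squaring and taking expectations yields
\begin{equation*}
\expe{\normLigne{\tpoiss{n}(\tX_{n+1})}^2} \leq C_{\hat{H}}^2\gamma_n^{-2}\expe{V^{1/2}(\tX_{n+1})}\eqsp.
\end{equation*}

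To control $\expe{V^{1/2}(\tX_{n+1})}$, I would invoke \Cref{assum:condition_majo_V}-\ref{assum:condition_majo_V_i} with $V$ replaced by $V^{1/2}$, an operation justified by the Jensen-type remark immediately following \Cref{lemma:jensen} (valid because $A_1 \geq 1$). Applied with $p=0$ and $k=1$, together with the warm-start convention of \eqref{eq:algo_x_fix} expressing $\tX_{n+1}$ as a single step of $\tkernel{n}$ from $\tX_n$, this yields $\expe{V^{1/2}(\tX_{n+1})} \leq A_1 \expe{V^{1/2}(\tX_0)}$. Substituting back delivers $\expe{\normLigne{\tetaa{n}}^2} \leq A_1 C_{\hat{H}}^2 \expe{V^{1/2}(\tX_0)}\, \gamma_n^{-2} = \tilde{C_1}\gamma_n^{-2}$, as required.

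This argument poses no serious obstacle: it is the classical $L^2$-orthogonality of martingale increments combined with the uniform control of the Poisson solution already available through \eqref{eq:majo_poisson}. The only two points requiring attention are that the hypothesis $\normLigne{H_\theta(x)} \leq V^{1/4}(x)$ is exactly what makes the Poisson bound \eqref{eq:majo_poisson} applicable in the required $V^{1/4}$-form, and that the extension of \Cref{assum:condition_majo_V}-\ref{assum:condition_majo_V_i} from $V$ to $V^{1/2}$ preserves the constant $A_1$.
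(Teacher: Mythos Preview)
Your proposal is correct and follows essentially the same approach as the paper: both exploit the martingale-increment identity $\CPE{\normLigne{\tetaa{n}}^2}{\tmcf_n} = \CPE{\normLigne{\tpoiss{n}(\tX_{n+1})}^2}{\tmcf_n} - \normLigne{\tkernel{n}\tpoiss{n}(\tX_n)}^2$, drop the negative term, apply the Poisson bound \eqref{eq:majo_poisson}, and then use \Cref{assum:condition_majo_V}-\ref{assum:condition_majo_V_i} (extended to $V^{1/2}$ via Jensen) to get $\expe{V^{1/2}(\tX_{n+1})} \leq A_1\expe{V^{1/2}(\tX_0)}$. Your write-up is actually more explicit than the paper's about why the cross term collapses and why the Jensen step preserves the constant $A_1$.
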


\begin{proof}
By \eqref{eq:tetan_def_2}, using \eqref{eq:majo_poisson} and \Cref{assum:condition_majo_V}-\ref{assum:condition_majo_V_i} we get that for any $n \in \nsets$
\begin{align}
  &\expe{\CPE{\norm{\teta_n^{(a)}}^2}{\mathcal{F}_{n}}} \\ &\leq \expe{\CPE{\norm{\tpoiss{n}(\tX_{n+1})}^2}{\mathcal{F}_n}}  - \expe{\norm{\tkernel{n}\tpoiss{n}(\tX_n)}^2} \\
  &\leq A_1 C_{\hat{H}}^2 \gamma_n^{-2} \expe{V^{1/2}(\tX_0)} \eqsp ,
\end{align}
which concludes the proof.
\end{proof}

\begin{lemma}
  \label{lemma:cv_norm_b}
    Assume \tup{\Cref{assum:theta_compact}}, \tup{\Cref{assum:f_diff}}, \tup{\Cref{assum:grad_expec}}, \tup{\Cref{assum:condition_majo_V}} and that for any $\theta \in \Theta$ and $x \in \rset^d$, $\normLigne{H_{\theta}(x)} \leq V^{1/4}(x)$. Then the following statements hold.
    \begin{enumerate}[label=(\alph*), wide, labelwidth=!, labelindent=0pt]
    \item \label{lemma:cv_norm_b_i}There exists $C_3 \geq 0$ such that for any $n \in \nset$ and $\theta \in \Theta$
      \begin{equation}
        \expe{\norm{\sum_{k=0}^n \delta_{k+1} \langle a_{k+1}, \teta_k^b \rangle}} \leq C_3 \parentheseDeux{\sum_{k=0}^n \abs{\delta_{k+1} - \delta_k} \gamma_k^{-1} +  \sum_{k=0}^n \delta_{k+1}^2\gamma_k^{-1} +  (\delta_{n+1} / \gamma_{n+1} - \delta_1 / \gamma_1)} \eqsp .
      \end{equation}
      with $a_{k+1} = \Pi_{\Theta} \parentheseDeux{\ttheta_k - \delta_{k+1} \nabla f(\ttheta_k)} - \theta^{\star}$, $\theta^{\star} \in \argmin_{\Theta} f$  and 
      \begin{equation}
        \label{eq:C3}
        C_3 =  A_1C_{\hat{H}} (4 \Rtheta + \sup_{\theta \in \Theta} \norm{\nabla f(\theta)} + 1 + \delta_1 L_f) \expe{V^{1/4}(\tX_0)} \eqsp .
      \end{equation} 
    \item \label{lemma:cv_norm_b_ii} If \eqref{eq:condition_cv_fix_1} holds then $\sum_{k=0}^n \delta_{k+1}\langle a_{k+1},  \tetab{k} \rangle$ converges \as .
    \end{enumerate}
  \end{lemma}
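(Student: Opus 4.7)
The approach for (a) is Abel summation (summation by parts), exploiting that $\teta_k^b = w_k - w_{k+1}$ with $w_k := \tkernel{k}\tpoiss{k}(\tX_k)$, giving
\begin{equation*}
\sum_{k=0}^n \delta_{k+1}\langle a_{k+1},\teta_k^b\rangle = \delta_1\langle a_1,w_0\rangle - \delta_{n+1}\langle a_{n+1},w_{n+1}\rangle + \sum_{k=1}^n \langle \delta_{k+1}a_{k+1}-\delta_k a_k,\, w_k\rangle.
\end{equation*}
The Poisson estimate \eqref{eq:majo_poisson} yields $\norm{w_k} \leq C_{\hat H}\gamma_k^{-1}(\tkernel{k}V^{1/4})(\tX_k)$, and applying \tup{\Cref{assum:condition_majo_V}}-\ref{assum:condition_majo_V_i} with $V^{1/4}$ (legitimate by Jensen) gives $\expe{\norm{w_k}} \leq A_1 C_{\hat H}\gamma_k^{-1}\expe{V^{1/4}(\tX_0)}$. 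Combined with $\norm{a_k}\leq 2\Rtheta$ from \tup{\Cref{assum:theta_compact}}, the two boundary terms supply the $\delta_1/\gamma_1$ and $\delta_{n+1}/\gamma_{n+1}$ contributions in the stated bound.

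For the middle sum I would split $\delta_{k+1}a_{k+1}-\delta_k a_k = (\delta_{k+1}-\delta_k)a_{k+1} + \delta_k(a_{k+1}-a_k)$. The first piece directly produces $\sum|\delta_{k+1}-\delta_k|\gamma_k^{-1}$. For $\norm{a_{k+1}-a_k}$, the $1$-Lipschitz property of $\Pi_\Theta$ combined with the $L_f$-Lipschitz property of $\nabla f$ (\tup{\Cref{assum:f_diff}}) and the one-step bound $\norm{\ttheta_k-\ttheta_{k-1}}\leq\delta_k V^{1/4}(\tX_k)$ (from $\norm{H_\theta(x)}\leq V^{1/4}(x)$ and the definition of the recursion \eqref{eq:algo_x_fix}) yield
\begin{equation*}
\norm{a_{k+1}-a_k} \leq (1+\delta_1 L_f)\delta_k V^{1/4}(\tX_k) + |\delta_{k+1}-\delta_k|\sup_{\theta\in\Theta}\norm{\nabla f(\theta)}.
\end{equation*}
Cauchy--Schwarz followed by the Lyapunov bound then delivers the key estimate $\expe{V^{1/4}(\tX_k)(\tkernel{k}V^{1/4})(\tX_k)} = \expe{V^{1/4}(\tX_k)V^{1/4}(\tX_{k+1})} \leq A_1\expe{V^{1/2}(\tX_0)}$, so the second piece produces $\sum\delta_{k+1}^2\gamma_k^{-1}$, and the cross-term $\delta_k|\delta_{k+1}-\delta_k|\gamma_k^{-1}$ can be absorbed into the first sum via $\delta_k \leq \delta_1$. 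Assembling these contributions gives (a) with an appropriate choice of $C_3$.

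For (b), under \eqref{eq:condition_cv_fix_1} each sum in the bound from (a) is summable, so by Fubini--Tonelli the middle sum $\sum_{k\geq 1}\langle \delta_{k+1}a_{k+1} - \delta_k a_k, w_k\rangle$ converges absolutely, hence \as~. It then remains to show that the boundary term $\delta_{n+1}\langle a_{n+1}, w_{n+1}\rangle$ tends to $0$ \as~. I would bound $\expe{(\delta_{n+1}\langle a_{n+1},w_{n+1}\rangle)^2} \leq 4\Rtheta^2 C_{\hat H}^2 A_1 \delta_{n+1}^2\gamma_{n+1}^{-2}\expe{V^{1/2}(\tX_0)}$ using Jensen on $(\tkernel{n+1}V^{1/4})^2 \leq \tkernel{n+1}V^{1/2}$, and then invoke the summability $\sum \delta_{n+1}^2\gamma_n^{-2}<\infty$ granted by \eqref{eq:condition_cv_fix_1} together with Markov's inequality and Borel--Cantelli to upgrade to \as~convergence.

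The main obstacle is precisely this last step: the $L^1$ decay $\expe{|\delta_{n+1}\langle a_{n+1},w_{n+1}\rangle|} = O(\delta_{n+1}/\gamma_{n+1})$ alone does not imply \as~convergence of the boundary term to $0$, so one genuinely needs the square-summability hypothesis in \eqref{eq:condition_cv_fix_1} to close the argument; a secondary technical nuisance is the mismatch between $\gamma_n^{-2}$ (as appearing in the hypothesis) and $\gamma_{n+1}^{-2}$ (as appearing after Abel), which is handled by monotonicity of $(\gamma_n)$ up to a constant factor controlled by $\sup_n \gamma_n/\gamma_{n+1}$.
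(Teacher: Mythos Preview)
Your proposal is correct and follows essentially the same approach as the paper: Abel summation on the telescoping $\teta_k^{(b)} = w_k - w_{k+1}$, the Poisson bound \eqref{eq:majo_poisson} for $\norm{w_k}$, and the split of $\delta_{k+1}a_{k+1}-\delta_k a_k$ together with the one-step control $\norm{\ttheta_k-\ttheta_{k-1}}\leq \delta_k V^{1/4}(\tX_k)$ for part~(a), then absolute convergence of the inner sum plus Markov/Borel--Cantelli on the boundary term via $\sum\delta_{n+1}^2\gamma_n^{-2}<\infty$ for part~(b). Your observation about the $\gamma_n^{-2}$ versus $\gamma_{n+1}^{-2}$ mismatch is apt---the paper glosses over this same point by silently reindexing.
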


  \begin{proof}

    By \eqref{eq:tetan_def_2} we have for any $n \in \nset$ and $\theta \in \Theta$ 
    \begin{align}
     &\sum_{k=0}^n  \delta_{k+1} \langle a_{k+1}, \teta_k^{(b)}\rangle  \\ &= \sum_{k=0}^n \langle \delta_{k+1} a_{k+1},   \tkernel{k}\tpoiss{k}(\tX_k) -  \tkernel{k+1}\tpoiss{k+1}(\tX_{k+1}) \rangle \\ &=  \sum_{k=1}^{n} \langle \delta_{k+1}a_{k+1} - \delta_{k}a_{k}, \tkernel{k} \tpoiss{k} (\tX_{k}) \rangle  \\ &- \langle \delta_{n+1}a_{n+1}, \tkernel{n+1} \tpoiss{n+1} (\tX_{n+1}) \rangle \\ &+ \langle \delta_1 a_{1}, \tkernel{0}\tpoiss{0} (\tX_0) \rangle  \eqsp ,\label{eq:ipp}
    \end{align}
    In addition, we have for any $n \in \nset$, $\theta \in \Theta$ using \Cref{assum:theta_compact}, \Cref{assum:f_diff}, that $\Pi_{\Theta}$ is non-expansive, \eqref{eq:algo_x_fix}, \Cref{assum:condition_majo_V}-\ref{assum:condition_majo_V_i} and that for any $\theta \in \Theta$ and $x \in \rset^d$, $\normLigne{H_{\theta}(x)} \leq V^{1/4}(x)$
    \begin{align}
      &\norm{\delta_{n+1} a_{n+1} - \delta_n a_{n}} \leq 2 \Rtheta \abs{\delta_{n+1} - \delta_n} + \delta_{n+1} \norm{a_{n+1} - a_{n}}\\
      & \qquad \leq 2 \Rtheta \abs{\delta_{n+1} - \delta_n} + (1 + \delta_n L_f) \norm{\theta_{n+1} - \theta_n} + \abs{\delta_{n+1} - \delta_n} \norm{\nabla f(\theta_{n+1})} \\
      & \qquad \leq (2 \Rtheta + \sup_{\theta \in \Theta} \norm{\nabla f(\theta)}) \abs{\delta_{n+1} - \delta_n} + \delta_{n+1}^2 (1+ \delta_{n+1} L_f)V^{1/4}(\tX_{n+1}) \eqsp . \label{eq:inter_1}
    \end{align}
\begin{enumerate}[wide, labelwidth=!, labelindent=0pt, label=(\alph*)]
  
\item Combining \eqref{eq:ipp}, \eqref{eq:inter_1}, \eqref{eq:majo_poisson}, the Cauchy-Schwarz inequality and \Cref{assum:condition_majo_V}-\ref{assum:condition_majo_V_i} we get that
  \begin{align}
    \expe{\norm{\sum_{k=0}^n  \delta_{k+1} \langle a_{k}, \teta_k^{[b)}\rangle}} &\leq (2\Rtheta + \sup_{\theta \in \Theta} \norm{\nabla f(\theta)})A_1 C_{\hat{H}} \expe{V^{1/4}(\tX_0)} \sum_{k=0}^n \abs{\delta_{k+1} - \delta_k} \gamma_k^{-1} \\ &+ A_1 C_{\hat{H}} (1+\delta_1 L_f)\expe{V^{1/4}(\tX_0)} \sum_{k=0}^n \delta_{k+1}^2\gamma_k^{-1} \\ &+ 2 A_1\Rtheta C_{\hat{H}} \expe{V^{1/4}(\tX_0)} \defEns{\delta_{n+1}/\gamma_{n+1} + \delta_1 / \gamma_1} \eqsp ,
  \end{align}
  which concludes the proof of \Cref{lemma:cv_norm_b}-\ref{lemma:cv_norm_b_i}.
\item Assume now
  \eqref{eq:condition_cv_fix_1}. We show that \as \ the first term in \eqref{eq:ipp} is absolutely convergence and the second term converges to $0$.
    
  
Using \eqref{eq:inter_1}, \eqref{eq:majo_poisson}, the Cauchy-Schwarz inequality and 
\eqref{eq:condition_cv_fix_1} we get that
\begin{align}
&\expe{\sum_{k=1}^{+\infty} \abs{\langle \delta_{k+1}a_{k+1} - \delta_{k}a_{k}, \tkernel{k} \tpoiss{k} (\tX_{k}) \rangle} } \\ &\leq (2\Rtheta + \sup_{\theta \in \Theta} \norm{\nabla f(\theta)})A_1C_{\hat{H}}\expe{V^{1/4}(\tX_0)} \sum_{k=0}^{+\infty} \abs{\delta_{k+1} - \delta_k} \gamma_k^{-1}  \\ & \qquad + A_1 C_{\hat{H}}(1 + \delta_1 L_f) \expe{V^{1/4}(\tX_0)} \sum_{k=0}^{+\infty} \delta_{k+1}^2 < +\infty\eqsp ,
\end{align}
which implies that $(\langle \delta_{k+1}a_{k+1} - \delta_{k}a_{k}, \tkernel{k} \tpoiss{k} (\tX_{k})\rangle)_{k \in \nset}$ is absolutely convergent \as . 

We have that $ \tkernel{n+1} \|\tpoiss{n+1} (\tX_{n+1}) \|$ is upper-bounded using \eqref{eq:majo_poisson} by $\gamma_{n+1}^{-1} C_{\hat{H}} \tkernel{n+1} V^{1/4}(\tX_{n+1})$. 
It follows that we have for any $\theta \in \Theta$, $\vareps >0$,  using the Markov inequality, the Cauchy-Schwarz inequality, \eqref{eq:majo_poisson} 
and \eqref{eq:condition_cv_fix_1}
\begin{align}
  &\sum_{n \in \N} \proba{\norm{a_{n+1}} \delta_{n+1}  \tkernel{n+1} \|\tpoiss{n+1} (\tX_{n+1})\| \geq \vareps} \\
  & \qquad  \leq 
    \sum_{n \in \N} \proba{2C_{\hat{H}}\Rtheta\,  \delta_{n+1} \, \gamma_{n+1}^{-1} \, \tkernel{n+1}V^{1/4}(\tX_{n+1})  \geq \vareps}\\
  &  \qquad  \leq  4\vareps^{-2} \Rtheta^2 C_{\hat{H}}^2 A_1 \expe{V^{1/2}(\tX_0)}   \sum_{n \in \N} \delta_n^2 \gamma_n^{-2}  < +\infty\eqsp,
\end{align} 
Using the Borel-Cantelli lemma, we get $\lim_{n \to +\infty} \langle \delta_n a_{n} \tkernel{n} \tpoiss{n}(\tX_n) \rangle = 0$ \as .
This completes the proof of convergence of the series $\sum_{k \in \nset} \delta_{k+1} \langle a_{k+1}, \teta_k^{(b)} \rangle $ for any $\theta \in \Theta$.

\end{enumerate}
\end{proof}

\begin{lemma}
  \label{lemma:cv_norm_c_d}
    Assume \tup{\Cref{assum:theta_compact}}, \tup{\Cref{assum:f_diff}}, \tup{\Cref{assum:grad_expec}}, \tup{\Cref{assum:H_lip}}, \tup{\Cref{assum:condition_majo_V}}, \tup{\Cref{assum:condition_kernel_fix}} and that for any $\theta \in \Theta$ and $x \in \rset^d$, $\normLigne{H_{\theta}(x)} \leq V^{1/4}(x)$. Then we have for any $n \in \nset$
    \begin{equation}
      \expe{\norm{\tetac{n}}} \leq C_2 \gamma_{n+1}^{-1}\parentheseDeux{ \gamma_{n+1}^{-1} \defEns{\Lambdabf_1(\gamma_{n}, \gamma_{n+1})  + \Lambdabf_2(\gamma_{n}, \gamma_{n+1})\delta_{n+1}} + \delta_{n+1}} \eqsp ,
  \end{equation}
  with
  \begin{equation}
    \label{eq:C2}
    C_2 = 4 A_1A_2 \log^{-1}(1/\rho) \rho^{-\bgamma/2} \,  \max\parentheseDeux{L_H, C_{c,1} +  2 A_2    \log^{-1}(1/\rho)\rho^{-\bgamma/2}  } \eqsp ,
  \end{equation}
  where $C_{c,1}$ is given by
  \begin{equation} \label{eq:Cc1} C_{c,1} =  4 A_1  A_2 \log^{-1}(1/\rho) \rho^{-\bgamma/2}\expe{V(\tX_0)} \eqsp . \end{equation}
\end{lemma}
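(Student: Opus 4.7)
The proof rests on introducing the intermediate quantity $\tkernel{n+1}\tpoiss{n}(\tX_{n+1})$ and decomposing
\begin{equation*}
\tetac{n} = \bigl[(\tkernel{n+1} - \tkernel{n})\tpoiss{n}\bigr](\tX_{n+1}) + \bigl[\tkernel{n+1}(\tpoiss{n+1} - \tpoiss{n})\bigr](\tX_{n+1}) =: \mathrm{I}_n + \mathrm{II}_n.
\end{equation*}
The plan is to bound each piece separately, with $\mathrm{I}_n$ producing a single $\gamma_{n+1}^{-1}$ factor from the Poisson solution and $\mathrm{II}_n$ producing an additional $\gamma_{n+1}^{-1}$ from a geometric sum; together these explain the $\gamma_{n+1}^{-2}$ appearing in front of $\Lambdabf_1 + \Lambdabf_2\delta_{n+1}$.

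For $\mathrm{I}_n$, I would apply \tup{\Cref{assum:condition_kernel_fix}} with $a = 1/4$ to the function $\tpoiss{n}$, whose $V^{1/4}$-norm is bounded by $C_{\hat H}/\gamma_n$ via \eqref{eq:majo_poisson}. The step displacement is controlled by the non-expansiveness of $\Pi_\Theta$ and the update rule \eqref{eq:algo_x_fix}, giving $\| \ttheta_{n+1} - \ttheta_n \| \leq \delta_{n+1} \| H_{\ttheta_n}(\tX_{n+1}) \| \leq \delta_{n+1} V^{1/4}(\tX_{n+1})$. Using $\gamma_n^{-1} \leq \gamma_{n+1}^{-1}$ (monotonicity of $(\gamma_n)$), Jensen's inequality to pass from $V^{1/2}(\tX_{n+1})$ and $V^{3/4}(\tX_{n+1})$ to $V(\tX_{n+1})$, and \tup{\Cref{assum:condition_majo_V}}-\ref{assum:condition_majo_V_i} to move back to $V(\tX_0)$, this already yields a contribution of order $\gamma_{n+1}^{-1}\bigl\{\Lambdabf_1(\gamma_n,\gamma_{n+1}) + \Lambdabf_2(\gamma_n,\gamma_{n+1})\delta_{n+1}\bigr\}$ times $C_{c,1}$.

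For $\mathrm{II}_n$, the plan is to expand via the series definition \eqref{eq:def_poisson}
\begin{equation*}
\tkernel{n+1}(\tpoiss{n+1} - \tpoiss{n}) = \sum_{j \geq 0}\Bigl[\tkernel{n+1}^{j+1}(\bar H_{n+1} - \bar H_n) + \bigl(\tkernel{n+1}^{j+1} - \tkernel{n+1}\tkernel{n}^j\bigr)\bar H_n\Bigr],
\end{equation*}
with $\bar H_n := H_{\ttheta_n} - \pi_{\gamma_n, \ttheta_n}(H_{\ttheta_n})$. The first family of terms is handled by \tup{\Cref{assum:H_lip}}, which gives $\| H_{\ttheta_{n+1}} - H_{\ttheta_n} \|_{V^{1/4}} \leq L_H\| \ttheta_{n+1} - \ttheta_n \| \leq L_H \delta_{n+1} V^{1/4}(\tX_{n+1})$ (the constant piece of $\bar H_{n+1} - \bar H_n$ is reabsorbed by centering against $\pi_{\gamma_{n+1},\ttheta_{n+1}}$), combined with \tup{\Cref{assum:condition_majo_V}}-\ref{assum:condition_majo_V_ii} so that the $V^{1/4}$-geometric decay sums via \Cref{lem:tech_sum_finie} to a single factor $\gamma_{n+1}^{-1}$; this produces the term $C_2\gamma_{n+1}^{-1}\delta_{n+1}$. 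The second family is telescoped as
\begin{equation*}
\tkernel{n+1}^{j+1} - \tkernel{n+1}\tkernel{n}^j = \tkernel{n+1}\sum_{\ell=0}^{j-1} \tkernel{n+1}^\ell (\tkernel{n+1} - \tkernel{n})\tkernel{n}^{j-1-\ell},
\end{equation*}
with \tup{\Cref{assum:condition_kernel_fix}} applied once at the middle step, and \tup{\Cref{assum:condition_majo_V}}-\ref{assum:condition_majo_V_i}-\ref{assum:condition_majo_V_ii} propagating $V$-moments through the two Markov kernel iterations on either side. After summing the double geometric series $\sum_{j\geq 1} j\rho^{(j-1)\gamma_{n+1}/2}$ via \Cref{lem:tech_sum_finie}, this contributes the $\gamma_{n+1}^{-2}\bigl\{\Lambdabf_1 + \Lambdabf_2\delta_{n+1}\bigr\}$ piece of the bound.

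The main obstacle will be the telescoping step in $\mathrm{II}_n$: the intermediate operators $\tkernel{n+1}^\ell(\tkernel{n+1} - \tkernel{n})\tkernel{n}^{j-1-\ell}$ are not themselves Markov kernel powers, so propagating the $V^a$-moments on both sides of the difference requires using \tup{\Cref{assum:condition_majo_V}}-\ref{assum:condition_majo_V_i} applied carefully to each of the two kernels separately and then combining via Jensen to move between the $V^{1/4}$, $V^{1/2}$, and $V^{3/4}$ norms needed by \tup{\Cref{assum:condition_kernel_fix}} and the final $V$-moment bound. The resulting constant $C_2$ collects the two factors $4 A_1 A_2 \log^{-1}(1/\rho)\rho^{-\bgamma/2}$ from the two geometric summations, together with $\max(L_H, C_{c,1} + 2 A_2 \log^{-1}(1/\rho)\rho^{-\bgamma/2})$ that tracks whether the dominant contribution comes from the Lipschitz change of $H$ or from the kernel iteration.
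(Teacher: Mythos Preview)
Your split $\tetac{n}=\mathrm{I}_n+\mathrm{II}_n$ is reasonable and $\mathrm{I}_n$ is handled correctly, but the treatment of $\mathrm{II}_n$ has a genuine gap. When you split
\[
\tkernel{n+1}(\tpoiss{n+1}-\tpoiss{n})=\sum_{j\geq 0}\tkernel{n+1}^{j+1}(\bar H_{n+1}-\bar H_n)+\sum_{j\geq 0}(\tkernel{n+1}^{j+1}-\tkernel{n+1}\tkernel{n}^{j})\bar H_n,
\]
the two families are \emph{separately divergent}: the common offending piece is $\sum_{j}\tkernel{n+1}^{j+1}\bar H_n$, whose general term tends to the nonzero constant $\pi_{\gamma_{n+1},\ttheta_{n+1}}(H_{\ttheta_n})-\pi_{\gamma_n,\ttheta_n}(H_{\ttheta_n})$. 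Your parenthetical ``reabsorb by centering against $\pi_{\gamma_{n+1},\ttheta_{n+1}}$'' is the right instinct, but once you center you must control the leftover constant, i.e.\ the stationary-measure difference, and your proposal never does this. The same issue bites in the telescoped second family: to get $\sum_{j\geq 1} j\rho^{(j-1)\gamma_{n+1}/2}$ you need $\tkernel{n+1}^{\ell+1}$ to \emph{contract} in $\ell$, but \tup{\Cref{assum:condition_majo_V}}-\ref{assum:condition_majo_V_ii} only gives decay for the centred operator $\tkernel{n+1}^{\ell}-\pi_{\gamma_{n+1},\ttheta_{n+1}}$. Without centering the inner sum over $\ell$ is merely bounded, the outer sum over $j$ diverges, and the argument collapses.

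The paper avoids your $\mathrm{I}_n/\mathrm{II}_n$ split and instead compares the two Poisson series term by term. The crux is the bilateral centering identity for $g_i=H_{\theta_i}-\pi_{\gamma_i,\theta_i}(H_{\theta_i})$:
\[
\Kker_{\gamma_1,\theta_1}^{\ell}g_1-\Kker_{\gamma_2,\theta_2}^{\ell}g_2
=\sum_{j=0}^{\ell-1}\{\Kker_{\gamma_1,\theta_1}^{j}-\pi_{\gamma_1,\theta_1}\}(\Kker_{\gamma_1,\theta_1}-\Kker_{\gamma_2,\theta_2})\{\Kker_{\gamma_2,\theta_2}^{\ell-1-j}H_{\theta_1}-\pi_{\gamma_2,\theta_2}(H_{\theta_1})\}+\Xi_\ell,
\]
where $\Xi_\ell$ collects exactly a $(\pi_{\gamma_1,\theta_1}-\pi_{\gamma_2,\theta_2})$-term and the Lipschitz piece $H_{\theta_1}-H_{\theta_2}$. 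The first sum now has geometric decay on \emph{both} sides of $(\Kker_{\gamma_1,\theta_1}-\Kker_{\gamma_2,\theta_2})$, giving $\ell\rho^{(\ell-1)\gamma_2/2}$ after combining exponents, and hence $\gamma^{-2}$ upon summing over $\ell$. To bound $\Xi_\ell$ the paper first proves, as a separate preliminary,
\[
\Vnorm[V^{1/2}]{\pi_{\gamma_1,\theta_1}-\pi_{\gamma_2,\theta_2}}\leq C_{c,1}\,\gamma_2^{-1}\bigl[\Lambdabf_1(\gamma_1,\gamma_2)+\Lambdabf_2(\gamma_1,\gamma_2)\|\theta_1-\theta_2\|\bigr],
\]
by letting $\ell\to\infty$ in a one-sided telescoping (this is where $C_{c,1}$ enters). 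This estimate is the missing ingredient in your proposal; once you have it, either route (yours with proper centering, or the paper's direct comparison) closes.
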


\begin{proof}
 We start by giving an upper-bound on $\Vnorm[V^{1/2}]{ \pi_{\gamma_1, \theta_1} - \pi_{\gamma_2, \theta_2}}$ for $\gamma_1, \gamma_2 \in \ocint{0,\bgamma}$ with $\gamma_1 > \gamma_2$ and, $\theta_1, \theta_2 \in \Theta$. Let $g: \rset^d\to \rset^{d_{\theta}}$ be a measurable function satisfying $\sup_{x \in \rset^d} \{\norm{g(x)}/V^{1/2}(x) \} \leq 1$. Using \Cref{assum:condition_majo_V}-\ref{assum:condition_majo_V_i}-\ref{assum:condition_majo_V_ii}, \Cref{assum:condition_kernel_fix}, \Cref{lem:tech_sum_finie} and \Cref{lemma:jensen},  we get that for any $\gamma_1, \gamma_2 \in \ocint{0,\bgamma}$ with $\gamma_1 > \gamma_2$, $\theta_1, \theta_2 \in \Theta$ and $\ell \in \nsets$
  \begin{align}
    &\expe{\norm{\Kker_{\gamma_1,\theta_1}^\ell g(\tX_0) - \Kker_{\gamma_2,\theta_2}^\ell g(\tX_0)}} \\ &= \norm{\sum_{j=0}^{\ell-1} \Kker_{\gamma_1,\theta_1}^{j }(\Kker_{\gamma_1, \theta_1}^{} - \Kker_{\gamma_2, \theta_2}^{}) \defEns{\Kker_{\gamma_2, \theta_2}^{(\ell-1-j)}g(x) - \pi_{\gamma_2,\theta_2}(f)}} \\
                                                                            &\leq 2A_2 \sum_{j=0}^{\ell-1} \rho^{(\ell-1-j)\gamma_2/2}\norm{\Kker_{\gamma_1,\theta_1}^{j }(\Kker_{\gamma_1, \theta_1}^{} - \Kker_{\gamma_2, \theta_2}^{})V^{1/2}(x)} \\
    \label{eq:diff_V_norm_noyau_itere}
                                                                            &\leq 2A_2  \sum_{j=0}^{\ell-1}  \rho^{(\ell-1-j)\gamma_2/2}  \left[ \Lambdabf_1(\gamma_1, \gamma_2)  + \Lambdabf_2(\gamma_1, \gamma_2)\| \theta_1 - \theta_2 \| \right]  \sup_{k \in \N} \expe{\Kker^{k}_{\gamma_1,\theta_1} V(\tX_0)} \\ &\leq 4A_1  A_2 \log^{-1}(1/\rho) \rho^{-\bgamma/2} \gamma_2^{-1} \parentheseDeux{ \Lambdabf_1(\gamma_1, \gamma_2)  + \Lambdabf_2(\gamma_1, \gamma_2)\| \theta_1 - \theta_2 \| }  \expe{V(\tX_0)} \eqsp .
  \end{align}
  Taking $\ell \to \plusinfty$ and using \Cref{assum:condition_majo_V}-\ref{assum:condition_majo_V_ii},  we obtain that for any $\theta_1, \theta_2 \in \Theta$ and $\gamma_1, \gamma_2 \in \ocint{0,\bgamma}$ with $\gamma_1 > \gamma_2$,
  \begin{equation}
    \label{eq:distance_pi}
  \Vnorm[V^{1/2}]{\pi_{\gamma_1,\theta_1} - \pi_{\gamma_2,\theta_2} } \leq C_{c,1} \gamma_2^{-1} \parentheseDeux{ \Lambdabf_1(\gamma_1, \gamma_2)  + \Lambdabf_2(\gamma_1, \gamma_2)\| \theta_1 - \theta_2 \| }  \eqsp ,
\end{equation}
with $C_{c,1}$ given by\eqref{eq:Cc1}.
In what follows we give an upper bound on  $\norm{\Kker_{\gamma_1, \theta_1} \poissp{\gamma_1}{\theta_1}(x) - \Kker_{\gamma_2, \theta_2} \poissp{\gamma_2}{\theta_2}(x)}$ for any $\theta_1, \theta_2 \in \Theta$, $\gamma_1, \gamma_2 \in \ocint{0,\bgamma}$ with $\gamma_1 > \gamma_2$ and $x \in \rset^d$.
By \eqref{eq:def_poisson}  we have  for any $\theta_1, \theta_2 \in \Theta$, $\gamma_1, \gamma_2 \in \ocint{0,\bgamma}$ with $\gamma_1 > \gamma_2$ and $x \in \rset^d$,
\begin{align}
&\norm{\Kker_{\gamma_1, \theta_1} \poissp{\gamma_1}{\theta_1}(x) - \Kker_{\gamma_2, \theta_2} \poissp{\gamma_2}{\theta_2}(x)} 
\\
& = \norm{ \sum_{\ell \in \nsets} \defEns{\Kker_{\gamma_1,\theta_1}^\ell H_{\theta_1} (x) - \pi_{\gamma_1,\theta_1}(H_{\theta_1})} -\sum_{\ell \in \nsets} \defEns{\Kker_{\gamma_2,\theta_2}^\ell H_{\theta_2}(x) - \pi_{\gamma_2,\theta_2}(H_{\theta_2})} } \\
\label{eq:bound_eta_c_0}
  &  \leq \sum_{\ell \in \nsets}  \norm{  \defEns{\Kker_{\gamma_1,\theta_1}^\ell H_{\theta_1}(x) - \pi_{\gamma_1,\theta_1}(H_{\theta_1})} -\defEns{\Kker_{\gamma_2,\theta_2}^\ell H_{\theta_2}(x) - \pi_{\gamma_2,\theta_2}(H_{\theta_2})} }  \eqsp .
\end{align}
We now bound each term of the series in the right hand side.
For any measurable functions $g_1,g_2$ with $g_i: \rset^d \to \rset^{d_{\theta}}$ and such that $\sup_{x \in \rset^d} \norm{g_i(x)}/V^{1/4}(x) < + \infty$ with $i \in \{ 1, 2 \}$,  
 $\theta_1, \theta_2 \in \Theta$, $\gamma_1, \gamma_2 \in \ocint{0,\bgamma}$ with $\gamma_1 > \gamma_2$,  $x \in \rset^d$  and $\ell \in \nsets$, it holds that 
\begin{align}
  \label{eq:error_all}
  &\Kker_{\gamma_1,\theta_1}^\ell g_1(x) - \Kker_{\gamma_2,\theta_2}^\ell  g_2(x) = \Kker_{\gamma_1,\theta_1}^\ell g_1(x) - \Kker_{\gamma_2,\theta_2}^\ell  g_1(x) + \Kker_{\gamma_2,\theta_2}^\ell  (g_1(x) - g_2(x)) \\
  &= \sum_{j=0}^{\ell -1}\defEns{\Kker_{\gamma_1,\theta_1}^j - \pi_{\gamma_1,\theta_1}}(\Kker_{\gamma_1,\theta_1} - \Kker_{\gamma_2,\theta_2}) \defEns{\Kker_{\gamma_2,\theta_2}^{\ell -1-j}g_1(x) - \pi_{\gamma_2,\theta_2}(g_1)} \\
                                                             & \phantom{aa} + \sum_{j=0}^{\ell -1}\pi_{\gamma_1,\theta_1}\defEns{\Kker_{\gamma_2,\theta_2}^{\ell -1-j}g_1(x) - \Kker_{\gamma_2,\theta_2}^{\ell -j}g_1(x)} + \Kker_{\gamma_2,\theta_2}^\ell  (g_1(x) - g_2(x)) \\
  &=\sum_{j=0}^{\ell -1}\defEns{\Kker_{\gamma_1,\theta_1}^j - \pi_{\gamma_1,\theta_1}}(\Kker_{\gamma_1,\theta_1} - \Kker_{\gamma_2,\theta_2}) \defEns{\Kker_{\gamma_2,\theta_2}^{\ell -1-j}g_1(x) - \pi_{\gamma_2,\theta_2}(g_1)} \\
  & \phantom{aa} - \pi_{\gamma_1,\theta_1}(\Kker_{\gamma_2,\theta_2}^\ell g_1(x) - g_1(x)) + \Kker_{\gamma_2,\theta_2}^\ell  (g_1(x) - g_2(x)) \eqsp. 
    \end{align}
Setting $g_1 = H_{\theta_1} - \pi_{\gamma_1,\theta_1}(H_{\theta_1})$ and $g_2 = H_{\theta_2} - \pi_{\gamma_2,\theta_2}(H_{\theta_2})$, we obtain that
\begin{multline}
\label{eq:decompo_K_gamma_ell}
  \Kker_{\gamma_1,\theta_1}^\ell  g_1(x) - \Kker_{\gamma_2,\theta_2}^\ell  g_2(x) \\
  =  \sum_{j=0}^{\ell -1}\defEns{\Kker_{\gamma_1,\theta_1}^j - \pi_{\gamma_1,\theta_1}}(\Kker_{\gamma_1,\theta_1} - \Kker_{\gamma_2,\theta_2}) \defEns{\Kker_{\gamma_2,\theta_2}^{\ell -1-j}H_{\theta_1}(x) - \pi_{\gamma_2,\theta_2}(H_{\theta_1})}
                                                                                                              + \Xi_{\ell} \eqsp,
\end{multline}
where
\begin{align}
  \Xi_{\ell} &= - \pi_{\gamma_1,\theta_1}(\Kker_{\gamma_2,\theta_2}^\ell H_{\theta_1}(x) - H_{\theta_1}(x)) \\ &+ \Kker_{\gamma_2,\theta_2}^{\ell}\big[ H_{\theta_1}(x) - H_{\theta_2}(x) + \pi_{\gamma_2,\theta_2}(H_{\theta_2}) - \pi_{\gamma_1,\theta_1}(H_{\theta_1})\big] \\
                                                                                                   & =  - \pi_{\gamma_1,\theta_1}\Kker_{\gamma_2,\theta_2}^\ell H_{\theta_1}(x) + \Kker_{\gamma_2,\theta_2}^{\ell}\big[H_{\theta_1}(x) - H_{\theta_2}(x) + \pi_{\gamma_2,\theta_2}(H_{\theta_2})\big] \\
      & = (\pi_{\gamma_2,\theta_2} - \pi_{\gamma_1,\theta_1})(\Kker_{\gamma_2,\theta_2}^\ell H_{\theta_1}(x)-\pi_{\gamma_2,\theta_2}(H_{\theta_1})) - \pi_{\gamma_2, \theta_2}(H_{\theta_1})\\
      & \qquad \qquad \qquad \qquad +\Kker_{\gamma_2,\theta_2}^{\ell}\big[H_{\theta_1}(x) - H_{\theta_2}(x) + \pi_{\gamma_2,\theta_2}(H_{\theta_2})\big]  \\
  \label{eq:decompo_K_gamma_ell_XI}
  & = (\pi_{\gamma_2,\theta_2} - \pi_{\gamma_1,\theta_1})(\Kker_{\gamma_2,\theta_2}^\ell H_{\theta_1}(x)-\pi_{\gamma_2,\theta_2}(H_{\theta_1}))  \\ & \qquad \qquad \qquad \qquad+ \Kker_{\gamma_2,\theta_2}^{\ell}(H_{\theta_1} - H_{\theta_2})(x) - \pi_{\gamma_2,\theta_2}(H_{\theta_1} - H_{\theta_2}) \eqsp .
\end{align}

For the first term in \eqref{eq:decompo_K_gamma_ell}, using  \Cref{assum:condition_majo_V}-\ref{assum:condition_majo_V_ii}, \Cref{assum:condition_kernel_fix}, \Cref{lemma:jensen} and and that for any $\theta \in \Theta$ and $x \in \rset^d$, $\normLigne{H_{\theta}(x)} \leq V^{1/4}(x)$  we obtain for any $\theta_1, \theta_2 \in \Theta$, $\gamma_1, \gamma_2 \in \ocint{0,\bgamma}$ with $\gamma_1 > \gamma_2$, $x \in \rset^d$ and $\ell \in \nsets$
\begin{align}
  &\norm{\sum_{j=0}^{\ell -1}\defEns{\Kker_{\gamma_1,\theta_1}^j - \pi_{\gamma_1,\theta_1}}(\Kker_{\gamma_1,\theta_1} - \Kker_{\gamma_2,\theta_2}) \defEns{\Kker_{\gamma_2,\theta_2}^{\ell -1-j}H_{\theta_1}(x) - \pi_{\gamma_2,\theta_2}(H_{\theta_1})}} \\
  &
\phantom{aaaa} \leq     2 A_2  \sum_{j=0}^{\ell -1} \rho^{(\ell - 1 -j ) \gamma_1/2} \norm{\defEns{\Kker_{\gamma_1,\theta_1}^j - \pi_{\gamma_1,\theta_1}}(\Kker_{\gamma_1,\theta_1} - \Kker_{\gamma_2,\theta_2})V^{1/2}(x)}
  \\
  & \phantom{aaaa}\leq 4 A_2^2 \parentheseDeux{ \Lambdabf_1(\gamma_1, \gamma_2)  + \Lambdabf_2(\gamma_1, \gamma_2)\| \theta_1 - \theta_2 \| }  \sum_{j=0}^{\ell -1}\rho^{(j  + (\ell -1-j))\gamma_2/2} V^{1/2}(x)\\
  & \phantom{aaaa} \leq 4 A_2^2  \parentheseDeux{ \Lambdabf_1(\gamma_1, \gamma_2)  + \Lambdabf_2(\gamma_1, \gamma_2)\| \theta_1 - \theta_2 \| }  \ell \rho^{(\ell -1) \gamma_2 /2} V^{1/2}(x) \eqsp . \numberthis \label{eq:majo_1_poiss}
\end{align}
For the first term in \eqref{eq:decompo_K_gamma_ell_XI}, using \Cref{assum:condition_majo_V}-\ref{assum:condition_majo_V_ii}, \Cref{lemma:jensen}, \eqref{eq:distance_pi} and that for any $\theta \in \Theta$ and $x \in \rset^d$, $\normLigne{H_{\theta}(x)} \leq V^{1/4}(x) \leq V^{1/2}(x)$, we obtain for any $\theta_1, \theta_2 \in \Theta$, $\gamma_1, \gamma_2 \in \ocint{0,\bgamma}$ with $\gamma_1 > \gamma_2$, $x \in \rset^d$ and $\ell \in \nsets$
\begin{align}
  &\norm{(\pi_{\gamma_1,\theta_1} - \pi_{\gamma_2,\theta_2})(\Kker_{\gamma_2,\theta_2}^\ell H_{\theta_1}(x)-\pi_{\gamma_2,\theta_2}(H_{\theta_1}))}  \\ & \phantom{aaaa} \leq 2 A_2   \rho^{\ell \gamma_2/2} \Vnorm[V^{1/2}]{\pi_{\gamma_1,\theta_1} - \pi_{\gamma_2,\theta_2} } \\  & \phantom{aaaa} \leq 2 A_2  C_{c,1}\, \rho^{\ell \gamma_2/2} \gamma^{-1}_2\defEns{ \Lambdabf_1(\gamma_1, \gamma_2)  + \Lambdabf_2(\gamma_1, \gamma_2)\| \theta_1 - \theta_2 \| } \eqsp. \label{eq:majo_2_poiss}
\end{align}
For the second term in \eqref{eq:decompo_K_gamma_ell_XI},  using \Cref{assum:H_lip},  \Cref{assum:condition_majo_V}-\ref{assum:condition_majo_V_ii} and \Cref{lemma:jensen}, we obtain for any $\theta_1, \theta_2 \in \Theta$, $\gamma_1, \gamma_2 \in \ocint{0,\bgamma}$ with $\gamma_1 > \gamma_2$, $x \in \rset^d$ and $\ell \in \nsets$
\begin{equation}
  \norm{\Kker_{\gamma_2,\theta_2}^{\ell}(H_{\theta_1} - H_{\theta_2})(x) - \pi_{\gamma_2,\theta_2}(H_{\theta_1} - H_{\theta_2})} \leq 2 A_2 L_H \rho^{\ell \gamma_2/2} \| \theta_1 - \theta_2\|V^{1/2}(x) \eqsp . \label{eq:majo_3_poiss} 
\end{equation}
Combining \eqref{eq:decompo_K_gamma_ell_XI}, \eqref{eq:majo_1_poiss}, \eqref{eq:majo_2_poiss}, \eqref{eq:majo_3_poiss} in \eqref{eq:decompo_K_gamma_ell} and using \Cref{lem:tech_sum_finie}, we obtain that  for any $\theta_1, \theta_2 \in \Theta$, $\gamma_1, \gamma_2 \in \ocint{0,\bgamma}$ with $\gamma_1 > \gamma_2$, $x \in \rset^d$ that 
\begin{multline}
  \norm{\Kker_{\gamma_1,\theta_1}\poissp{\gamma_1}{\theta_1}(x) - \Kker_{\gamma_2,\theta_2}\poissp{\gamma_2}{\theta_2}(x)}  \\
  \leq C_{c,2}\, \gamma_2^{-1} \parentheseDeux{\gamma_2^{-1} \defEns{ \Lambdabf_1(\gamma_1, \gamma_2)  + \Lambdabf_2(\gamma_1, \gamma_2)\| \theta_1 - \theta_2 \| } + \| \theta_1 - \theta_2\|}V^{1/2}(x) \eqsp , 
\end{multline}
with
\begin{equation}
  C_{c,2}\, = 4 A_2 \log^{-1}(1/\rho) \rho^{-\bgamma/2} \,  \max\parentheseDeux{L_H, C_{c,1} +  2 A_2    \log^{-1}(1/\rho)\rho^{-\bgamma/2}  } \eqsp .
\end{equation}
Since for any $k \in \nset$, $\norm{\ttheta_{k+1}-\ttheta_k} \leq \delta_{k+1}  V^{1/2}(\tX_{k+1})$ by \eqref{eq:algo_x_fix} and the fact that for any $\theta \in \Theta$ and $x \in \rset^d$, $\normLigne{H_{\theta}(x)} \leq V^{1/2}(x)$ and that $\Pi_{\Theta}$ is non-expansive, we get that for any $k \in \nset$,
\begin{multline}
  \norm{\Kker_{\gamma_k,\ttheta_k}\poissp{\gamma_k}{\ttheta_k}(\tX_{k+1}) - \Kker_{\gamma_{k+1},\ttheta_{k+1}}\poissp{\gamma_{k+1}}{\ttheta_{k+1}}(\tX_{k+1})} \\  \phantom{aaaa} \leq C_{c,2}  \gamma_{k+1}^{-1} \parentheseDeux{\gamma_{k+1}^{-1} \defEns{ \Lambdabf_1(\gamma_k, \gamma_{k+1})  + \Lambdabf_2(\gamma_k, \gamma_{k+1})\delta_{k+1}} + \delta_{k+1}}V(\tX_{k+1}) \eqsp ,
\end{multline}
which implies by \eqref{eq:tetan_def_2} and using \Cref{assum:condition_majo_V}-\ref{assum:condition_majo_V_i}  that
\begin{equation}\expe{\norm{\teta^{(c)}}} \leq C_2 \gamma_{k+1}^{-1} \parentheseDeux{\gamma_{k+1}^{-1} \defEns{ \Lambdabf_1(\gamma_k, \gamma_{k+1})  + \Lambdabf_2(\gamma_k, \gamma_{k+1})\delta_{k+1}} + \delta_{k+1}}\eqsp , \end{equation} with $C_2$ given by \eqref{eq:C2}.

\end{proof}

\begin{lemma}
  \label{lemma:cv_norm_d}
  Assume \tup{\Cref{assum:theta_compact}}, \tup{\Cref{assum:f_diff}}, \tup{\Cref{assum:grad_expec}}, \tup{\Cref{assum:condition_majo_V}} and that for any $\theta \in \Theta$ and $x \in \rset^d$, $\normLigne{H_{\theta}(x)} \leq V^{1/4}(x)$. Then we have for any $n \in \nset$
    \begin{equation}
      \expe{\norm{\tetad{n}}} \leq \Psibf(\gamma_n) \eqsp .
  \end{equation}
\end{lemma}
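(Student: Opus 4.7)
The plan is to reduce the bound to a direct application of \Cref{assum:condition_majo_V}-\ref{assum:condition_majo_V_iii}, after handling the fact that $H_{\ttheta_n}$ is vector-valued while the $V^{1/2}$-norm of a signed measure is defined through scalar test functions.

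First, I would establish a simple duality lemma: for any finite signed measure $\xi$ on $(\rset^d, \mcbb(\rset^d))$ and any Borel measurable $g : \rset^d \to \rset^{d_\Theta}$ with $\norm{g(x)} \leq V^{1/2}(x)$ for all $x \in \rset^d$, one has
\begin{equation}
\norm{\int_{\rset^d} g(x) \, \rmd \xi(x)} \leq \Vnorm[V^{1/2}]{\xi} \eqsp.
\end{equation}
This follows by writing the left-hand side as $\sup_{u \in \rset^{d_\Theta}, \, \norm{u} \leq 1} \abs{\int \langle u, g(x)\rangle \, \rmd \xi(x)}$ via Cauchy--Schwarz, and noting that for each such $u$ the scalar function $x \mapsto \langle u, g(x)\rangle$ satisfies $\abs{\langle u, g(x)\rangle} \leq \norm{g(x)} \leq V^{1/2}(x)$, hence $\Vnorm[V^{1/2}]{\langle u, g\rangle} \leq 1$, so the inner integral is bounded by $\Vnorm[V^{1/2}]{\xi}$.

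Next, I would apply the lemma with $g = H_{\ttheta_n}$ and $\xi = \pi_{\gamma_n, \ttheta_n} - \pi_{\ttheta_n}$. The assumption $\norm{H_{\theta}(x)} \leq V^{1/4}(x)$ combined with $V \geq 1$ yields $\norm{H_{\theta}(x)} \leq V^{1/2}(x)$ for all $\theta \in \Theta$ and $x \in \rset^d$, so the hypothesis of the lemma is satisfied pointwise in $\omega$. Recalling from \eqref{eq:tetan_def_2} that $\tetad{n} = (\pi_{\gamma_n, \ttheta_n} - \pi_{\ttheta_n})(H_{\ttheta_n})$, the lemma gives the pathwise bound
\begin{equation}
\norm{\tetad{n}} \leq \Vnorm[V^{1/2}]{\pi_{\gamma_n, \ttheta_n} - \pi_{\ttheta_n}} \leq \Psibf(\gamma_n) \eqsp,
\end{equation}
where the last inequality is \Cref{assum:condition_majo_V}-\ref{assum:condition_majo_V_iii} applied uniformly in $\theta \in \Theta$ (in particular at the random value $\ttheta_n$). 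Since the resulting bound $\Psibf(\gamma_n)$ is deterministic, taking expectation concludes the proof.

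There is no real obstacle here: the only mildly delicate point is ensuring that the supremum defining $\Vnorm[V^{1/2}]{\cdot}$ (over scalar functions) correctly controls a vector-valued integral, which is handled by the duality step above. Everything else is a direct citation of \Cref{assum:condition_majo_V}-\ref{assum:condition_majo_V_iii}.
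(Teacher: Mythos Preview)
Your proposal is correct and follows essentially the same approach as the paper: the paper's proof is the one-line remark that $\normLigne{\tetad{n}} \leq \Vnorm[V^{1/2}]{\pi_{\gamma_n,\ttheta_n}-\pi_{\ttheta_n}} \leq \Psibf(\gamma_n)$ by \Cref{assum:condition_majo_V}-\ref{assum:condition_majo_V_iii}, using $\normLigne{H_{\theta}(x)} \leq V^{1/4}(x)$. Your duality step simply makes explicit how the scalar-test-function definition of $\Vnorm[V^{1/2}]{\cdot}$ controls a vector-valued integrand, which the paper handles implicitly (elsewhere via the total variation measure $\abs{\xi}(\normLigne{H_{\theta}})$); the arguments are otherwise identical.
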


\begin{proof}
  By a straightforward application of \Cref{assum:condition_majo_V}-\ref{assum:condition_majo_V_iii} and by \eqref{eq:tetan_def_2} along with the fact that for any $\theta \in \Theta$ and $x \in \rset^d$, $\normLigne{H_{\theta}(x)} \leq V^{1/4}(x)$  we have for any $n \in \nset$, $\expe{\norm{\teta_n^d}} \leq \Psibf(\gamma_n)$.
\end{proof}

We now turn to the proof of \Cref{thm:salem_cv_fix}.  
  \begin{proof}[Proof of \Cref{thm:salem_cv_fix}]
      \begin{enumerate}[wide, labelwidth=!, labelindent=0pt, label=(\alph*)]
      \item To apply \cite[Theorem 2]{atchade2017perturbed}, it is enough to
        show that the following series converge \as
        \begin{equation}
    \sum_{n =0}^{\plusinfty} \delta_{n+1} \langle \Pi_{\Theta} ( \theta_{n} - \delta_{n+1} \nabla f (\theta_{n})) -\theta^{\star}, \teta_{n}^{(i)} \rangle \eqsp , \quad \sum_{n=0}^{\plusinfty} \delta_{n+1}^2 \| \teta_n \|^2  \eqsp ,
  \end{equation}
  with $\theta^{\star} \in \argmin_{\theta \in \Theta}f(\theta)$.  $\sum_{n=0}^{\plusinfty} \delta_{n+1}^2 \| \teta_n \|^2< +\infty$ \as \ by \Cref{lemma:cv_sq_fix} since $\sum_{n \in \nset} \delta_{n+1}^2 < +\infty$. Since $(\langle \Pi_{\Theta} ( \theta_{n} - \delta_{n+1} \nabla f (\theta_{n})) -\theta^{\star}, \tetaa{n} \rangle)_{n \in \nset}$ is a $(\tilde{\mathcal{F}}_n)_{n \in \nset}$-martingale increment, see \eqref{eq:def_tmcf} and by \Cref{lemma:cv_norm_a} and $\sum_{n \in \nset} \delta_{n+1}^2 / \gamma_n^2 < +\infty$
  \begin{equation}
    \expe{\sum_{n =0}^{+\infty} \delta_{n+1}^2 \langle \Pi_{\Theta} ( \theta_{n} - \delta_{n+1} \nabla f (\theta_{n})) -\theta^{\star}, \tetaa{n} \rangle^2} < +\infty \eqsp ,
  \end{equation}   we obtain using \cite[Section 12.5]{williams1991prob} that $\sum_{n =0}^{\plusinfty} \delta_{n+1} \langle \Pi_{\Theta} ( \theta_{n} - \delta_{n+1} \nabla f (\theta_{n})) -\theta^{\star}, \tetaa{n} \rangle$ converges \as . Using \Cref{assum:theta_compact}, \eqref{eq:condition_cv_fix_1} and \Cref{lemma:cv_norm_c_d} and \Cref{lemma:cv_norm_d} we get that $\sum_{n =0}^{\plusinfty} \delta_{n+1} \langle \Pi_{\Theta} ( \theta_{n} - \delta_{n+1} \nabla f (\theta_{n})) -\theta^{\star}, \teta_{n}^{(i)} \rangle$ is absolutely convergent \as \ for $i \in \{c, d \}$. Finally $\sum_{n =0}^{\plusinfty} \delta_{n+1} \langle \Pi_{\Theta} ( \theta_{n} - \delta_{n+1} \nabla f (\theta_{n})) -\theta^{\star}, \tetab{n} \rangle$ converges \as \ by \Cref{lemma:cv_norm_b}-\ref{lemma:cv_norm_b_ii}.
\item The proof of is identical to the one of \Cref{thm:salem_cv}-\ref{thm:salem_cv_ii}.
\end{enumerate}
\end{proof}

\subsection{Proof of \Cref{thm:salem_cv_control_fix}}
\label{thm:salem_cv_control_fix_proof}

  The proof is similar to the one of \Cref{thm:salem_cv_control}, using \Cref{lemma:cv_sq_fix}, \Cref{lemma:cv_norm_b}, \Cref{lemma:cv_norm_c_d}, \Cref{lemma:cv_norm_d} and the fact that $\tetaa{n}$ is a $(\tilde{\mathcal{F}}_n)_{n \in \nset}$-martingale increment, see \eqref{eq:def_tmcf}.


  \subsection{Proof of \Cref{thm:salem_langevin_cv}}
  \label{sec:proofs-crefs-meth}
  In this section, we give the proof of \Cref{thm:salem_langevin_cv}
  by showing that \Cref{assum:condition_majo_V} holds.  First of all in
  \Cref{sec:fost-lyap-drift}, we
  establish under \Cref{assum:glip} and \Cref{assum:curvature} stability results uniform in the
  parameter $\theta \in \Theta$ for the Langevin diffusion
 \eqref{eq:langevin} and the associated Euler-Maruyama discretization
  \eqref{eq:euler_maruyama_langevin} based on a Foster-Lyapunov drift
  condition with constants independent of $\theta$.  Then, in
  \Cref{sec:check-cref}, we show that the
  stability conditions that we derive, are sufficient to prove that
  \Cref{assum:condition_majo_V} holds. The proof of \Cref{thm:salem_langevin_cv} then consists in
  combining all these results and is presented in \Cref{sec:proof-crefthm:s-3}.
  
  Under \Cref{assum:glip} and \Cref{assum:curvature}, for any $\theta \in \Theta$,
\eqref{eq:langevin} defines a Markov semi-group
$(\Pker_{t, \theta})_{t \geq 0}$ for any $x \in \rset^d$ and
$\msa \in \mcb{\rset^d}$ by
$\Pker_{t,\theta}(x,\msa) = \probaLigne{Y_{t}^{\theta} \in \msa}$ where
$(Y_{t}^{\theta})_{t \geq 0}$ is the solution of \eqref{eq:langevin}
with $Y_{0}^{\theta} = x$. Consider now the generator of
$(\Pker_{t, \theta})_{t \geq 0}$ for any $\theta \in \Theta$, defined for
any $f \in  \rmc^2(\rset^d)$ by
\begin{equation}
\label{eq:def_generator_theta}
  \generator_{\theta} f = -\ps{\nabla_x f}{\nabla_x U_{\theta}(x)} + \Delta_x f \eqsp.
\end{equation}


We say that a Markov kernel $\Rker$ on $\rset^d\times \mcb{\rset^d}$ satisfies a discrete Foster-Lyapunov drift condition
\hypertarget{assum:drift_discrete}{$\bfDd(V,\lambda,b)$} if there exist $\lambda \in (0,1)$, $b\geq0$ and a measurable function $V: \rset^d \to \coint{1,+\infty}$ such that for all $x \in \rset^d$
\begin{equation}
  \label{eq:discrete_drift}
  \Rker V(x) \leq \lambda V(x) + b \eqsp.
\end{equation}

We say that a Markov semi-group $(\Pker_t)_{t \geq 0}$ on $\rset^d \times \mcb{\rset^d}$ with extended infinitesimal generator $(\mathcal{A},\domain(\generator))$ (see \eg~\cite{meyn1993criteria_iii} for the definition of $(\generator,\domain(\generator))$) satisfies a continuous drift condition \hypertarget{assum:drift_continuous}{$\bfDc(V,\zeta,\beta)$} if there exist $\zeta >0$, $\beta \geq 0$ and a measurable function $V: \rset^d \to \coint{1,+\infty}$ with $V \in \domain(\generator)$ such that for all $x \in \rset^d$
\begin{equation}
  \label{eq:continuous_drift}
  \mathcal{A}V(x) \leq - \zeta V(x) + \beta \eqsp .
\end{equation}

\subsubsection{Foster-Lyapunov drift conditions uniform on $\theta$}
\label{sec:fost-lyap-drift}
Define $\Ve: \rset^d \to \coint{1,+\infty}$  for all $x \in \rset^d$ by
\begin{equation}
\label{eq:def_V_e}
  \Ve(x) = \exp(\tmttdeux \phi(x)) \eqsp, \quad \text{
    with $\phi(x) = \sqrt{1 + \|x\|^2 }$ and $\tmttdeux = \mttdeux /4$ } \eqsp.
\end{equation}
\begin{proposition}  
  \label{propo:lyap_condition}
  Assume \tup{\Cref{assum:glip}} and \tup{\Cref{assum:curvature}}.     Let $\bar{\gamma} < \min(1,2  \mtttrois) $. 
Then there exist $\lambda_{\rme} \in \ooint{0,1}$ and $b_{\rme} \geq 0$ such that for all $\gamma \in \ocint{0,\bgamma}$ and $\theta \in \Theta$  the Markov kernel $\Rker_{\gamma, \theta}$ associated with  the recursion \eqref{eq:euler_maruyama_langevin} satisfies the discrete drift condition \hyperlink{ass:drift_discrete}{$\bfDd(V,\lambda^{\gamma},b\gamma)$}, \ie \ for all $x \in \rset^d$
  \begin{equation}
    \label{eq:drift_super_exp_log_sob}
    \Rker_{\gamma, \theta}\Ve(x) \leq\lambda_{\rme}^{\gamma}\Ve(x) + b_{\rme} \gamma  \1_{\ball{0}{r_{\rme}}}(x)  \eqsp,
  \end{equation}
  with
  \begin{align}
    \lambda_{\rme} &= \rme^{-\tmttdeux^2(2^{1/2} - 1)} \eqsp , \quad r_{\rme} = \max(1,2(d+\ccur)/\mttdeux,\Rdeux) \eqsp,
    \\
    b_{\rme} &= \tmttdeux (d + c + 2^{1/2} \tmttdeux) \exp\parentheseDeux{\tmttdeux \defEns{(d + c+ \tmttdeux) \bgamma + \sqrt{1+r_{\rme}^2} }} 
     \eqsp .
  \end{align}
\end{proposition}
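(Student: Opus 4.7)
The plan is to compute $\Rker_{\gamma,\theta}\Ve(x) = \expe{\Ve(\mu(x) + \sqrt{2\gamma}Z)}$ directly, where $\mu(x) = x - \gamma\nabla_x U_\theta(x)$ and $Z \sim \loiGauss(0,\Idd_d)$, and to split the final estimate according to whether $x$ lies in $\boule{0}{r_{\rme}}^{\complementary}$ or $\boule{0}{r_{\rme}}$.

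The first step handles the Gaussian integration. I would exploit that $\phi(y)=\sqrt{1+\|y\|^2}$ is $1$-Lipschitz (since $\nabla\phi(y)=y/\phi(y)$), so that the map $Z\mapsto \phi(\mu(x)+\sqrt{2\gamma}Z)$ is $\sqrt{2\gamma}$-Lipschitz, and then apply the Gaussian concentration inequality of Herbst together with a Cauchy--Schwarz bound on the mean to obtain
\begin{equation*}
\Rker_{\gamma,\theta}\Ve(x) \leq \exp\bigl(\tmttdeux \sqrt{\phi(\mu(x))^2 + 2\gamma d} + \tmttdeux^2 \gamma\bigr) \eqsp,
\end{equation*}
using $\expe{\|Z\|^2}=d$ and $\expe{\langle \mu(x),Z\rangle}=0$ in the Cauchy--Schwarz step. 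This quantitative refinement is the crucial point of the argument: the naive $1$-Lipschitz bound $\phi(\mu+\sqrt{2\gamma}Z) \leq \phi(\mu)+\sqrt{2\gamma}\|Z\|$ would leave a residue of order $\sqrt{\gamma d}$ that overwhelms the $O(\gamma)$ drift produced by \Cref{assum:curvature}.

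The second step controls $\phi(\mu(x))^2+2\gamma d$ via \Cref{assum:curvature}: expanding $\|\mu(x)\|^2 = \|x\|^2 - 2\gamma\langle x,\nabla_x U_\theta(x)\rangle + \gamma^2\|\nabla_x U_\theta(x)\|^2$ and dropping the non-positive term $\gamma(\gamma-2\mtttrois)\|\nabla_x U_\theta(x)\|^2$ (legitimate because $\bgamma<2\mtttrois$), one gets
\begin{equation*}
\phi(\mu(x))^2 + 2\gamma d \leq \phi(x)^2 + 2\gamma(\ccur+d) - 2\gamma\mttdeux\|x\|\1_{\boule{0}{\Rdeux}^{\complementary}}(x) \eqsp.
\end{equation*}

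For $\|x\| > r_{\rme}$, the choice $r_{\rme} = \max(1,2(d+\ccur)/\mttdeux,\Rdeux)$ yields $\mttdeux\|x\|-(d+\ccur)\geq \mttdeux\|x\|/2$, so the above is $\leq \phi(x)^2-\gamma\mttdeux\|x\|$; concavity of $\sqrt{\cdot}$ and $\|x\|/\phi(x) \geq 1/\sqrt{2}$ (valid for $\|x\|\geq 1$) then give $\sqrt{\phi(\mu(x))^2+2\gamma d} \leq \phi(x) - \gamma\mttdeux/(2\sqrt{2})$, and using $\tmttdeux\mttdeux = 4\tmttdeux^2$ one arrives at $\Rker_{\gamma,\theta}\Ve(x)\leq \exp(-\tmttdeux^2\gamma(\sqrt{2}-1))\Ve(x) = \lambda_{\rme}^{\gamma}\Ve(x)$, which is exactly the contraction half of the stated drift with the announced rate. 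For $\|x\|\leq r_{\rme}$, dropping the indicator and using $\sqrt{1+u}\leq 1+u/2$ with $u = 2\gamma(\ccur+d)/\phi(x)^2$ gives $\sqrt{\phi(\mu(x))^2+2\gamma d} \leq \phi(x)+\gamma(\ccur+d)$ and hence $\Rker_{\gamma,\theta}\Ve(x)\leq \Ve(x)\exp(\tmttdeux\gamma(\ccur+d+\tmttdeux))$. Decomposing $\Rker_{\gamma,\theta}\Ve(x)-\lambda_{\rme}^{\gamma}\Ve(x) = (\Rker_{\gamma,\theta}\Ve(x)-\Ve(x)) + (1-\lambda_{\rme}^{\gamma})\Ve(x)$ and applying the elementary bounds $\mre^u-1\leq u\mre^u$ and $1-\mre^{-u}\leq u$, together with $\Ve(x)\leq \exp(\tmttdeux\sqrt{1+r_{\rme}^2})$ on the ball and $\tmttdeux+\tmttdeux(\sqrt{2}-1) = \sqrt{2}\tmttdeux$, then recovers $b_{\rme}\gamma$ with exactly the displayed constant. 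The main obstacle is thus the Gaussian integration: without the Herbst--Cauchy--Schwarz combination, only a $\sqrt{\gamma}$ residue is available, and the refinement is what converts it into the clean $\tmttdeux^2\gamma$ correction whose subtraction from the $\sqrt{2}\tmttdeux^2\gamma$ drift produces the stated rate $\sqrt{2}-1$.
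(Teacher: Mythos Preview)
Your proof is correct and follows essentially the same route as the paper: the paper also uses the log-Sobolev/Herbst inequality for Gaussians (citing Bakry--Gentil--Ledoux) to get $\Rker_{\gamma,\theta}\Ve(x) \leq \exp\bigl(\tmttdeux \Rker_{\gamma,\theta}\phi(x) + \tmttdeux^2\gamma\bigr)$, then Jensen (what you called ``Cauchy--Schwarz'') for the concave square root, then the same curvature expansion and the same case split on $\boule{0}{r_{\rme}}$. The only cosmetic difference is in the small-ball step, where the paper bounds $\Rker_{\gamma,\theta}\Ve(x)-\lambda_{\rme}^{\gamma}\Ve(x)$ directly via $\mre^a-\mre^b \leq (a-b)\mre^a$ rather than your two-term decomposition, but both computations land on exactly the same constant $b_{\rme}$.
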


\begin{proof}
  Since $\phi$ is $1$-Lipschitz, by the log-Sobolev inequality \cite[Proposition 5.4.1]{bakry:gentil:ledoux:2014},  we have for any $x \in \rset^d $ and $\theta \in \Theta$,
  \begin{align}
\label{eq:appli_log_sob}
    \Rker_{\gamma,\theta}\Ve(x) &\leq \exp \parentheseDeux{\tmttdeux \Rker_{\gamma,\theta}\phi(x) + \tmttdeux^2\gamma} \\ &\leq\exp\parentheseDeux{\tmttdeux \sqrt{\| x - \gamma \nabla_x U_{\theta}(x)\|^2 + 2\gamma d +1} + \tmttdeux^2\gamma} \eqsp,
  \end{align}
  where we have used Jensen's inequality in the last line. 
  Second, using \Cref{assum:curvature} and $\gamma < 2\mtttrois$, we obtain that for any $x \in \rset^d$ and $\theta \in \Theta$,
  \begin{align}
    \| x - \gamma \nabla_x U_{\theta}(x&) \|^2 \leq\| x \|^2 -2 \gamma \langle x, \nabla_x U_{\theta}(x) \rangle + \gamma^2\| \nabla_xU_{\theta}(x) \|^2 \\
                               &\leq\| x \|^2 - 2 \mttdeux \gamma \| x \|\1_{\ball{0}{\Rdeux}^{\complementary}}(x)  + \gamma (\gamma - 2 \mtttrois) \| \nabla_xU_{\theta}(x) \|^2+2 \gamma \ccur \\
                               &\leq\|x\|^2 -2\mttdeux \gamma \|x\| \1_{\ball{0}{\Rdeux}^{\complementary}}(x)  + 2 \gamma \ccur \eqsp .
  \end{align}
  Therefore,   using  for any $a >0$, $\sqrt{1+a}-1 \leq a/2$, we get for any $x \in \rset^d$ and  $\theta \in \Theta$,
  \begin{align}  
    &      \sqrt{\| x - \gamma \nabla_xU_{\theta}(x) \|^2 + 2\gamma d + 1} - \phi(x)  \\
\label{eq:appli_log_sob_2}
   &\leq\phi(x) \left\lbrace \sqrt{1 +2\gamma\phi^{-2}(x) (d+ \ccur- \mttdeux\|x\| \1_{\ball{0}{\Rdeux}^{\complementary}}(x) )} - 1\right\rbrace  \\ &\leq\gamma \phi^{-1}(x)(d + \ccur- \mttdeux \| x \| \1_{\ball{0}{\Rdeux}^{\complementary}}(x) )  \eqsp.
  \end{align}
  Therefore, combining this result with \eqref{eq:appli_log_sob} and using that for any $\tx \in \cball{0}{r_{\rme}}^{\complementary}$, $\phi(\tx)^2/\norm[2]{\tx} \leq 2$ and $d+c \leq \mttdeux \norm{x} /2$, we obtain for any $x \in \cball{0}{r_{\rme}}^{\complementary}$ and $\theta \in \Theta$,
  \begin{multline}
    \Rker_{\gamma,\theta}\Ve(x) \leq \exp \parentheseDeux{\tmttdeux \phi^{-1}(x)(d + \ccur- \mttdeux \| x \|  ) + \tmttdeux^2 \gamma } \Ve(x) \\ \leq     \exp \parentheseDeux{- 2 \tmttdeux^2 \gamma\phi^{-1}(x) \|x \| + \tmttdeux^2 \gamma } \Ve(x) 
     \leq     \lambda_{\rme}^{\gamma} \Ve(x) \eqsp. 
   \end{multline}
Using \eqref{eq:appli_log_sob},  \eqref{eq:appli_log_sob_2}, and the fact that  $\phi(\tx) \geq 1$ for any $\tx \in \rset^d$, we have for any $x \in \ball{0}{r_{\rme}}$ and $\theta \in \Theta$,
  \[ \Rker_{\gamma, \theta}\Ve(x) \leq\lambda_{\rme}^{\gamma}\Ve(x) + \left( \rme^{\tmttdeux (d + c + \tmttdeux) \gamma } - \lambda_{\rme}^{\gamma} \right)\exp\parentheseDeux{\tmttdeux\sqrt{1+r_{\rme}^2}} \eqsp .\]
The proof of \eqref{eq:drift_super_exp_log_sob} for   $x \in \ball{0}{r_{\rme}}$ and $\theta \in \Theta$ is then completed upon using that  $\rme^a - \rme^b \leq (a - b) \rme^a$ for all $a,b \in \rset$ with $a \geq b$.
\end{proof}

\begin{proposition}
  \label{propo:unif_ergo_c}
  Assume \tup{\Cref{assum:glip}} and \tup{\Cref{assum:curvature}}. Then for any $\theta \in \Theta$, $(\Pker_{t,\theta})_{t \geq 0}$ associated with \eqref{eq:langevin} satisfies the continuous drift condition
   \hyperlink{assum:drift_continuous}{$\bfDc(\Ve,\zeta_{\rme},\beta_{\rme})$}  for $\Ve$ defined in \eqref{eq:def_V_e} and 
      \begin{equation}
    \zeta_{\rme} = 3 \tmttdeux^2/2^{1/2} \eqsp , \quad \beta_{\rme} = \tmttdeux \exp\parentheseDeux{\tmttdeux\sqrt{1+\tr_{\rme}^2}} (1+\tmttdeux + \ccur + d )\eqsp, \quad \tr_{\rme} = \max(1,\Rdeux) \eqsp .
  \end{equation}

\end{proposition}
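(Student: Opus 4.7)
The plan is to exhibit $\generator_\theta \Ve$ in closed form and then bound it using the curvature assumption. A direct computation on $\Ve(x) = \exp(\tmttdeux \phi(x))$ with $\phi(x) = \sqrt{1+\|x\|^2}$ gives $\nabla \Ve(x) = \tmttdeux (x/\phi(x)) \Ve(x)$ and $\Delta \Ve(x) = \bigl[\tmttdeux^2 \|x\|^2/\phi(x)^2 + \tmttdeux (d/\phi(x) - \|x\|^2/\phi(x)^3)\bigr] \Ve(x)$. Plugging this into \eqref{eq:def_generator_theta} yields, for every $\theta \in \Theta$ and $x \in \rset^d$,
\begin{equation*}
\frac{\generator_\theta \Ve(x)}{\Ve(x)} = -\tmttdeux\,\frac{\langle x, \nabla_x U_\theta(x)\rangle}{\phi(x)} + \tmttdeux^2 \frac{\|x\|^2}{\phi(x)^2} + \tmttdeux \frac{d}{\phi(x)} - \tmttdeux\frac{\|x\|^2}{\phi(x)^3}.
\end{equation*}

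Next, I would invoke \Cref{assum:curvature} to bound $-\langle x, \nabla_x U_\theta(x)\rangle \leq -\mttdeux \|x\| \1_{\boule{0}{\Rdeux}^{\complementary}}(x) + \ccur$, discarding the non-positive contribution $-\tmttdeux \mtttrois \|\nabla_x U_\theta(x)\|^2/\phi(x)$ and the term $-\tmttdeux \|x\|^2/\phi(x)^3 \leq 0$. This produces the cleaner estimate
\begin{equation*}
\frac{\generator_\theta \Ve(x)}{\Ve(x)} \leq -\tmttdeux \mttdeux \frac{\|x\|}{\phi(x)} \1_{\boule{0}{\Rdeux}^{\complementary}}(x) + \tmttdeux^2 \frac{\|x\|^2}{\phi(x)^2} + \tmttdeux\frac{\ccur + d}{\phi(x)}.
\end{equation*}

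I would then split according to $\|x\| \gtrless \tr_\rme = \max(1, \Rdeux)$. On $\boule{0}{\tr_\rme}^{\complementary}$, the ratio $u(x) := \|x\|/\phi(x) \in [1/\sqrt{2}, 1)$, so using $\mttdeux = 4\tmttdeux$ the first two terms combine as $\tmttdeux^2 u(u-4)$, which attains its maximum on $[1/\sqrt{2}, 1]$ at $u = 1/\sqrt{2}$ with value $\tmttdeux^2(1/2 - 2\sqrt{2}) \leq -3\tmttdeux^2/\sqrt{2} = -\zeta_\rme$; the residual $\tmttdeux (\ccur+d)/\phi(x)$ is dominated by the slack between $\tmttdeux^2(1/2 - 2\sqrt{2})$ and $-\zeta_\rme$ once $\phi(x)$ is sufficiently large, so for $\|x\| \geq \tr_\rme$ we can conclude $\generator_\theta \Ve(x) \leq -\zeta_\rme \Ve(x)$. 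On $\boule{0}{\tr_\rme}$, the bound $\|x\|/\phi(x) \leq 1$, $1/\phi(x) \leq 1$, and $\Ve(x) \leq \exp(\tmttdeux \sqrt{1+\tr_\rme^2})$ give $\generator_\theta \Ve(x) \leq \tmttdeux(\tmttdeux + \ccur + d) \exp(\tmttdeux\sqrt{1+\tr_\rme^2})$. Adding the two bounds $\generator_\theta \Ve + \zeta_\rme \Ve \leq \beta_\rme$ where $\beta_\rme$ absorbs the extra factor $\tmttdeux$ coming from $\zeta_\rme \Ve$ on the ball yields the advertised $\bfDc(\Ve, \zeta_\rme, \beta_\rme)$.

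The delicate step is the outer-region bound: the residual term $\tmttdeux(\ccur + d)/\phi(x)$ does not vanish automatically on $\boule{0}{\tr_\rme}^{\complementary}$, and has to be accommodated either by enlarging the threshold beyond $\tr_\rme$ or by exploiting the monotonicity of $u \mapsto \tmttdeux^2 u(u-4)$ on $[1/\sqrt{2}, 1]$ to pick up enough slack uniformly in $\phi(x) \geq \sqrt{1+\tr_\rme^2}$. The constants $\zeta_\rme = 3\tmttdeux^2/\sqrt{2}$ and the $1 + \tmttdeux + \ccur + d$ prefactor in $\beta_\rme$ are precisely what this quantitative absorption demands, so the main calculation to watch carefully is the inequality $\tmttdeux^2(1/2 - 2\sqrt{2}) + \tmttdeux(\ccur+d)/\phi(x) \leq -\zeta_\rme$ as one crosses $\|x\| = \tr_\rme$.
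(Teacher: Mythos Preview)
Your approach is essentially the paper's: compute $\nabla \Ve$ and $\Delta \Ve$ explicitly, apply \Cref{assum:curvature} to replace $-\langle x,\nabla_x U_\theta(x)\rangle$, drop the non-positive contributions, and split into $\|x\|\gtrless\tr_\rme$ using $\|x\|/\phi(x)\ge 2^{-1/2}$ on the outer region and $\|x\|/\phi(x)\le 1$ globally. The paper is equally brief at precisely the step you flag as delicate --- it reaches the bound $\generator_\theta V(x)\le\{\tmttdeux V(x)/\phi(x)\}\bigl[-(3\mttdeux/4)\|x\|\1_{\ball{0}{\Rdeux}^{\complementary}}(x)+\ccur+\tmttdeux\|x\|\1_{\ball{0}{\Rdeux}}(x)+d\bigr]$ and then concludes without further detailing how the residual $\tmttdeux(\ccur+d)/\phi(x)$ is absorbed, so your caution there is well-placed but the paper supplies no additional mechanism beyond what you describe.
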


\begin{proof}
  First, by definition, for any $x \in \rset^d$, we have
  \begin{align}
    \nabla_x V(x) &= \tmttdeux x V(x) / \phi(x) \\
    \Delta_x V(x) &= \{\tmttdeux V(x) / \phi(x)\}\{\tmttdeux \norm[2]{x}/\phi(x) + d - \norm[2]{x}/\phi^2(x)\}  \eqsp. 
  \end{align}
  Therefore, by \eqref{eq:def_generator_theta} and \Cref{assum:curvature}, we get for any $\theta \in \Theta$ and $x \in \rset^d$,
  \begin{align}
    &\generator_{\theta} V(x)  = \{\tmttdeux V(x) / \phi(x)\} \parentheseDeux{ - \ps{\nabla_x U_{\theta}(x)}{x} + \tmttdeux \norm[2]{x}/\phi(x) + d - \norm[2]{x}/\phi^2(x)} \\
                             & \leq \{\tmttdeux V(x) / \phi(x)\} \parentheseDeux{ - \mttdeux \norm{x} \1_{\boule{0}{\Rdeux}^{\complementary}}(x) + \ccur  + \tmttdeux \norm[2]{x}/\phi(x) + d - \norm[2]{x}/\phi^2(x)} \\
                             & \leq  \{\tmttdeux V(x) / \phi(x)\} \parentheseDeux{ - (3\mttdeux/4) \norm{x} \1_{\boule{0}{\Rdeux}^{\complementary}}(x) + \ccur  + \tmttdeux \norm{x}\1_{\boule{0}{\Rdeux}}(x) + d } \eqsp. 
  \end{align}
The proof is then complete upon using that for any $x \in \ball{0}{\tr_{\rme}}^{\complementary}$, $\norm{x} /\phi(x) \geq 2^{-1/2}$, for any $y \in \rset^d$, $\norm{y}/ \phi(y) \leq 1$.
\end{proof}
\subsubsection{Checking \Cref{assum:condition_majo_V}}
\label{sec:check-cref}

\begin{lemma}
  \label{lemma:V_control}
  Assume \tup{\Cref{assum:glip}} and let  $V: \ \rset^d \rightarrow [1,+\infty)$ satisfying $\lim_{\| x \| \to +\infty} V(x) = + \infty$ and $V \in \domain(\A_{\theta})$, for any $\theta \in \Theta$, where $\generator_{\theta}$ is defined by  \eqref{eq:def_generator_theta}. .
  \begin{enumerate}[label=(\alph*)]
    \item \label{item_V_control_i} Assume that there exist $\lambda \in \ooint{0,1}$, $b \geq 0$ and $\bgamma >0$ such that for any $\theta \in \Theta$ and $\gamma \in \ocint{0,\bgamma}$, $\Rker_{\gamma,\theta}$ associated with the recursion \eqref{eq:langevin_discrete}, satisfies \hyperlink{ass:drift_discrete}{$\bfDd(V,\lambda^{\gamma},b\gamma)$}. Then for any $\theta \in \Theta$ and $\gamma \in \ocint{0,\bgamma}$, $\Rker_{\gamma, \theta}$ admits an invariant probability measure $\pi_{\gamma, \theta}$ on $(\rset^d, \mathcal{B}(\rset^d))$ and there exists $D_3 \geq 0$ such that for any $x \in \rset^d$ and $k\in \N$
  \begin{equation}
\updelta_x \Rker_{\gamma, \theta}^k V \leq D_3 + V(x)  \eqsp , \qquad \pi_{\gamma, \theta} (V) \leq D_3 \eqsp , \qquad  D_3 = b\lambda^{-\bgamma} /\log(1/\lambda) \eqsp .
\end{equation}
In addition, for all $\theta \in \Theta$ and $x \in \rset^d$, $\lim_{k \to +\infty} \| \updelta_x \Rker_{\gamma, \theta}^k - \pi_{\gamma, \theta} \|_V =0$. 
  \item \label{item_V_control_ii}Assume that there exist $\zeta >0$ and $\beta \geq 0$ such that for any $\theta \in \Theta$, $(\Pker_{t,\theta})_{t \geq 0}$ associated with \eqref{eq:langevin}  satisfies \hyperlink{assum:drift_continuous}{$\bfDc(V,\zeta,\beta)$}. Then for any $\theta \in \Theta$, the diffusion is non-explosive, $\A_{\theta}$ admits $\pi_{\theta}$ as an invariant probability measure and
  \begin{equation}
\pi_{\theta} (V) \leq D_0 \eqsp , \qquad D_0 = \beta / \zeta \eqsp .
\end{equation}
In addition, for all $\theta \in \Theta$ and $x \in \rset^d$, $\lim_{t \to +\infty} \Vnorm{\updelta_x\Pker_{t, \theta} - \pi_{\theta}} = 0$.
    \end{enumerate} 
\end{lemma}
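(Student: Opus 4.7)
The plan is to derive both parts from the drift conditions via elementary iteration together with standard Markov ergodicity machinery.

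For part (a), I would iterate the discrete drift $R_{\gamma,\theta} V \leq \lambda^{\gamma} V + b\gamma$: by induction,
\begin{equation}
R_{\gamma,\theta}^k V(x) \leq \lambda^{k\gamma} V(x) + b\gamma \sum_{j=0}^{k-1} \lambda^{j\gamma} \leq V(x) + \frac{b\gamma}{1 - \lambda^{\gamma}}.
\end{equation}
The elementary inequality $1 - e^{-u} \geq u e^{-u}$ for $u \geq 0$ applied with $u = \gamma \log(1/\lambda)$ yields $b\gamma/(1 - \lambda^{\gamma}) \leq b \lambda^{-\gamma}/\log(1/\lambda) \leq b \lambda^{-\bar\gamma}/\log(1/\lambda) = D_3$, uniformly in $\gamma \in \ocint{0, \bar\gamma}$, proving the first bound. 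Existence and uniqueness of $\pi_{\gamma,\theta}$, together with $\lim_{k\to\infty}\Vnorm{\updelta_x R_{\gamma,\theta}^k - \pi_{\gamma,\theta}} = 0$, follow from standard $V$-geometric ergodicity: under \tup{\Cref{assum:glip}} the Euler kernel admits a strictly positive continuous Gaussian transition density, so $R_{\gamma,\theta}$ is strong Feller, $\psi$-irreducible with respect to Lebesgue measure, and aperiodic; since $V$ is coercive its sublevel sets are compact, hence petite, and the Meyn--Tweedie theory applies.

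To obtain the sharper bound $\pi_{\gamma,\theta}(V) \leq D_3$, I would first establish finiteness of $\pi_{\gamma,\theta}(V)$ via truncation: for $V_N = V \wedge N$, total variation convergence gives $\int V_N \, \rmd \pi_{\gamma,\theta} = \lim_k R_{\gamma,\theta}^k V_N(x_0) \leq V(x_0) + D_3$ for any fixed $x_0$, and monotone convergence as $N \to +\infty$ yields $\pi_{\gamma,\theta}(V) < +\infty$. Then integrating the drift against invariance gives $\pi_{\gamma,\theta}(V) = \pi_{\gamma,\theta} R_{\gamma,\theta} V \leq \lambda^{\gamma} \pi_{\gamma,\theta}(V) + b\gamma$, whence $\pi_{\gamma,\theta}(V) \leq b\gamma/(1-\lambda^{\gamma}) \leq D_3$ by the same estimate as above.

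For part (b), the argument is analogous in continuous time. The Lyapunov bound $\generator_\theta V \leq -\zeta V + \beta$ together with the coercivity of $V$ yields non-explosion of $(Y_t^\theta)_{t \geq 0}$ via Khasminskii's criterion; Dynkin's formula combined with Gronwall's lemma then gives $\expe{V(Y_t^\theta)} \leq V(x) \rme^{-\zeta t} + (\beta/\zeta)(1 - \rme^{-\zeta t})$, so $(\updelta_x \Pker_{t,\theta})_{t \geq 0}$ is tight and an invariant measure $\pi_\theta$ exists by Krylov--Bogolyubov. Uniqueness and the $V$-norm convergence $\lim_{t \to +\infty} \Vnorm{\updelta_x \Pker_{t,\theta} - \pi_\theta} = 0$ follow from the strong Feller and irreducibility properties of the Langevin diffusion (whose identity diffusion coefficient is non-degenerate) combined with the drift condition, via the continuous-time Meyn--Tweedie theorem. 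Integrating $\generator_\theta V$ against $\pi_\theta$ (after truncation as in part (a), since $\generator_\theta V$ is bounded above by $\beta$) gives $0 = \pi_\theta(\generator_\theta V) \leq -\zeta \pi_\theta(V) + \beta$, whence $\pi_\theta(V) \leq \beta/\zeta = D_0$.

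The main technical obstacle in both parts is the \emph{a priori} integrability of $V$ against the invariant measure: integrating the drift inequality against $\pi$ to extract the sharp constant is only legitimate once $\pi(V) < +\infty$ is known, and this has to be bootstrapped from the iterated bound via a truncation-and-monotone-convergence step. Once integrability is secured, the remaining algebra is immediate.
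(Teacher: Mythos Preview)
Your proof is correct and follows essentially the same route as the paper: drift iteration for the quantitative bounds, irreducibility/Feller/petiteness plus Meyn--Tweedie for existence, uniqueness, and $V$-norm ergodicity in both discrete and continuous time. The paper simply cites \cite[Lemma 1]{durmus2017unadjusted} for the iterated drift bound and \cite[Theorems 2.1, 4.5, 5.1]{meyn1993criteria_iii} for the continuous-time conclusions, whereas you spell these out by hand; your truncation-and-monotone-convergence step to justify $\pi(V)<+\infty$ before integrating the drift is exactly the content hidden behind those references.

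One small point worth tightening in part (b): in the paper's context $\pi_\theta$ is the \emph{pre-defined} Gibbs measure with density proportional to $\exp(-U_\theta)$, so the claim is that this specific measure is invariant. Your Krylov--Bogolyubov argument produces \emph{some} invariant measure; to identify it with $\pi_\theta$ you should either invoke uniqueness together with the classical fact that $\pi_\theta(\generator_\theta f)=0$ for $f\in \rmc_c^2(\rset^d)$ (integration by parts), or verify invariance of $\pi_\theta$ directly as the paper does via \cite[Chapter 7, Proposition 1.5]{revuz1999continuous} and \cite[Chapter 4, Theorem 9.17]{ethier1986characterization}. This is a one-line addition, not a genuine gap.
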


\begin{proof}
  \begin{enumerate}[wide, labelwidth=!, labelindent=0pt, label=(\alph*)]
  \item for any $\gamma \in \ocint{0,\bgamma}$ and $\theta \in \Theta$, $\Rgt$ is irreducible with respect to the Lebesgue measure on $\rset^d$, has the Feller property and satisfies \hyperlink{ass:drift_discrete}{$\bfDd(V,\lambda^{\gamma},b\gamma)$} then \cite[Section 4.4]{meyn1993criteria_i} applies and $\Rker_{\gamma, \theta}$ admits an invariant probability measure $\pi_{\gamma, \theta}$.
  The discrete drift condition and \cite[Lemma 1]{durmus2017unadjusted} give that for any $\gamma \in \ocint{0,\bgamma}$ and $\theta \in \Theta$
  \begin{equation}
    \Rker_{\gamma, \theta}^k V(x) \leq V(x) + b \lambda^{-\bgamma}/\log(1/\lambda) \eqsp , \qquad \pi_{\gamma, \theta} (V) \leq b \lambda^{-\bgamma}/\log(1/\lambda) \eqsp .
  \end{equation}
  We obtain that for all $\theta \in \Theta$ and $x \in \rset^d$, $\lim_{k \to +\infty} \| \updelta_x \Rker_{\gamma,\theta}^{k} - \pi_{\gamma, \theta} \|_V =0$ using \cite[Theorem 16.0.1]{meyn1993markov}.
\item Using \hyperlink{assum:drift_continuous}{$\bfDc(V,\zeta,\beta)$} and \cite[Theorem 2.1]{meyn1993criteria_iii} we get that the diffusion process is non-explosive and thus $(\Pker_{t,\theta})_{t \geq 0}$ is defined for any $\theta \in \Theta$ and $t \geq 0$. Using \cite[Corollary 10.1.4]{stroock1997multi} for any $\theta \in \Theta$, $(\Pker_{t,\theta})_{t \geq 0}$ is strongly Feller continuous, therefore any compact sets is petite for the Markov kernel $\Pker_{h, \theta}$, for any $h >0$ and $\theta \in \Theta$,  by \cite[Theorem 6.0.1]{meyn1993markov}. Using \cite[Chapter 7, Proposition 1.5]{revuz1999continuous}, \cite[Chapter 4, Theorem 9.17]{ethier1986characterization}, and the fact that $\pi_{\theta}(\A_{\theta}f) = 0$ for any $\theta \in \Theta$ and $f \in \mrc_c^2(\rset^d)$, we obtain that for any $\theta \in \Theta$, $\pi_{\theta}$ is an invariant measure for $(\Pker_{t,\theta})_{t \geq 0}$. Using \hyperlink{assum:drift_continuous}{$\bfDc(V,\zeta,\beta)$} and \cite[Theorem 4.5]{meyn1993criteria_iii} we get that for all $\theta \in \Theta$, $ \pi_{\theta}(V) \leq \beta / \zeta$. Finally, the convergence is ensured using \cite[Theorem 5.1]{meyn:tweedie:1993:III}.
  \end{enumerate}
\end{proof}

As an immediate corollary we obtain that under the conditions of \Cref{lemma:V_control} for any $\theta \in \Theta$, $\gamma \in \ocint{0,\bgamma}$ and $k \in \N$,
\begin{equation}\pi_{\theta} \Rker_{\gamma, \theta}^k V \leq \beta / \zeta + b\lambda^{-\bgamma} / \log(1/\lambda) \eqsp . \label{eq:majo_combi}\end{equation}

\begin{lemma}
  \label{lemma:v_norm_soul}
  Let $V: \ \rset^d \rightarrow [1,+\infty)$. Assume there exist $\lambda \in \ooint{0,1}$, $b \geq 0$ and $\bgamma >0$ such that for any $\theta \in \Theta$ and $\gamma \in \ocint{0,\bgamma}$, $\Rker_{\gamma,\theta}$ associated with the recursion \eqref{eq:euler_maruyama_langevin} satisfies \hyperlink{ass:drift_discrete}{$\bfDd(V,\lambda^{\gamma},b\gamma)$}.    Let $(\gamma_n)_{n \in \nset}$,
   $(\delta_n)_{n \in \nset}$ be sequences of non-increasing positive real numbers and $(m_n)_{n \in \nset}$ be a sequence of positive integers satisfying 
   $\sup_{n \in \nset} \gamma_n < \bgamma$. Then, 
 $(X_k^n)_{n \in \N, k \in \lbrace 0, \dots, m_n \rbrace}$ given by \eqref{eq:algo_SOUL} with $\{\Kker_{\gamma,\theta} \, : \, \gamma \in \ocint{0,\bgamma}, \theta \in \Theta\}=\{\Rker_{\gamma,\theta} \, : \, \gamma \in \ocint{0,\bgamma}, \theta \in \Theta\}$ satisfies for all $p, n \in \N$ and $k \in \lbrace 0, \dots, m_n \rbrace$
  \begin{equation}
    \CPE{\Rker_{\gamma_n, \theta_n}^p V(X_k^n)}{X_0^0} \leq D_1 V(X_0^0) \eqsp , \qquad D_1 = 1 + 2b \lambda^{-\bgamma}/\log(1/\lambda) \eqsp .
  \end{equation}
\end{lemma}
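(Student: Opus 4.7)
The plan is to iterate the per-step drift $\Rker_{\gamma,\theta}V \leq \lambda^{\gamma}V + b\gamma$ and then chain across the epochs using the warm start $X_0^n = X_{m_{n-1}}^{n-1}$ together with the tower property. The crucial point, discussed below, is that one must keep the resulting bound in a sharp convex-combination form rather than a crude additive form, otherwise the uniformity in $n$ is lost.

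First, I would iterate the drift for a single kernel: for any $\gamma \in (0,\bgamma]$, $\theta \in \Theta$, $x \in \rset^d$, and $q \in \N$,
\begin{equation*}
  \Rker_{\gamma,\theta}^q V(x) \leq \lambda^{q\gamma}V(x) + b\gamma\sum_{j=0}^{q-1}\lambda^{j\gamma} \leq \lambda^{q\gamma}V(x) + D\left(1-\lambda^{q\gamma}\right),
\end{equation*}
with $D = b\lambda^{-\bgamma}/\log(1/\lambda)$, using the inequality $1-\lambda^{\gamma} \geq \gamma\log(1/\lambda)\lambda^{\bgamma}$ (which follows from $\rme^u-1 \geq u$, as in the proof of \Cref{lem:tech_sum_finie}). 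Since the right-hand side is a convex combination of $V(x)$ and $D$, it is also bounded by $\max(V(x),D) \leq V(x)+D$. Next, I would exploit the warm start: because $\theta_n$ is $\mcf_{n-1}$-measurable and $X_0^n = X_{m_{n-1}}^{n-1}$ for $n\geq 1$, the Markov property gives
\begin{equation*}
  \CPE{\Rker_{\gamma_n,\theta_n}^p V(X_k^n)}{\mcf_{n-1}} = \Rker_{\gamma_n,\theta_n}^{k+p}V\!\left(X_{m_{n-1}}^{n-1}\right) \leq \max\!\left(V(X_{m_{n-1}}^{n-1}),D\right)
\end{equation*}
for any $p\in\N$ and $k\in\{0,\ldots,m_n\}$, by applying the first step with $q=k+p$. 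The analogous identity with $X_0^0$ replacing $X_{m_{n-1}}^{n-1}$ covers $n=0$.

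Third, I would show by induction on $n$ that $a_n := \CPE{V(X_{m_n}^n)}{X_0^0} \leq (1+D)V(X_0^0)$. The base case follows from $\Rker_{\gamma_0,\theta_0}^{m_0}V(X_0^0) \leq \max(V(X_0^0),D) \leq V(X_0^0)+D \leq (1+D)V(X_0^0)$ since $V\geq 1$. For the inductive step, taking $\CPE{\cdot}{X_0^0}$ in the convex-combination form of the drift applied over the full $n$-th epoch yields $a_n \leq \lambda^{m_n\gamma_n}a_{n-1} + D(1-\lambda^{m_n\gamma_n}) \leq \max(a_{n-1},D) \leq (1+D)V(X_0^0)$. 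Combining this with the second step, for $n\geq 1$,
\begin{equation*}
  \CPE{\Rker_{\gamma_n,\theta_n}^p V(X_k^n)}{X_0^0} \leq a_{n-1} + D \leq (1+D)V(X_0^0) + D\,V(X_0^0) = D_1 V(X_0^0),
\end{equation*}
using $V(X_0^0)\geq 1$ at the last inequality; the case $n=0$ is handled directly from the second step applied to $X_0^0$.

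The hard part is avoiding the trap that the naive bound $\Rker_{\gamma,\theta}^q V(x) \leq V(x)+D$ creates. Used in combination with the warm start, its additive errors accumulate as $V(X_0^0) + nD$ and no uniform-in-$n$ control is possible. The refinement to the convex-combination form $\lambda^{q\gamma}V(x) + (1-\lambda^{q\gamma})D$, though it contains no contraction factor bounded away from $1$ (since $m_n\gamma_n$ need not be bounded below), yields the max-type upper bound $\max(V(x),D)$ that is stable under chaining in expectation. The induction then closes because $V\geq 1$ lets the residual additive offset $D$ be absorbed into the multiplicative constant $D_1 = 1 + 2D$.
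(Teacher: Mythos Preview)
Your proof is correct. Both you and the paper iterate the one-step drift and control the resulting geometric-type series uniformly in $n$, but you organize the argument differently. The paper flattens all epochs into a single inhomogeneous chain indexed by a global time $i$ with step $\tilde\gamma_i$, obtains $\CPE{V(X_k^n)}{X_0^0} \leq \lambda^{\Gamma_{\mathrm{q}_{n,k}}}V(X_0^0) + b\sum_{i}\tilde\gamma_i\lambda^{\Gamma_{i+1,\mathrm{q}_{n,k}}}$, and then bounds the sum by a telescoping product identity based on $\lambda^{t}\leq 1 + t\lambda^{\tilde\gamma_1}\log\lambda$. You instead keep the epoch structure, derive the sharper form $\Rker_{\gamma,\theta}^q V \leq \lambda^{q\gamma}V + D(1-\lambda^{q\gamma})$, recognise it as a convex combination so that it is dominated by $\max(V,D)$, and close by an induction on $n$ showing $\CPE{V(X_{m_n}^n)}{X_0^0}\leq(1+D)V(X_0^0)$. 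Your route is slightly more conceptual and avoids the global time-reindexing bookkeeping; the paper's route makes the dependence on the cumulative time $\Gamma_{\mathrm{q}_{n,k}}$ explicit, which can be useful if one later needs the actual contraction factor rather than just the uniform bound. Both yield the same constant $D_1 = 1 + 2b\lambda^{-\bgamma}/\log(1/\lambda)$.
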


\begin{proof}
By induction we obtain that
  \begin{equation}
    \label{eq:drift_check_h1_a}
    \CPE{V(X_k^{n+1})}{\mathcal{F}_n}= \Rker_{\gamma_{n+1},\theta_{n+1}}^kV(X_0^{n+1})    \leq \lambda^{k \gamma_{n+1}}V(X_0^{n+1}) + b \gamma_{n+1} \sum_{i=1}^k \lambda^{\gamma_{n+1}(k- i)} \eqsp,
  \end{equation}
  where $(\mcf_n)_{n \in \nset}$ is defined by \eqref{eq:def_F_n}. 
Similarly,  we obtain for any $k \in \lbrace 0, \dots, m_0 \rbrace$,
  \begin{equation}
    \label{eq:drift_check_h1_b}
    \CPE{V(X_k^{0})}{X_0^0} = \Rker_{\gamma_{0}, \theta_0}^kV(X_0^0) \leq \lambda^{k \gamma_{0}}V(X_0^{0}) + b \gamma_{0} \sum_{i=1}^k \lambda^{\gamma_{0}(k- i)} \eqsp.
      \end{equation}
Define for $\ell \in \nset, k \in \nset$ and $i \in \nsets$, $\mathrm{q}_{\ell,k} = \sum_{j=0}^{\ell-1} m_j + k$, $\mathrm{q}_n = \mathrm{q}_{\ell,0}$ and $\tilde{\gamma}_i = \sum_{j=0}^{+\infty} \gamma_j \1_{(\mathrm{q}_j, \mathrm{q}_{j+1}]}(i) $. In addition, consider for any $\mathrm{p}, \mathrm{q} \in \nsets$,  $\Gamma_{\mathrm{p},\mathrm{q}} = \sum_{i=\mathrm{p}}^{\mathrm{q}} \tilde{\gamma}_i$ and $\Gamma_{\mathrm{p}} = \Gamma_{1,\mathrm{p}}$. Combining \eqref{eq:drift_check_h1_a}, \eqref{eq:drift_check_h1_b} and \Cref{lem:tech_sum_finie} we get for any $n \in \N$ and $k \in \lbrace 0,\dots,m_n \rbrace$
\begin{align}
      \label{eq:drift_check_h1_c}
  \CPE{\Rker_{\gamma_n, \theta_n}^pV(X_k^n)}{X_0^0} &\leq \lambda^{\gamma_n p} \CPE{V(X_k^n)}{X_0^0} + b\log(1/\lambda)\lambda^{-\bgamma}
\\ &\leq \lambda^{\Gamma_{\mathrm{q}_{n,k}}} V(X_0^0) + b \sum_{i=1}^{\mathrm{q}_{n,k}} \tilde{\gamma}_i \lambda^{\Gamma_{i+1,\mathrm{q}_{n,k}}} + b\log(1/\lambda)\lambda^{-\bgamma}\eqsp .
  \end{align}
  Since $(\tilde{\gamma}_i)_{i \in \N}$ is nonincreasing and for all $t \geq 0$, $1 - \lambda^t \geq - t \lambda^t \log(\lambda) $, we have for all $\mathrm{q} \in \nsets$, 
\begin{align}
&\sum_{i=1}^{\mathrm{q}} \tilde{\gamma}_i \lambda^{ \Gamma_{i+1,\mathrm{q}}} \leq \sum_{i=1}^\mathrm{q} \tilde{\gamma}_i \prod_{j=i+1}^\mathrm{q} (1+ \lambda^{\tilde{\gamma}_1}  \log(\lambda) \tilde{\gamma}_j) \\
&\leq (- \lambda^{\tilde{\gamma}_1}\log(\lambda))^{-1}\sum_{i=1}^\mathrm{q} \defEns{\prod_{j=i+1}^\mathrm{q} (1+\lambda^{\tilde{\gamma}_1} \log(\lambda)  \tilde{\gamma}_j) -\prod_{j=i}^\mathrm{q} (1+  \lambda^{\tilde{\gamma}_1} \log(\lambda) \tilde{\gamma}_j)} \\
& \leq  (-\lambda^{\tilde{\gamma}_1}\log(\lambda) )^{-1} \eqsp.
\end{align}
Combining this result and \eqref{eq:drift_check_h1_c} completes the proof. 
\end{proof}

\begin{lemma}
  \label{lemma:kl_error}
Let $V: \ \rset^d \to \coint{1,+\infty}$ measurable and  $M_V \geq 0$ such that $\sup_{x \in \rset^d} \defEns{(1 + \| x \|)^{2}/ V(x) }\leq M_V$.  
  Assume \tup{\Cref{assum:glip}} and that for any $\theta \in \Theta$, $\gamma \in \ocint{0, \bgamma}$ and $k \in \nset$,
  \begin{equation}
    \label{eq:majo_tout_k}
    \pi_{\theta} \Rker_{\gamma, \theta}^{k}(V) \leq \tilde{D}_1 \eqsp , \qquad \pi_{\theta} \Pker_{\gamma m_{\gamma}, \theta} V  \leq \tilde{D}_1 \eqsp ,
  \end{equation}
    with $m_{\gamma} = \ceil{1/\gamma}$.
  Then for any $\theta \in \Theta$ and $\gamma \in \ocint{0, \bgamma}$
  \begin{multline}
    \Vnorm[V^{1/2}]{\pi_{\theta} \Rker_{\gamma, \theta}^{m_{\gamma}} -\pi_{\theta} \Pker_{\gamma m_{\gamma}, \theta}}^2 \\ \leq 2  \tilde{D}_1\LUx^2 \gamma (1 + \bgamma) \defEns{d + 2\bgamma(\sup_{\theta \in \Theta} \norm[2]{\nabla_{x} U_{\theta}(0)} + \LUx^2 M_V \tilde{D}_1)} \eqsp ,
  \end{multline}

\end{lemma}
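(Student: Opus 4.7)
The strategy is to bound the $V^{1/2}$-norm between the two distributions by a weighted Pinsker-type inequality and then estimate the intervening Kullback--Leibler divergence via Girsanov's theorem applied to the Euler--Maruyama interpolation. Set $T = \gamma m_{\gamma}$ and, given a standard Brownian motion $(B_t)_{t \in [0,T]}$ and $X_0 \sim \pi_{\theta}$, define the continuous interpolation of the Euler scheme by
\begin{equation*}
X_t = X_{k\gamma} - (t - k\gamma)\nabla_x U_{\theta}(X_{k\gamma}) + \sqrt{2}(B_t - B_{k\gamma}), \qquad t \in [k\gamma,(k+1)\gamma],
\end{equation*}
so $X_{k\gamma} \sim \pi_{\theta} \Rker_{\gamma,\theta}^k$; and let $(Y_t)$ solve the Langevin SDE $dY_t = -\nabla_x U_{\theta}(Y_t)dt + \sqrt{2}dB_t$ with $Y_0 = X_0$, so $Y_T \sim \pi_{\theta}\Pker_{T,\theta}$.

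Under \Cref{assum:glip} (which bounds $\|\nabla U_\theta(x)\| \leq \|\nabla U_\theta(0)\| + \LUx\|x\|$ and provides enough moments via the $\tilde D_1$ hypothesis to verify Novikov's condition), Girsanov's theorem and the data-processing inequality yield
\begin{equation*}
\KL{\pi_{\theta}\Rker_{\gamma,\theta}^{m_{\gamma}}}{\pi_{\theta}\Pker_{T,\theta}}
\leq \tfrac{1}{4}\, \expe{\int_0^{T} \|\nabla_x U_{\theta}(X_s) - \nabla_x U_{\theta}(X_{\eta_{\gamma}(s)})\|^2\,\mathrm{d}s},
\end{equation*}
with $\eta_\gamma(s)=\lfloor s/\gamma\rfloor\gamma$. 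Using the Lipschitz bound and the explicit form of the interpolation, for $s\in[k\gamma,(k+1)\gamma]$ the BM increment is independent of $X_{k\gamma}$, giving $\EE\|X_s - X_{k\gamma}\|^2 = (s-k\gamma)^2 \EE\|\nabla_x U_\theta(X_{k\gamma})\|^2 + 2d(s-k\gamma)$. Integrating over each interval and summing yields $\frac{\gamma^3}{3}\sum_{k=0}^{m_\gamma - 1}\EE\|\nabla_x U_\theta(X_{k\gamma})\|^2 + d\gamma^2 m_\gamma$. The gradient second moment is controlled by $\|\nabla U_\theta(x)\|^2 \leq 2\|\nabla U_\theta(0)\|^2 + 2\LUx^2\|x\|^2 \leq 2\sup_\theta\|\nabla U_\theta(0)\|^2 + 2\LUx^2 M_V V(x)$, and then the hypothesis $\pi_\theta \Rker_{\gamma,\theta}^k(V)\leq \tilde D_1$ gives the uniform bound $\EE\|\nabla U_\theta(X_{k\gamma})\|^2 \leq 2(\sup_\theta\|\nabla U_\theta(0)\|^2 + \LUx^2 M_V \tilde D_1)$. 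Using $\gamma m_\gamma \leq 1+\bgamma$ this assembles into
\begin{equation*}
\KL{\pi_{\theta}\Rker_{\gamma,\theta}^{m_{\gamma}}}{\pi_{\theta}\Pker_{T,\theta}} \leq \tfrac{\LUx^2\gamma(1+\bgamma)}{4}\bigl[d + \tfrac{2\gamma}{3}(\sup_\theta\|\nabla_x U_\theta(0)\|^2 + \LUx^2 M_V \tilde D_1)\bigr].
\end{equation*}

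To pass from KL to $V^{1/2}$-norm I will establish a weighted Pinsker inequality: writing $h = d\mu/d\nu$ and using $|h-1| = (\sqrt h + 1)|\sqrt h - 1|$, Cauchy--Schwarz gives
\begin{equation*}
|\mu - \nu|(V^{1/2}) \leq \Bigl(\int V(\sqrt h + 1)^2 d\nu\Bigr)^{1/2}\Bigl(\int(\sqrt h - 1)^2 d\nu\Bigr)^{1/2} \leq \sqrt{2(\mu(V)+\nu(V))}\sqrt{\KL{\mu}{\nu}},
\end{equation*}
where I used $(\sqrt h + 1)^2 \leq 2(h+1)$ and the Hellinger-KL comparison $H^2 \leq \mathrm{KL}$. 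Applied with $\mu(V),\nu(V)\leq \tilde D_1$ this yields $\Vnorm[V^{1/2}]{\mu - \nu}^2 \leq 4\tilde D_1 \KL{\mu}{\nu}$. Plugging in the KL estimate, using $\gamma \leq \bgamma$ to absorb $\tfrac{2\gamma}{3}$ into $2\bgamma$, produces the stated bound $2\tilde D_1 \LUx^2 \gamma (1+\bgamma)\{d + 2\bgamma(\sup_\theta\|\nabla_x U_\theta(0)\|^2 + \LUx^2 M_V \tilde D_1)\}$.

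The main technical nuisance is the rigorous justification of Girsanov: one must check Novikov's condition for the exponential martingale, which requires exponential integrability of the drift difference along the path. This is standard but slightly delicate since $\nabla_x U_\theta$ is only Lipschitz; a cleaner route is to apply Girsanov on a localising sequence of stopping times, pass to the limit via dominated convergence using the moment hypothesis $\pi_\theta \Rker_{\gamma,\theta}^k(V)\leq \tilde D_1$, and invoke the lower-semicontinuity of relative entropy. The rest of the argument is essentially the Durmus--Moulines/Dalalyan discretisation bookkeeping tailored to produce the $(1+\bgamma)$ and $2\bgamma$ factors in the statement.
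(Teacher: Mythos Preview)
Your proposal is correct and follows essentially the same route as the paper's proof: a weighted Pinsker inequality to pass from the $V^{1/2}$-norm to the Kullback--Leibler divergence, followed by the Girsanov-based discretisation bound on the KL. The paper simply cites \cite[Lemma~24]{durmus2017unadjusted} for the weighted Pinsker step and \cite[Equation~(15)]{durmus2017unadjusted} for the Girsanov step, whereas you derive both in-line; your explicit bookkeeping even yields a constant that is a factor of $2$ tighter than the stated bound (you obtain $\tilde D_1 \LUx^2 \gamma(1+\bgamma)\{\cdots\}$ before any relaxation), so the lemma follows a fortiori.
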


\begin{proof}
  The proof follows the lines of \cite[Theorem 10]{durmus2017unadjusted}.
  Let $\theta \in \Theta$ and $\gamma \in \ocint{0, \bgamma}$. We have, using a generalized Pinsker inequality \cite[Lemma 24]{durmus2017unadjusted}, that
  \begin{align}
    \Vnorm[V^{1/2}]{\pi_{\theta} \Rker_{\gamma, \theta}^{m_{\gamma}} -\pi_{\theta} \Pker_{\gamma m_{\gamma}, \theta}}^2 &\leq 2 (\pi_{\theta} \Rker_{\gamma, \theta}^{m_{\gamma}}V + \pi_{\theta} \Pker_{\gamma m_{\gamma}, \theta}V) \KL{\pi_{\theta} \Rker_{\gamma, \theta}^{m_{\gamma}}}{\pi_{\theta} \Pker_{\gamma m_{\gamma}, \theta}} \eqsp . \\ &\leq 4 \tilde{D}_1 \KL{\pi_{\theta} \Rker_{\gamma, \theta}^{m_{\gamma}}}{\pi_{\theta} \Pker_{\gamma m_{\gamma}, \theta}} \eqsp .
  \end{align}
  Using \Cref{assum:glip}, \cite[Equation (15)]{durmus2017unadjusted}, \cite[Theorem 4.1, Chapter 2]{kullback1997information}, \eqref{eq:majo_tout_k} and that for any $a,b \in \rset$, $(a+b)^2 \leq 2(a^2 + b^2)$ we obtain that
  \begin{align}
    \KL{\pi_{\theta} \Rker_{\gamma, \theta}^{m_{\gamma}}}{\pi_{\theta} \Pker_{\gamma m_{\gamma}, \theta}} &\leq \LUx^2 m_{\gamma} \gamma^2(d + \bgamma \sup_{k \in \nset} \pi_{\theta} \Rker_{\gamma, \theta}^{k}\norm[2]{\nabla_x U_{\theta}(x)} ) \\
    &\leq \LUx^2 (1+ \bgamma) \gamma (d + 2\bgamma(\sup_{\theta \in \Theta} \norm[2]{\nabla_x U_{\theta}(0)} + \LUx^2 M_V \tilde{D}_1)) \eqsp ,
  \end{align}
  which concludes the proof.
\end{proof}

\begin{proposition}
  \label{propo:discrete_vs_continuous}
  Let  $V: \rset^d \to \coint{1,+\infty}$ measurable and $M_V \geq 0$ such that $\sup_{x \in \rset^d} \defEns{(1 + \| x \|)^{2}/ V(x) }\leq M_V$.
  Assume \tup{\Cref{assum:glip}} and that there exist $\lambda \in \ooint{0,1}$, $b\geq 0$ and $\bgamma >0$ such that for any $\theta \in \Theta$ and $\gamma \in \ocint{0,\bgamma}$ $\Rker_{\gamma,\theta}$ satisifies \hyperlink{ass:drift_discrete}{$\bfDd(V,\lambda^{\gamma},b\gamma)$}. Assume that there exists $D_0 \geq 0$ 
  such that for any $\theta \in \Theta$, $\pi_{\theta}(V) \leq D_0$
  . Then there exists $D_4 \geq 0$ such that for any $\theta \in \Theta$ and  $\gamma \in \ocint{0,\bgamma}$
  \begin{equation} \| \pi_{\gamma,\theta} - \pi_{\theta} \|_{V^{1/2}} \leq D_4 \gamma^{1/2} \eqsp .\end{equation}
\end{proposition}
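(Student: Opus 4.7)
The plan is to combine the KL-based single-block discretisation bound of \Cref{lemma:kl_error} with the uniform $V$-norm geometric ergodicity of the ULA kernel $\Rker_{\gamma,\theta}$. First, I would exploit the two invariances $\pi_{\gamma,\theta}\Rker_{\gamma,\theta}^N = \pi_{\gamma,\theta}$ and $\pi_\theta \Pker_{N\gamma,\theta} = \pi_\theta$ (both established in \Cref{lemma:V_control}) to write, for any $N \in \nset$, $\pi_{\gamma,\theta} - \pi_\theta = \pi_{\gamma,\theta}\Rker_{\gamma,\theta}^N - \pi_\theta \Pker_{N\gamma,\theta}$, and then insert the intermediate measure $\pi_\theta \Rker_{\gamma,\theta}^N$ to obtain
\begin{equation*}
\Vnorm[V^{1/2}]{\pi_{\gamma,\theta} - \pi_\theta} \leq \Vnorm[V^{1/2}]{\pi_{\gamma,\theta}\Rker_{\gamma,\theta}^N - \pi_\theta\Rker_{\gamma,\theta}^N} + \Vnorm[V^{1/2}]{\pi_\theta\Rker_{\gamma,\theta}^N - \pi_\theta\Pker_{N\gamma,\theta}}.
\end{equation*}

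Taking the natural choice $N = m_\gamma = \ceil{1/\gamma}$, so that $N\gamma \in [1,1+\bgamma]$, the second term is exactly of the form controlled by \Cref{lemma:kl_error}. To apply it, I would verify its moment hypothesis by combining the drift-derived bound $\Rker_{\gamma,\theta}^k V \leq V + D_3$ from \Cref{lemma:V_control} with $\pi_\theta(V) \leq D_0$, which gives $\pi_\theta \Rker_{\gamma,\theta}^k V \leq D_0 + D_3$ uniformly in $k$, $\gamma$ and $\theta$, while $\pi_\theta \Pker_{\gamma m_\gamma,\theta} V = \pi_\theta(V) \leq D_0$ by invariance of $\pi_\theta$ under the continuous semigroup. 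This yields a bound of order $\gamma^{1/2}$ for the second term, with an explicit constant depending only on $d$, $\LUx$, $\bgamma$, $D_0$, $D_3$ and $\sup_{\theta \in \Theta}\norm{\nabla_x U_\theta(0)}$.

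For the first term, I would invoke the uniform $V$-geometric ergodicity of $\Rker_{\gamma,\theta}$, which follows from the discrete drift condition $\bfDd(V,\lambda^\gamma,b\gamma)$ combined with the uniform minorisation on sub-level sets of $V$ provided by the non-degenerate Gaussian noise in \eqref{eq:euler_maruyama_langevin}. Using $\pi_{\gamma,\theta}\Rker_{\gamma,\theta}^N = \pi_{\gamma,\theta}$ and $\Vnorm[V^{1/2}]{\cdot} \leq \Vnorm{\cdot}$ (since $V^{1/2} \leq V$), this gives
\begin{equation*}
\Vnorm[V^{1/2}]{\pi_{\gamma,\theta}\Rker_{\gamma,\theta}^N - \pi_\theta\Rker_{\gamma,\theta}^N} \leq A_2 \rho^{N\gamma} \bigl(\pi_\theta(V) + \pi_{\gamma,\theta}(V)\bigr) \leq A_2 \rho^{N\gamma}(D_0 + D_3).
\end{equation*}

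The main obstacle will be balancing the two error sources. With $N = m_\gamma$ the first term is merely a constant, while enlarging $N\gamma$ to order $\log(1/\gamma)$ in order to contract it to $O(\sqrt\gamma)$ inflates the discretisation term, whose KL bound scales linearly in $N\gamma^2$. A naive balance produces only $O(\sqrt{\gamma\log(1/\gamma)})$; recovering the clean $\sqrt\gamma$ rate stated in the proposition requires a sharper argument, either a fixed-point recursion $\varepsilon(\gamma) \leq a\,\varepsilon(\gamma) + b\sqrt\gamma$ with $a<1$, obtained by bounding the first term directly in terms of $\Vnorm[V^{1/2}]{\pi_{\gamma,\theta}-\pi_\theta}$ via a centred-integral trick, or a telescoping decomposition that exploits the invariance of $\pi_\theta$ over continuous-time blocks of bounded length. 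Throughout I would need to check that the drift and minorisation constants appearing in $A_2$ and $\rho$ are uniform over $\theta \in \Theta$ and $\gamma \in \ocint{0,\bgamma}$, so that the final constant $D_4$ is independent of these parameters.
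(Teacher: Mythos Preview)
Your initial two-term decomposition is not the route the paper takes, and you correctly diagnose why it stalls: with $N=m_\gamma$ the ergodic term $\Vnorm[V^{1/2}]{\pi_{\gamma,\theta}\Rker_{\gamma,\theta}^N-\pi_\theta\Rker_{\gamma,\theta}^N}$ is only $O(1)$, and pushing $N\gamma$ up to $\log(1/\gamma)$ inflates the KL term. Of the two remedies you list at the end, the paper implements the second, the telescoping one, and does so in a slightly different frame than you set up.

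Concretely, the paper never compares $\pi_{\gamma,\theta}$ with $\pi_\theta$ directly. Instead it starts both chains from $\pi_\theta$, writes
\[
\Vnorm[V^{1/2}]{\pi_\theta\Rker_{\gamma,\theta}^{q m_\gamma}-\pi_\theta\Pker_{q m_\gamma\gamma,\theta}}
\le\sum_{\ell=0}^{q-1}\Vnorm[V^{1/2}]{\pi_\theta\Pker_{(\ell+1)m_\gamma\gamma,\theta}\Rker_{\gamma,\theta}^{(q-\ell-1)m_\gamma}-\pi_\theta\Pker_{\ell m_\gamma\gamma,\theta}\Rker_{\gamma,\theta}^{(q-\ell)m_\gamma}},
\]
and then sends $q\to\infty$, using \Cref{lemma:V_control} to identify the limit with $\Vnorm[V^{1/2}]{\pi_{\gamma,\theta}-\pi_\theta}$. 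Two ingredients make each summand tractable. First, an \emph{operator} contraction of $\Rker_{\gamma,\theta}$ in $V^{1/2}$-norm,
\(
\Vnorm[V^{1/2}]{\mu\Rker_{\gamma,\theta}^{j}-\nu\Rker_{\gamma,\theta}^{j}}\le C\xi^{j\gamma}\Vnorm[V^{1/2}]{\mu-\nu}
\)
(cited from \cite[Theorem~6]{debortoli2018back}, obtained from the drift $\bfDd(V^{1/2},\lambda^{\gamma/2},\cdot)$ and a uniform minorisation), which pulls the trailing $\Rker_{\gamma,\theta}^{(q-\ell-1)m_\gamma}$ out at cost $C\xi^{(q-\ell-1)m_\gamma\gamma}$. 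Second, invariance of $\pi_\theta$ under $\Pker_{t,\theta}$ collapses the remaining factor $\pi_\theta\Pker_{\ell m_\gamma\gamma,\theta}$ back to $\pi_\theta$, so every summand is bounded by $C\xi^{(q-\ell-1)m_\gamma\gamma}\Vnorm[V^{1/2}]{\pi_\theta\Pker_{m_\gamma\gamma,\theta}-\pi_\theta\Rker_{\gamma,\theta}^{m_\gamma}}$. The single-block factor is $O(\sqrt\gamma)$ by \Cref{lemma:kl_error}, exactly as you planned; the geometric sum over $\ell$ is bounded by $C/(1-\xi)$ uniformly in $q$ and $\gamma$ since $m_\gamma\gamma\ge 1$.

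Your fixed-point alternative would also hinge on the same operator contraction, but note that the contraction comes with a multiplicative constant $C$ that may exceed $1$, so a single block need not give a strict contraction $a<1$ in your recursion $\varepsilon\le a\varepsilon+b\sqrt\gamma$; the telescoping/summation avoids this issue entirely. The piece genuinely missing from your write-up is the explicit invocation of the operator-norm contraction of $\Rker_{\gamma,\theta}$ (not just convergence of $\updelta_x\Rker_{\gamma,\theta}^n$ to $\pi_{\gamma,\theta}$), and the observation that running both evolutions from $\pi_\theta$ and exploiting its $\Pker$-invariance is what makes all the one-block errors identical and hence summable.
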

\begin{proof}
  Using \Cref{lemma:V_control} we obtain that for any $\theta \in \Theta$
  \begin{equation} \underset{k  \rightarrow  +\infty}{\lim} \Vnorm[V^{1/2}]{\pi_{\theta} \Rker_{\gamma,\theta}^k - \pi_{\theta} \Pker_{\gamma k, \theta} } =\Vnorm[V^{1/2}]{\pi_{\gamma,\theta} - \pi_{\theta}} \eqsp . \label{eq:limite}\end{equation}
  We now give an upper bound on $\Vnorm[V^{1/2}]{\pi_{\theta} \Rker_{\gamma,\theta}^k - \pi_{\theta} \Pker_{\gamma k, \theta} }$ for $k = q_{\gamma} m_{\gamma} $ with $m_{\gamma} = \ceil{1/\gamma}$ and $q_{\gamma}\in \N$.
  Using \cite[Theorem 6]{debortoli2018back} and that $\pi_{\theta}$ is invariant for $\Pker_{t, \theta}$ with $t\geq 0$, see \Cref{lemma:V_control}, we obtain for all $\theta \in \Theta$, $\gamma \in \ocint{0,\bgamma}$ and $k \in \N$ 
  \begin{align}
    &\Vnorm[V^{1/2}]{\pi_{\theta} \Rker_{\gamma,\theta}^k - \pi_{\theta} \Pker_{\gamma k ,\theta} } \\ &\leq \sum_{\ell=0}^{q_{\gamma}-1} \Vnorm[V^{1/2}]{\pi_{\theta} \Pker_{\gamma (\ell+1) m_{\gamma}, \theta} \Rker_{\gamma,\theta}^{(q_{\gamma} -(\ell+1))m_{\gamma} } - \pi_{\theta} \Pker_{\gamma \ell m_{\gamma}, \theta} \Rker_{\gamma,\theta}^{(q_{\gamma} -\ell)m_{\gamma} }} \\
                                                                                            &\leq \sum_{\ell=0}^{q_{\gamma} - 1} C \xi^{\gamma m_{\gamma} (q_{\gamma} - (\ell+1))} \Vnorm[V^{1/2}]{\pi_{\theta} \Pker_{\gamma \ell m_{\gamma}, \theta} \Pker_{m_{\gamma} \gamma, \theta}- \pi_{\theta} \Pker_{\gamma \ell m_{\gamma}, \theta} \Rker_{\gamma,\theta}^{m_{\gamma}}} \\
    & \leq \Vnorm[V^{1/2}]{\pi_{\theta}  \Pker_{m_{\gamma} \gamma, \theta}- \pi_{\theta} \Rker_{\gamma,\theta}^{m_{\gamma}}} \sum_{\ell=1}^{q_{\gamma}} C \xi^{\ell \gamma m_{\gamma}}  
      \eqsp ,  \label{eq:sum_decompo}
  \end{align}
where $C \geq 0, \xi \in (0,1)$ are the constants given by \cite[Theorem 6]{debortoli2018back} with minorization condition given by \cite[Proposition 8a]{debortoli2018back} with $\mtt = -\mathtt{L}$ since  \Cref{assum:glip} holds and drift condition \hyperlink{ass:drift_discrete}{$\bfDd(V^{1/2},\lambda^{\gamma},b\lambda^{-\bgamma/2}\gamma/2)$},  since for all $\theta \in \Theta$ and $\gamma \in \ocint{0,\bgamma}$ we have that $\Rker_{\gamma, \theta}$ satisfies \hyperlink{ass:drift_discrete}{$\bfDd(V,\lambda^{\gamma},b\gamma)$} and therefore using Jensen's inequality that $\Rker_{\gamma, \theta}$ satisfies \hyperlink{ass:drift_discrete}{$\bfDd(V^{1/2},\lambda^{\gamma/2},b\lambda^{-\bgamma/2}\gamma/2)$}.

We now give an upper bound on  error $\Vnorm[V^{1/2}]{\pi_{\theta}  \Pker_{m_{\gamma} \gamma, \theta}- \pi_{\theta} \Rker_{\gamma,\theta}^{m_{\gamma}}}$. Indeed, since $\A_{\theta}$ satisfies a \hyperlink{ass:drift_continuous}{$\bfDc(V, \zeta, \beta)$} and
  $\Rker_{\gamma, \theta}$ satisfies \hyperlink{ass:drift_discrete}{$\bfDd(V,\lambda^{\gamma},b\gamma)$} for any $\theta \in \Theta$ and $\gamma \in \ocint{0, \bgamma}$, we obtain using \eqref{eq:majo_combi} that for any $\theta \in \Theta$ and $\gamma \in \ocint{0,\bgamma}$
  \begin{equation}
    \label{eq:V_bound_continuous}
    \pi_{\theta}  \Pker_{\gamma m_{\gamma}, \theta}(V) \leq D_0 \eqsp , \qquad \pi_{\theta}  \Rker_{\gamma,\theta}^{m_{\gamma}}(V) \leq \tilde{D}_1 \eqsp , \qquad \tilde{D}_1 = D_0 + b\lambda^{-\bgamma} \log(1/\lambda)^{-1} \eqsp ,
  \end{equation}
Combining this result and \Cref{lemma:kl_error} we have for any $\theta \in \Theta$ and $\gamma \in \ocint{0, \bgamma}$

  \begin{equation}
    \label{eq:bound_vdemi}
    \Vnorm[V^{1/2}]{\pi_{\theta}  \Pker_{\gamma m_{\gamma}, \theta}- \pi_{\theta}  \Rker_{\gamma,\theta}^{m_{\gamma}}} \leq \tilde{D}_2 \gamma^{1/2} \eqsp , 
  \end{equation}
  with 
  \begin{equation}
    \label{eq:constant_Csecond}
    \tilde{D}_2 = 2\tilde{D}_1^{1/2}(1 + \bgamma)^{1/2}\defEns{d + 2 \bgamma (\LUx^2M_V + \sup_{\theta \in \Theta} \| \nabla_x U_{\theta}(0) \|^2 )  \tilde{D}_1}^{1/2}\LUx \eqsp .
  \end{equation}
  Combining \eqref{eq:sum_decompo} and \eqref{eq:bound_vdemi} we get for any $k \in \N$, $\theta \in \Theta$ and $\gamma \in \ocint{0,\bgamma}$
  \begin{align}
    \Vnorm[V^{1/2}]{\pi_{\theta} \Rker_{\gamma,\theta}^k - \pi_{\theta} \Pker_{\gamma k, \theta} } &\leq C \tilde{D}_2 \sum_{\ell = 1}^{q_{\gamma}} \xi^{\gamma m_{\gamma} \ell} \gamma^{1/2} \leq C \tilde{D}_2 (1-\xi)^{-1}\gamma^{1/2} \eqsp ,
  \end{align}
  where we used that $\xi^{\gamma m_{\gamma}} \leq \xi$. The conclusion follows from this result and \eqref{eq:limite}.
\end{proof}

\subsubsection{Proof of \Cref{thm:salem_langevin_cv}}
\label{sec:proof-crefthm:s-3}

Combining \Cref{propo:lyap_condition} and \Cref{lemma:v_norm_soul} we get that \Cref{assum:condition_majo_V}-\ref{assum:condition_majo_V_i} is satisfied with constant $A_1 \leftarrow D_1$. \Cref{assum:glip}, \Cref{assum:curvature}, \Cref{propo:lyap_condition} and \Cref{lemma:V_control}-\ref{item_V_control_i} ensure that \Cref{assum:condition_majo_V}-\ref{assum:condition_majo_V_ii} is satisfied  
by \cite[Theorem 14]{debortoli2018back} with $A_3 \leftarrow D_3$. \Cref{assum:condition_majo_V}-\ref{assum:condition_majo_V_iii} is satisfied combining \Cref{propo:lyap_condition}, \Cref{propo:unif_ergo_c} and \Cref{propo:discrete_vs_continuous} with $\Psibf(\gamma) \leftarrow D_4 \gamma^{1/2}$.

\subsection{Proof of \Cref{thm:salem_langevin_cv_fixed}}
\label{sec:proof-crefthm:s-1}

We preface the proof by a technical lemma.
\begin{proposition}
  \label{lem:v_kernel_error}
 Let $V: \rset^d \to [1,+\infty)$ and $M_{V,4} \geq 0$ such that $\sup_{ x \in \rset^d} \defEns{(1+\| x \|^4)/V(x)} \leq M_{V,4}$. Assume that there exists $M \geq 1$ such that for any $\theta \in \Theta$, $\gamma \in \ocint{0,\bgamma}$, with $\bgamma >0$ and $x \in \rset^d$, $\Rker_{\gamma, \theta} V(x) \leq M V(x)$. Assume \tup{\Cref{assum:glip}} and \tup{\Cref{assum:glip_theta}}, then we have for any $\theta_1, \theta_2 \in \Theta$, $ \gamma_1, \gamma_2 \in \ocint{0,\bgamma}$ with $\gamma_2 < \gamma_1$, $a \in \ccint{1/4,1/2}$ and $x \in \rset^d$
  \begin{equation}
    \label{eq:V_kernel_error}
    \Vnorm[V^{a}]{\updelta_x \Rker_{\gamma_1, \theta_1} - \updelta_x \Rker_{\gamma_2, \theta_2}}  \leq D_5 \parentheseDeux{ \gamma_1/\gamma_2 -1 + \gamma_2^{1/2} \| \theta_1 - \theta_2 \| }V(x)^{2a}  \eqsp ,
  \end{equation}
  where $\{\Rker_{\gamma,\theta},\gamma \in \ocint{0,\bgamma}, \theta\in \Theta\}$ is the sequence of Markov kernels associated with the recursion~\eqref{eq:euler_maruyama_langevin}
and 
\begin{equation}
  D_5 = \max \left( 2M^{1/2}\parentheseDeux{d/4 + \sup_{\theta \in \Theta} \| \nabla_x U_{\theta}(0) \|^2 + \LUx^2M_{4,V}^{1/2} }^{1/2},  (2 M)^{1/2} L_U \right) \eqsp .
\end{equation}

  \end{proposition}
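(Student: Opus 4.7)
The plan is to treat each Markov kernel $R_{\gamma,\theta}$ as a multivariate Gaussian with explicit mean and covariance, namely $\updelta_x R_{\gamma,\theta} = \loiGauss(x - \gamma\nabla_x U_\theta(x),\, 2\gamma I_d)$, and then to convert a bound on their Kullback--Leibler divergence into a bound on $\Vnorm[V^{a}]{\cdot}$ via a generalized Pinsker inequality of the same flavour as the one already used in the proof of \Cref{lemma:kl_error} (Lemma~24 of Durmus--Moulines). The route is thus: compute KL exactly, split the discrepancy into a stepsize part and a parameter part, take square roots, then apply Pinsker.

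First I would write out the KL between two Gaussians of differing diagonal covariances explicitly:
\begin{equation}
2\,\KL{\updelta_x R_{\gamma_1,\theta_1}}{\updelta_x R_{\gamma_2,\theta_2}} = d\left(\frac{\gamma_1}{\gamma_2} - 1 - \log\frac{\gamma_1}{\gamma_2}\right) + \frac{\|m_1 - m_2\|^2}{2\gamma_2},
\end{equation}
with $m_i = x - \gamma_i \nabla_x U_{\theta_i}(x)$. For the variance term, using $\gamma_1 > \gamma_2$, the elementary inequality $r - 1 - \log r \leq r - 1$ for $r \geq 1$ yields $d(\gamma_1/\gamma_2 - 1 - \log(\gamma_1/\gamma_2)) \leq d\bgamma\cdot(\gamma_2^{-1} - \gamma_1^{-1})$, which is exactly the squared form of the first target term. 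For the mean term, I would split $m_1 - m_2 = -(\gamma_1-\gamma_2)\nabla_x U_{\theta_1}(x) - \gamma_2(\nabla_x U_{\theta_1}(x) - \nabla_x U_{\theta_2}(x))$ and use $\|a+b\|^2 \leq 2\|a\|^2 + 2\|b\|^2$. The first summand is controlled by \Cref{assum:glip} giving $\|\nabla_x U_{\theta_1}(x)\|^2 \leq 2\sup_\theta\|\nabla_x U_\theta(0)\|^2 + 2\LUx^2\|x\|^2$, which via $(1+\|x\|^2)^2 \leq 2(1+\|x\|^4) \leq 2 M_{V,4}\,V(x)$ produces a bound proportional to $V(x)^{1/2}$; the second summand is controlled by \Cref{assum:glip_theta} giving $\gamma_2^2 L_U^2\|\theta_1-\theta_2\|^2 V(x)$.

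Taking square roots and using $\sqrt{A+B+C}\leq \sqrt{A}+\sqrt{B}+\sqrt{C}$ yields
\begin{equation}
\sqrt{\KL{\updelta_x R_{\gamma_1,\theta_1}}{\updelta_x R_{\gamma_2,\theta_2}}} \leq C_1(\gamma_2^{-1}-\gamma_1^{-1})^{1/2} + C_2\gamma_2^{-1/2}|\gamma_1-\gamma_2|V(x)^{1/4} + C_3\gamma_2^{1/2}\|\theta_1-\theta_2\|V(x)^{1/2}.
\end{equation}
I would then invoke the generalized Pinsker inequality $\Vnorm[V^{a}]{\mu-\nu}^2 \leq 2(\mu(V^{2a})+\nu(V^{2a}))\,\KL{\mu}{\nu}$, and control $\mu(V^{2a}) = R_{\gamma,\theta}V^{2a}(x)$ using Jensen's inequality (since $2a \in [1/2,1]$ is concave) together with the hypothesis $R_{\gamma,\theta}V(x) \leq M V(x)$, obtaining $R_{\gamma,\theta}V^{2a}(x) \leq M^{2a}V(x)^{2a} \leq M V(x)^{2a}$. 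Multiplying through and repackaging constants in the two branches of $D_5$ (the first branch collecting $d$, the gradient bound at $0$, $\LUx$ and $M_{V,4}$; the second being $(2M)^{1/2}L_U$) delivers the stated three-term bound.

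The main obstacle will be the exponent bookkeeping on $V$: the $\theta$-difference term in $\sqrt{KL}$ carries a factor $V(x)^{1/2}$, and combining this with the Pinsker prefactor $V(x)^{a}$ naturally gives $V(x)^{a+1/2}$, which for $a = 1/4$ overshoots the target $V(x)^{2a} = V(x)^{1/2}$. The resolution is to apply Pinsker and Jensen with care---bounding the $\theta$-block by choosing the exponents in the generalized Pinsker so that only $\mu(V)+\nu(V) \leq 2MV(x)$ enters and pairing this with the $V^{1/2}(x)$ growth of the mean difference directly, thereby absorbing everything into the factor $V(x)^{2a}$ after using $V \geq 1$ to handle exponents lying between $2a$ and $a+1/2$ via the monotonicity of $t \mapsto t^{\alpha}$. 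Once this accounting is done cleanly branch by branch, the constants fall into place exactly as claimed for $D_5$.
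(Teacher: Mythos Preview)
Your approach is exactly the paper's: apply the generalized Pinsker inequality $\Vnorm[V^{a}]{\mu-\nu}^2 \le 2(\mu(V^{2a})+\nu(V^{2a}))\,\KL{\mu}{\nu}$, bound the prefactor using $\Rker_{\gamma,\theta}V^{2a}(x)\le (\Rker_{\gamma,\theta}V(x))^{2a}\le M^{2a}V^{2a}(x)$ by Jensen, compute the Gaussian KL explicitly, and split the mean difference as $(\gamma_1-\gamma_2)\nabla_x U_{\theta_1}(x)+\gamma_2(\nabla_x U_{\theta_1}(x)-\nabla_x U_{\theta_2}(x))$. The covariance and $(\gamma_1-\gamma_2)$ pieces are handled identically in both.

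Where you go beyond the paper is in flagging the exponent mismatch on the $\theta$-block, and you are right to flag it: \Cref{assum:glip_theta} gives $\|\nabla_x U_{\theta_1}(x)-\nabla_x U_{\theta_2}(x)\|^2 \le L_U^2\|\theta_1-\theta_2\|^2 V(x)$, so after Pinsker the $\theta$-term carries $V^{a}(x)\cdot V^{1/2}(x)=V^{a+1/2}(x)$, which for $a<1/2$ strictly exceeds the target $V^{2a}(x)$. However, your proposed fix does not work. Using $V\ge 1$ and monotonicity of $t\mapsto t^{\alpha}$ lets you \emph{increase} exponents, not decrease them: for $a\le 1/2$ one has $a+1/2\ge 2a$ and hence $V^{a+1/2}(x)\ge V^{2a}(x)$, so you cannot absorb $V^{a+1/2}$ into $V^{2a}$ that way. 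Switching the Pinsker exponent to $V$ (so the prefactor becomes $V^{1/2}(x)$) produces $V(x)$ overall, which is still $\ge V^{2a}(x)$.

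The paper's own proof does not actually resolve this either: it simply writes the $\theta$-contribution to $\Xi$ as $2\gamma_2^{2}L_U^{2}\|\theta_1-\theta_2\|^{2}V^{2a}(x)$ directly from \Cref{assum:glip_theta}, which is only justified when $2a\ge 1$, i.e.\ $a=1/2$. So your write-up and the paper's agree for $a=1/2$ (and that is the case actually used downstream in \Cref{lemma:cv_norm_c_d}); for $a\in[1/4,1/2)$ the stated bound with exponent $V^{2a}$ is not established by either argument without a stronger version of \Cref{assum:glip_theta} (e.g.\ with $V^{2a-1/2}$ on the right-hand side).
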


  \begin{proof}
Let $x \in \rset^d$, $\theta_1, \theta_2 \in \Theta$ and $\gamma_1, \gamma_2 \in \ocint{0,\bgamma}$, $\gamma_2 < \gamma_1$.     Using \cite[Lemma 24]{durmus2017unadjusted} we have that 
    \begin{align}
      &\Vnorm[V^{a}]{\updelta_x \Rker_{\gamma_1, \theta_1} - \updelta_x \Rker_{\gamma_2, \theta_2}} \\
      &\phantom{aaaa} \leq \sqrt{2} \left( \Rker_{\gamma_1, \theta_1}V^{2a}(x) + \Rker_{\gamma_2, \theta_2}V^{2a}(x) \right)^{1/2}  \KL{ \updelta_x \Rker_{\gamma_1, \theta_1}}{ \updelta_x \Rker_{\gamma_2, \theta_2}}^{1/2} \\
       &\phantom{aaaa} \leq 2 M^{a} V^{a}(x) \KL{ \updelta_x \Rker_{\gamma_1, \theta_1}}{ \updelta_x \Rker_{\gamma_2, \theta_2}}^{1/2} \numberthis \label{eq:pinsker}
    \end{align}
Denote for any $\upmu \in \rset^d$ and $\upsigma^2 >0$, $\upgamma_{\upmu,\upsigma^2}$ the $d$-dimensional Gaussian distribution with mean $\upmu$ and covariance matrix $\upsigma^2 \Id$. Using that for any  $\upmu_1, \upmu_2 \in \rset^d$ and $\upsigma_1, \upsigma_2 >0$,
\begin{equation}
      \KL{\Upsilon_{\upmu_1, \upsigma_1 \Id}}{\Upsilon_{\upmu_2, \upsigma_2 \Id}}= \norm{\upmu_1 - \upmu_2}^2/(2\upsigma_2^2) + (d/2)\defEns{-\log(\upsigma_1^2/\upsigma_2^2) - 1 + \upsigma_1^2/\upsigma_2^2} \eqsp .
    \end{equation}
  In addition, if $\upsigma_1 \geq \upsigma_2$
  \begin{equation}
    \KL{\Upsilon_{\upmu_1, \upsigma_1 \Id}}{\Upsilon_{\upmu_2, \upsigma_2 \Id}} \leq \norm{\upmu_1 - \upmu_2}^2/(2\upsigma_2^2) + (d/2)(1 -\upsigma_1^2 / \upsigma_2^2)^2\eqsp .
  \end{equation}
  Therefore, we obtain that 
\begin{equation}
  \label{eq:lem_v_kernel_error_0}
            \KL{ \updelta_x\Rker_{\gamma_1, \theta_1}}{ \updelta_x\Rker_{\gamma_2, \theta_2}} \leq \Xi/(4\gamma_2) + (d/2)(1 - \gamma_1/\gamma_2)^2 \eqsp ,
          \end{equation}
          where $\Xi$ satisfies 
    \begin{align}
\Xi &=      \| \gamma_1 \nabla_x U_{\theta_1}(x) - \gamma_2 \nabla_x U_{\theta_2}(x) \|^2 
                                                       \\ & =\| \gamma_1 \nabla_x U_{\theta_1}(x)- \gamma_2 \nabla_x U_{\theta_1}(x)  + \gamma_2 \nabla_x U_{\theta_1}(x) - \gamma_2 \nabla_x U_{\theta_2}(x) \|^2 \\
                                                      &\leq  2\| \gamma_1 \nabla_x U_{\theta_1}(x) - \gamma_2 \nabla_x U_{\theta_1}(x)\|^2 + 2 \|\gamma_2 \nabla_x U_{\theta_1}(x) - \gamma_2 \nabla_x U_{\theta_2}(x) \|^2 \\
      &\leq   2 (\gamma_1 - \gamma_2)^2 \| \nabla_x U_{\theta_1}(x) \|^2+ 2\gamma_2^2\| \nabla_x U_{\theta_1}(x) - \nabla_x U_{\theta_2}(x) \|^2 \\
      \label{eq:lem_v_kernel_error_1}
      & \leq   2 (\gamma_1 - \gamma_2)^2 \| \nabla_x U_{\theta_1}(x) \|^2+ 2\gamma_2^2  L_U^2 \| \theta_1 - \theta_2 \|^2 V^{2a}(x)\eqsp,                                                                                \end{align}
    where we have used \Cref{assum:glip_theta} in the last line.
Using \Cref{assum:glip_theta} again and that $\sup_{\theta \in \Theta} \norm{\nabla_x U_{\theta}(0)} < \plusinfty $ by \Cref{assum:glip}, we get for any $a \in \ccint{1/4,1/2}$
\begin{equation}
  \| \nabla_x U_{ \theta}(x) \|^2 \leq   2 ( \norm{\nabla_x U_{ \theta}(x) - \nabla_x U_{ \theta}(0)}^2 + \sup_{\theta \in \Theta} \norm{\nabla_x U_{\theta}(0)}^2)
\leq   C_{\Theta} V^{2a}(x) \eqsp ,
\end{equation}
with $C_{\Theta} =   2 \sup_{\theta \in \Theta} \| \nabla_x U_{\theta}(0) \|^2 + 2\LUx^2M_{4,V}^{1/2}$.
Combining this result, $\log(\gamma_2/\gamma_1) \leq 0$ and and \eqref{eq:lem_v_kernel_error_1} in \eqref{eq:lem_v_kernel_error_0}, it follows that                                    \begin{align}                                                                             
                                                                                                                                                                                                                                                                                          & \KL{\updelta_x \Rker_{\gamma_1, \theta_1}}{\updelta_x \Rker_{\gamma_2, \theta_2}}
                                                                                                                                                                                                                                                                                          \leq d (1 - \gamma_1/\gamma_2)^2/2  \\ &+ \gamma_2^{-1} (\gamma_1 - \gamma_2)^2 \| \nabla_x U(\theta_1,x) \|^2/2 +\gamma_2 L_U^2 \| \theta_1 - \theta_2 \|^2 V^{2a}(x) /2 \\
                                                                                                                                                                                                                &\leq \parentheseDeux{d \gamma_2^{-1}(1 - \gamma_2/\gamma_1)/4  + \gamma_2^{-1} (\gamma_1 - \gamma_2)^2 C_{\Theta} /2 +\gamma_2L_U^2 \| \theta_1 - \theta_2 \|^2/2}V^{2a}(x) \eqsp .
\end{align}                                                                                                                                                              This result substituted in  \eqref{eq:pinsker} completes the proof with the fact that  for any $a,b \in \rset_+$, $(a+b)^{1/2} \leq a^{1/2} + b^{1/2}$.
  \end{proof}

  \begin{proof}[Proof of  \Cref{thm:salem_langevin_cv_fixed}]
       \Cref{assum:glip} and \Cref{assum:curvature} ensure a uniform drift condition on $\Rker_{\gamma, \theta}$, see \Cref{propo:lyap_condition} . Note that the Lyapunov function $V$ defined by \Cref{propo:lyap_condition} satisfies  $\sup_{x \in \rset^d} (1+\| x \|^4)/ V(x) < +\infty$.
   \Cref{assum:condition_kernel_fix} 
   is then a direct consequence of \Cref{lem:v_kernel_error}
  \end{proof}


\end{document}